\numberwithin{equation}{section}
\definecolor{mypink1}{rgb}{0.858, 0.188, 0.478}
\definecolor{mypink2}{RGB}{219, 48, 122}
\definecolor{mypink3}{cmyk}{0, 0.7808, 0.4429, 0.1412}
\definecolor{mygray}{gray}{0.6}
\theoremstyle{plain}
\newtheorem{theorem}{Theorem}
\newtheorem{lemma}{Lemma}
\newtheorem{proposition}{Proposition}
\theoremstyle{definition}
\newtheorem{definition}{Definition}
\newtheorem{example}{Example}
\newtheorem{assumption}{Assumption}
\newtheorem*{assumption-non}{Assumption}
\newenvironment{customthm}[1]
{\innercustomthm}
{\endinnercustomthm}
\theoremstyle{remark}
\newtheorem{remark}{Remark}
\def\A{{\mathcal A}}
\def\B{{\mathcal B}}
\def\E{{\mathbb E}}
\def\EE{{\mathcal{E}}}
\def\N{{\mathbb N}}
\def\P{{\mathbb P}}
\def\R{{\mathbb R}}
\def\S{{\mathcal S}}
\def\V{{\mathcal V}}
\def\K{{\bm K}_n}
\def\Q{{\bm Q}_n}
\def\HH{{\mathcal H}}
\def\VV{{\mathcal V}}
\def\EE{{\mathcal E}}
\def\T{{e}}
\def\e{{\varepsilon}}
\def\t{{\bm{\theta}}}
\def\bx{{\bm{x}}}
\def\z{{\bm{z}}}
\def\bu{{\bm{u}}}
\def\u{{\bm{u}}}
\def\v{{\bm{v}}}
\def\uu{{\widehat{\bm{u}}}}
\def\vv{{\widehat{\bm{v}}}}
\def\tt{{\widehat{\bm{\theta}}}}
\def\deg{{\mathrm{deg}}}
\def\br{{\mathrm{br}}}
\def\curl{{\mathsf{curl}}}
\def\diam{{\mathrm{diam}}}
\def\PL{{\mathrm{PL}}}
\def\PLDC{{\mathrm{PlusDC}}}
\def\range{{\mathrm{range}}}
\def\AA{{{\bm W}_n}}
\def\M{{N_{\br}}}
\def\mM{{M}}
\def\muu{{m}}
\def\ttt{{t}} 
\def\HHn{{\mathcal H}}
\def\VVn{{\mathcal V}}
\def\EEn{{\mathcal E}}
\def\tauf{\tau_1}
\def\rhof{\tau_2} 
\def\lambdaf{\rho} 
\def\mathW{\Phi}
\newcommand{\ER}{Erd\H{o}s--R\'{e}nyi }
\renewcommand{\bar}{\overline}
\newcommand{\fgamma}{\iota}
\newcommand{\fGamma}{\mathcal{I}}
\newcommand{\checkedd}[1]{\textcolor{black}{{#1}}}
\renewcommand{\d}[1]{\ensuremath{\operatorname{d}\!{#1}}}
\DeclareMathOperator*{\argmax}{arg\,max}
\newcommand{\CR}[1]{\textcolor{black}{{#1}}}
\def\spacingset#1{\renewcommand{\baselinestretch}%
	{#1}\small\normalsize} \spacingset{1}
\begin{document}

\spacingset{1.17}
\title{\bf Statistical ranking with dynamic covariates\footnotetext{Authorships are ordered alphabetically. Email: ruijian.han@polyu.edu.hk; yiming.xu@uky.edu.}}
\date{}
\author[1]{Pinjun Dong}
\author[2]{Ruijian Han}
\author[2]{Binyan Jiang}
\author[3]{Yiming Xu}
\affil[1]{\small Department of Mathematics, Zhejiang University, Hangzhou, China}
\affil[2]{\small Department of Data Science and Artificial Intelligence, 
	
	\small	 The Hong Kong Polytechnic University, Hong Kong, China}
\affil[3]{\small Department of Mathematics, University of Kentucky, Lexington, USA}

\maketitle

\begin{abstract}
	We introduce a general covariate-assisted statistical ranking model within the Plackett--Luce framework. Unlike previous studies focusing on individual effects with fixed covariates, our model allows covariates to vary across comparisons. This added flexibility enhances model fitting yet brings significant challenges in analysis. This paper addresses these challenges in the context of maximum likelihood estimation (MLE). We first provide sufficient and necessary conditions for both model identifiability and the unique existence of the MLE. Then, we develop an efficient alternating maximization algorithm to compute the MLE. Under suitable assumptions on the design of comparison graphs and covariates, we establish a uniform consistency result for the MLE, with convergence rates determined by the asymptotic graph connectivity. We also construct random designs where the proposed assumptions hold almost surely. Numerical studies are conducted to support our findings and demonstrate the model's application to real-world datasets, including horse racing and tennis competitions. \\ 
	
	\emph{Keywords}: dynamic ranking, hypergraphs,  maximum likelihood estimation, model identifiability, \checkedd{multiple comparisons}, \checkedd{Plackett--Luce model}, uniform consistency.
\end{abstract}

\tableofcontents

\section{Introduction}
Ranking data generated from comparisons among individuals is ubiquitous. A prevalent approach to analyzing such data involves a latent score model, whose foundation dates back to the early works nearly a century ago \citep{thurstone1927method, MR1545015}. Specifically, \cite{MR1545015} proposed to model the choice probabilities of individuals proportional to their latent scores, which was later shown to be the unique choice model satisfying the axiom of \emph{independence of irrelevant alternatives}.
This axiom is also known as Luce's choice axiom \citep{MR0108411} and has led to the development of the more general Plackett--Luce (PL) model with multiple comparison outcomes \citep{plackett1975analysis}.

In the PL model, the ranking outcome $j_1\succ \cdots \succ j_m$ among objects $j_1, \ldots, j_m$ on an edge $\T=\{j_1, \ldots, j_m\}$ is assumed to be observed with probability
\begin{align}
	\P_{\u^*}\left\{j_1\succ \cdots \succ j_m \mid \T\right\} = \prod_{k=1}^m\frac{\exp\{u^*_{j_k}\}}{\sum_{t = k}^m \exp\{u^*_{j_t}\}},\label{intro:1}
\end{align} 
where $u_{j}^*$ and $\exp\{u_j^*\}$ are referred to as the \emph{utility} and \emph{latent score} of the $j$th object, respectively; see \eqref{eq:PL} for a precise definition. When restricted to pairwise edges, the PL model reduces to the celebrated Bradley--Terry (BT) model \citep{MR0070925}, which has been extensively studied in the literature \citep{chen2021spectral}.

Many works on the PL model focus on the primitive scenario where latent scores are purely determined by individual utilities. Under such circumstances, asymptotic theories have been established for common estimators of the utility vector, including the likelihood-based estimators \citep{fan2022ranking, han2023unified} and estimators based on spectral methods \citep{jang2018top, fan2023spectral}. This line of work generalizes the asymptotic theory in the BT model, where similar results \citep{MR1724040, 
	MR3953449, han2020asymptotic, han2022general, gao2023uncertainty} or refined variants \citep{chen2015spectral, MR3613103, MR3504618,  chen2022optimal, chen2022partial, han2024statistical} have been established. Meanwhile, efficient numerical algorithms have been developed to facilitate scalable estimation and inference in the PL model for relatively large datasets \citep{MR2051012, maystre2015fast, agarwal2018accelerated, qu2023sinkhorn}.

However, a gap exists between theory and practice. Before applying the asymptotic theory, a substantial amount of comparison data needs to be gathered first. The gathered data typically spans a wide time range, as in sports analytics \citep{massey1997statistical, glickman1999rating}, or is collected under varying contexts. The latter scenario is common in recent applications of the PL model in fine-tuning large language models where different attributes are employed when presenting the same items to users for data acquisition \citep{ouyang2022training, rafailov2024direct}. Given the nature of the data obtained, it is crucial to incorporate contextual information or additional dynamic attributes into latent score modeling for realistic prediction. 

Motivated by such considerations, we propose to use a simple linear function to account for the dynamic component of the log scores not explained by individual utility based on the existing PL model. This results in modification of \eqref{intro:1} as follows:
\begin{align}
	\P_{\t^*}\left\{j_1\succ \cdots \succ j_m \mid \T\right\} = \prod_{k=1}^m\frac{\exp\{u^*_{j_k} + X^\top_{\T, j_k}\v^*\}}{\sum_{t = k}^m \exp\{u^*_{j_t}+X^\top_{\T, j_t}\v^*\}},\label{intro:2}
\end{align} 
where $X_{\T, j_k}\in\R^d$ is a $d$-dimensional covariate vector associated with object $j_k$ on $\T$, and $\v^* \in\R^d$ is the model coefficient vector. The covariates here are dynamic through the index $\T$ so that the scores of objects may vary across comparisons. See Figure~\ref{fig:dc} for an example of applying \eqref{intro:2} to obtain the dynamic scores of some top tennis players over the last 50 years; population age effects are taken into account so that the players exhibit different peaks during different periods. More details on Figure~\ref{fig:dc} will be given in Section~\ref{sec:atp}.

\begin{figure}[htbp]
	\centering		
	\includegraphics[width= 1\linewidth,  trim={20cm 3cm 20cm 3cm}, clip]{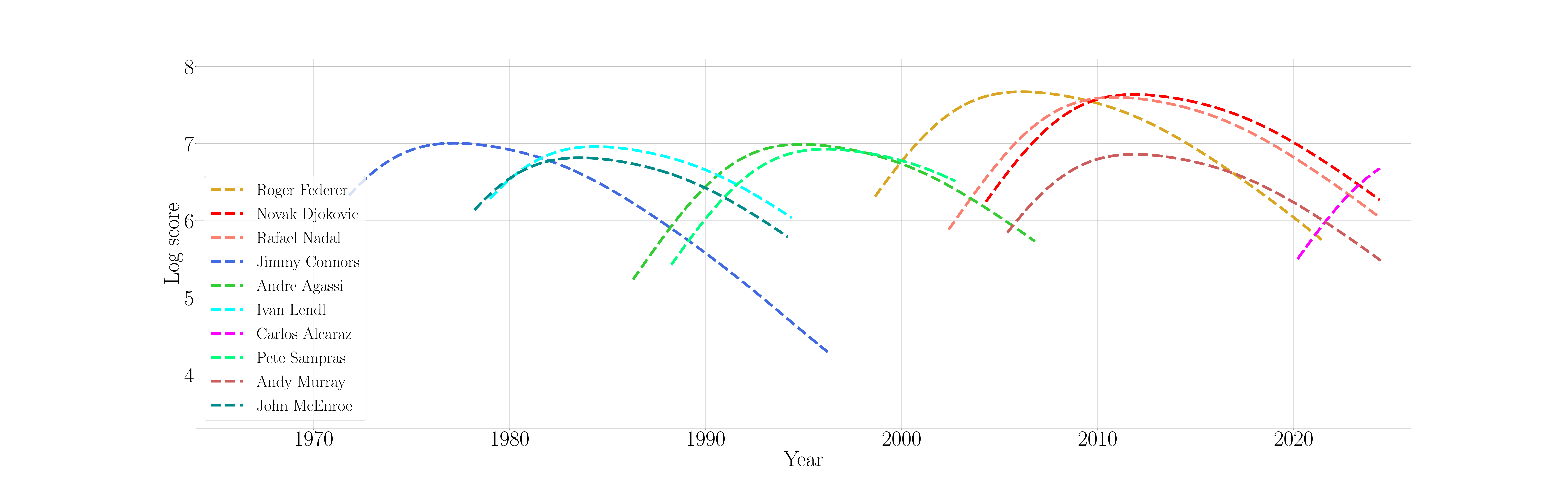}
	\caption{Log score dynamics of the top ten players ranked by model \eqref{intro:2}. For each player, we plot their log scores as a function of time starting from his career and ending at his retirement.}\label{fig:dc}
\end{figure}

\subsection{Related work}

Incorporating covariates into statistical ranking models to improve prediction is not new and has been considered in various settings. In the survey article \citep{MR3012434} and subsequent works \citep{guo2018experimental, schafer2018dyad, zhao2020learning}, the log scores of each individual in the PL model are directly modeled as a linear function of the respective covariates with zero intercepts, based on the premise that the covariates fully explain the scores. In this case, the utility vector $\u^* = \bm 0$, and only $\v^*$ needs to be estimated. The resulting model is similar to a multinomial logistic regression with model coefficients in a fixed dimension. 

Recent research has also explored models incorporating covariates while retaining the individual effects. \cite{li2022bayesian} employed fixed covariates to create informative priors for the utility vector in the Thurstone--Mosteller model \citep{thurstone1927method,mosteller2006remarks}. Meanwhile, they acknowledged the overparameterization issue of using fixed covariates. \cite{fan2022uncertainty} considered a similar setup in the BT model and resolved the issue by imposing additional orthogonal constraints. Under this setup, a systematic study of estimation and inference using MLE was conducted. A follow-up work by \cite{fan2024uncertainty} bridges the gap between the individual-effect and the covariates-only approaches by adding a sparsity constraint on the utility vector. Despite the comprehensive theory developed for these models, the overparameterization issue persists from the model-fitting perspective (Proposition \ref{prop:equiv}). \checkedd{This motivates the consideration of dynamic covariates in the PL model as pursued in this paper.}

Aside from the PL model and its variants, similar ideas of adding covariates also appeared in other network models, such as the $\beta$-type models \citep{graham2017econometric, yan2018statistical} and the cardinal model \citep{singh2024graphical}, among others. Unlike the PL model, the graphs in \cite{graham2017econometric, yan2018statistical} are dense and do not involve model identifiability issues. The model considered in \cite{singh2024graphical} is a cardinal pairwise comparison model, which is mathematically equivalent to a linear model on graphs and admits an explicit solution for the MLE. The theoretical results on models with covariates in \cite{singh2024graphical} primarily focus on the finite-object setting, which differs from ours.

\subsection{Contribution}

The prevailing body of the literature in the previous section considers scenarios where ranking outcomes are either fully explained by covariates or modeled as linear functions of fixed covariates with individual intercepts. Statistical ranking models that simultaneously integrate individual intercepts with dynamic covariates have not been explored. On the other hand, most theoretical work on covariate-assisted ranking models focuses on the pairwise setting. This is somewhat unsatisfactory as multiple comparison data contain richer information despite being challenging to analyze. For example, there is less of a consensus on the choice of operator for hypergraph expansion \citep{friedman1995second, feng1996spectra, lu2013high}. 

Given these reasons, we consider an alternative model defined by \eqref{intro:2}, which jointly accounts for latent individual scores and dynamic covariates in a multiple comparison setting. The intertwinement between hypergraph structures and edge-dependent covariates poses considerable challenges in establishing a theoretical statistical foundation for the model. This paper addresses these challenges, with our contributions summarized as follows:

\begin{itemize}
	\item[{(i)}] We define the model's identifiability and provide a sufficient and necessary condition for it. We also give a sufficient condition based on the non-existence of curl-free flows of covariates, which provides additional interpretation of the results. The corresponding results are given in Propositions~\ref{prop:1} and \ref{prop:curl}. 
	\item[(ii)] For parameter estimation, we consider the MLE approach. We provide a sufficient and necessary condition for the unique existence of the joint MLE $(\uu, \vv)$ for the utility vector $\u^*$ and the linear effect $\v^*$ (Theorem~\ref{thm:existence}). Along the way, we justify the necessity of considering only dynamic covariates in \eqref{intro:2} from a model-fitting perspective (Proposition~\ref{prop:equiv}). Furthermore, we develop an efficient alternating maximization algorithm (Algorithm~\ref{alg:1}) to compute the MLE.
	\item[(iii)] We establish the uniform consistency of the joint MLE for both $\u^*$ and $\v^*$ under appropriate conditions on the graph topology and covariates, with respective convergence rates characterized by the asymptotic connectivity of the graph sequence (Theorem~\ref{thm:main}). \checkedd{Additionally, we provide random hypergraph models and covariate designs for which uniform consistency holds under near-optimal sparsity conditions (Theorem~\ref{thm:randdesign}).}  
\end{itemize}

The crux of our technical contributions is a two-stage analysis, combining ideas of graph-based chaining, empirical processes, and error boosting. \checkedd{In particular, our use of graph-based chaining differs from \cite{han2023unified} and, on its own, is insufficient to establish the uniform consistency result.} We shall also comment on the specific challenges that distinguish \eqref{intro:2} from the fixed-covariate setting. One of our results (Proposition~\ref{prop:equiv}) states that when the covariates are edge-independent, the joint MLE can be obtained by first finding the MLE of a standard PL model and then taking appropriate decomposition. However, such decomposition no longer holds when the covariates become dynamic. \checkedd{Addressing this challenge represents a fundamental contribution of our work that differentiates it from the existing literature.}
\subsection{Organization}

The rest of the paper is organized as follows. In Section~\ref{sec:2}, we introduce the PL model with dynamic covariates, including both the statistical model and the comparison graph design. In Section~\ref{sec:ident}, we define the model's identifiability and discuss conditions under which the proposed model is identifiable. In Section~\ref{sec:est}, we consider a joint MLE approach to estimating the model parameters and provide a sufficient and necessary condition for the unique existence of the MLE; we also show that the static part of the dynamics covariates can be subsumed in the utility vector via reparametrization. An alternating maximization algorithm is also given to find the joint MLE. In Section~\ref{sec:cons}, we establish a uniform consistency result for the MLE with convergence rates characterized using comparison graph topology. \checkedd{In Section~\ref{sec:special case}, we provide specific random hypergraph models and covariates designs where the results in Section~\ref{sec:cons} hold under minimal conditions}. In Section \ref{sec:atp}, we apply the proposed model to analyze an ATP tennis dataset. \checkedd{Additional numerical results on synthetic and horse-racing data and technical proofs are deferred to the supplementary file. The source code to reproduce our numerical results is available at \href{https://github.com/PinjunD/Statistical-ranking-with-dynamic-covariate}{https://github.com/PinjunD/Statistical-ranking-with-dynamic-covariate}.}

\subsection{Notation}

Let $\mathbb N$ and $\mathbb R$ denote the set of natural and real numbers, respectively. For $n\in\mathbb N$, let $[n] = \{1, \ldots, n\}$. For $\T\subseteq [n]$, the power set of $\T$ is denoted by $\mathscr P(\T)$. A permutation $\pi$ on $\T$ is a bijection from $\T$ to itself, and the set of permutations on $\T$ is denoted by $\mathcal S(\T)$. If $|\T| = m$, then $\pi$ can be represented as $[\pi(1), \ldots, \pi(m)]$, where $\pi(i)$ is the element in $\T$ with rank $i$ under $\pi$. Sometimes it is more convenient to work with rank itself, for which we use $r(j)$ to denote the rank of object $j$ in $\pi$. By definition, $r(\pi(j)) = j$ for all $j\in [m]$. In the rest of the paper, we reserve the notation $\u^*\in\R^n$ and $\v^*\in\R^d$ for the true utility vector and the model coefficient vector in the PL model with dynamic covariates. 

We represent a comparison graph as a hypergraph $\HH(\VV, \EE)$, where $\VV=[n]$ is the vertex set and $\EE = \{\T_i\}_{i\in [N]}\subseteq \mathscr{P}(\VV)$ is the edge set. Here, the notation $N$ denotes the number of edges and may depend on $n$. The degree of a vertex $k$ is defined as the number of edges in $\EE$ containing $k$, that is, $\deg(k) = |\{\T_i\in \EE: k\in \T_i\}|$. Following convention, we denote a size-$m$ edge as an increasingly ordered $m$-tuple. We use ${\VV\choose m}= \{e\in\mathscr P(\VV): |e| = m\}=\{(j_1, \ldots, j_m): 1\leq j_1<\cdots < j_m\leq n\}$ to denote the set of hyperedges with fixed size $m$. For $\emptyset\neq U\subset \VV$, the boundary of $U$ is defined as $\partial U = \{e\in \EE: e\cap U\neq\emptyset, e\cap U^\complement\neq\emptyset\}$.

We use $O(\cdot)$ and $o(\cdot)$ for the Bachmann--Landau asymptotic notation.  We use $\lesssim$ and $\gtrsim$ to denote the asymptotic inequality relations, and $\asymp$ if both $\lesssim$ and $\gtrsim$ hold. We use $\mathbb I_{B}(x)$ to denote the indicator function on a set $B$, that is, $\mathbb I_{B}(x)=1$ if $x\in B$ and $\mathbb I_{B}(x)=0$ otherwise. We let $\bm{1}$ and $\bm{I}$ denote the all-ones vector and the identity matrix, respectively, with dimensions that are compatible and often clear from the context.

\section{Methodology}\label{sec:2}
We consider an extended PL model equipped with dynamic covariates. First, we introduce the PL model and then explain how dynamic covariates can be incorporated using a linear function. We also introduce the comparison graph designs used in our analysis, including a deterministic design setting and some common random graph models. In the subsequent analysis, we assume that $\HH(\VV, \EE)$ is simple for notational convenience; multi-edges can be considered similarly with additional indices, which we do not pursue here.

\subsection{PL model with dynamic covariates}
The PL model on $\HH(\VV, \EE)$ with utility vector $\u^*\in\R^n$ is a sequence of probability measures $\P_{\u^*}\{ \ \cdot \mid \T\}$ on $\mathcal S(\T)$, the set of permutations on $\T$, indexed by $\T\in {\mathscr P}(\VV)$ ($\T\neq \emptyset$). For $\T\in {\mathscr P}(\VV)$ with $|\T|=m$ and $\pi\in\mathcal S(\T)$, 
\begin{align}
	\P_{\u^*}\left\{\pi = [\pi(1), \ldots, \pi(m)]\mid \T\right\} = \prod_{j\in [m]}\frac{\exp \big\{u^*_{\pi(j)}\big\}}{\sum_{t=j}^m\exp\big\{u^*_{\pi(t)}\big\}}.\label{eq:PL}
\end{align}
One may alternatively associate $\pi$ with the totally ordered comparison outcome $\pi(1)\succ\cdots\succ\pi(m)$. 
A crucial property of the measures in \eqref{eq:PL} is that they satisfy Luce's choice axiom. In particular, if $\{i_1, \ldots, i_k\}\subset \T_1\cap \T_2$, then $\P_{\u^*}\{i_1\succ\cdots \succ i_k\mid \T_1\} = \P_{\u^*}\{i_1\succ\cdots \succ i_k\mid \T_2\}$ \citep[Section 5]{MR2051012}. Since \eqref{eq:PL} is invariant when $\u^*$ is shifted by a common constant, an additional constraint is needed to ensure model identifiability. A common choice is $\bm 1^\top\u^* = 0$, and more general constraints of type $\bm a^\top\u^* = b$ can be used as long as $\bm a^\top\bm 1\neq 0$.  

To incorporate dynamic covariates in the PL model, we use a linear function to account for the influence of log scores affected by the covariates. Let $\bm v^*\in\R^d$ be a coefficient vector, where $d$ is assumed fixed and does not grow with $n$. For any $\T\in \EE$ and object $j\in \T$, let $X_{\T, j}\in\R^d$ be the covariates vector associated with $j$ on $\T$.  The log score of $j$ in $\T$ is given by
\begin{align*}
	&s_{j}(\t^*; \T) = u^*_j + X_{\T, j}^\top\bm\v^*&\bm\t^* = [(\u^*)^\top, (\bm v^*)^\top]^\top\in\R^{n+d}.
\end{align*}
With slight abuse of notation, the equivalent probability mass function of \eqref{eq:PL} based on $s_{j}(\t^*; \T)$ is given by 
\begin{align}
	\P_{\t^*}\left\{\pi = [\pi(1), \ldots, \pi(m)]\mid \T\right\} = \prod_{j\in [m]}\frac{\exp{s_{\pi(j)}(\t^*; \T)}}{\sum_{t=j}^m\exp{s_{\pi(t)}(\t^*; \T)}}.\label{eq:PLDC}
\end{align}
We refer to \eqref{eq:PLDC} as the {P}lackett--{LU}ce model {S}upplemented with {D}ynamic {C}ovariates (\textbf{PlusDC}), which will be studied in detail in the rest of the paper. We will drop the subscript $\t^*$ and $\T$ in $\P_{\t^*}\{\ \cdot \mid \T\}$ when no ambiguity arises.

The $\PLDC$ model encompasses many existing models as special instances. When $\EE\subseteq{\VV\choose 2}$ and $X_{\T, j}$ are independent of $\T$, the $\PLDC$ model reduces to the CARE model \citep{fan2022uncertainty}. Additionally, it covers other comparison models with global parameters, such as the BT model with home-field advantage \citep{MR3887567}, which we explain in the next example.

\begin{example}[BT model with home-field advantage]\label{BT_home}
	In the BT model with home-field advantage, {the probability of object $j$ beating object $k$ is given by}
	\begin{align*}
		\P\{j\succ k\} = \begin{cases}
			\frac{\vartheta \exp\{u_j^*\}}{\vartheta \exp\{u_j^*\} + \exp\{u_k^*\}}&\quad \text{if $j$ is at home}\\
			\frac{\exp\{u_j^*\}}{ \exp\{u_j^*\} + \vartheta \exp\{u_k^*\}}&\quad \text{if $k$ is at home} 
		\end{cases}
	\end{align*}
	where $\vartheta>0$ measures the strength of the home-field advantage or disadvantage.  Reparametrizing $v = \log\vartheta$ and for $\T = \{j, k\}$,  defining $X_{\T, j} = 1$ if $j$ is at home on $\T$ and $X_{\T, j} = 0$ otherwise, we can write the above probability mass function in the same form as the $\PLDC$ model. 
\end{example}

The home-field advantage model has long been used in sports analytics, but its theoretical foundation remains unclear. The major difficulty arises from the interaction between the global and local parameters, which complicates the analysis. This also suggests that the generality offered by the $\PLDC$ model also brings considerable challenges.

\subsection{Comparison graphs}

\subsubsection{Deterministic graph designs}

In the asymptotic analysis of the $\PLDC$ model, we will work with a sequence of comparison graphs $\HH_n(\VV_n, \EE_n)$. We assume $\HH_n(\VV_n, \EE_n)$ to be deterministic and satisfy certain topological conditions rather than being drawn from a random graph model. This generality allows for a better understanding of the statistical estimation procedure considered in Section~\ref{sec:est}. \checkedd{We will see later in Section \ref{sec:special case} that several commonly used random graph models do produce sequence configurations with the desired properties almost surely.} Moreover, $\HH_n$ are allowed to contain edges of varying sizes but in an asymptotically bounded manner, which we formulate as the following assumption.  
\begin{assumption}[Bounded comparison size]\label{ass:edgesize}
	$\mM:=\sup_{n\in\N}\max_{\T\in \EE_n}|\T|<\infty$. 
\end{assumption}
We need the following definitions to characterize the topological conditions in our analysis.

\begin{definition}[Modified Cheeger constant]\label{def:ch}
	Given a hypergraph $\HH(\VV, \EE)$ with $\VV = [n]$, for non-empty set $U \subset [n]$, let 
	$h_\HH(U) = {|\partial U|}/{\min\{|U|, |U^\complement|\}}.$
	The modified Cheeger constant of $\HH$ is defined as $h_\HH = \min_{U \subset [n]}h_\HH(U)$. In particular, a hypergraph $\HH(\VV, \EE)$ is called connected if and only if $h_\HH>0$.
\end{definition}

\begin{definition}[Weakly admissible sequences]\label{def:ad}
	Given $0<\lambda<1$ and a connected hypergraph $\HH(\VV, \EE)$ with $\EE \neq\emptyset$, a strictly increasing sequence of vertices $\{A_j\}_{j\in [J]}$ is called \emph{$\lambda$-weakly admissible} if 
	\begin{align}
		&\frac{|\{e\in \partial A_{j}: e\cap (A_{j+1}\setminus A_j)\neq\emptyset\}|}{|\partial A_j|}\geq\lambda& 1\leq j <J. \label{was}
	\end{align}
	Denote the set of $\lambda$-weakly admissible sequences of $\HH(\VV, \EE)$ as $\A_\HH(\lambda)$. The diameter of $\A_\HH(\lambda)$ is defined as $\diam(\A_\HH(\lambda))=\max_{\{A_j\}_{j=1}^J\in\A_\HH(\lambda)}J$.
\end{definition}

For any $\lambda$-weakly admissible sequence $\{A_j\}_{j\in [J]}$, a constant proportion of elements in $\partial A_j$ intersect $A_{j+1}$ for each $j<J$. Therefore, $|A_j|$ resembles the growth of the size of the $(j-1)$th graph neighborhood of $A_1$ and thus generalizes the notion of diameter of $\HH$.  \checkedd{For additional intuition and examples, please refer to Section F.2 of the supplementary file.}

\begin{figure}[t]
	\centering 
	\includegraphics[width=0.4\linewidth, trim={0cm 0cm 0cm 0cm},clip]{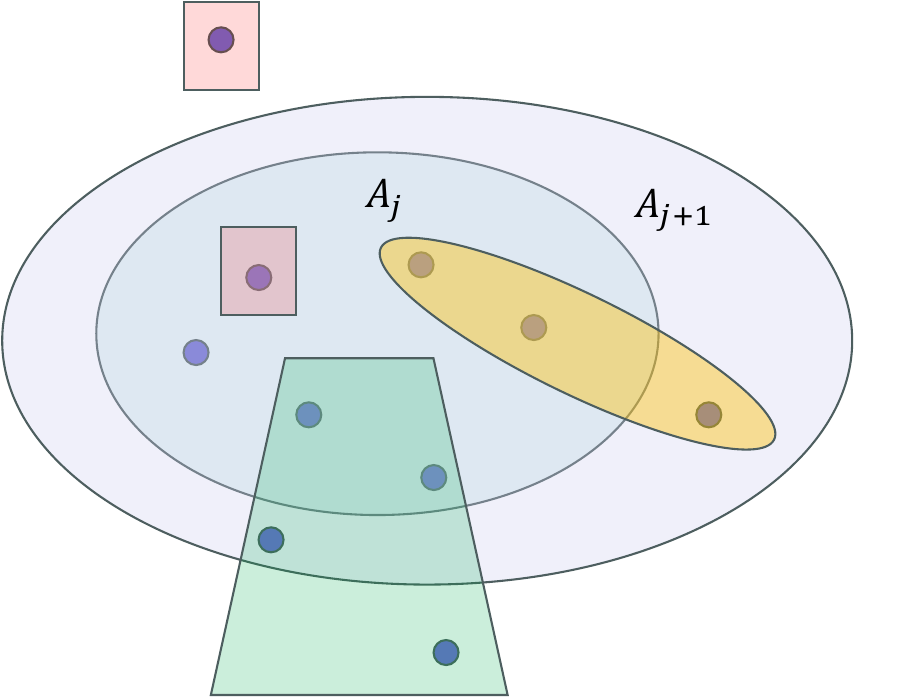}
	\caption{Illustration of different types of boundary edges in a nested sequence $\{A_j\}_{j \in [J]}$. The ellipsoid, the trapezoid, and the rectangle represent three different types of boundary edges of $A_j$: those fully contained in $A_{j+1}$, those intersecting both $A_{j+1}\setminus A_j$ and $A_{j+1}^\complement$, and those not intersecting $A_{j+1}\setminus A_j$, respectively.} \label{fig:ad}
\end{figure}

The definition of weakly admissible sequences generalizes the notion of admissible sequences used in the analysis of the PL model \citep{han2023unified}, which imposes a stronger condition in place of \eqref{was}\footnote{In their original formulation $\lambda = 1/2$ for simplicity.}: ${|\{e \in \partial A_{j}: e \subseteq A_{j+1}\}|} \geq \lambda{|\partial A_j|}.$ The two definitions conincide when $\EE\subseteq{\VV\choose 2}$ but differ in the hypergraph setting. For instance, in Figure~\ref{fig:ad}, while weakly admissible sequences require that the number of boundary edges of the rectangle type not exceed a certain threshold, admissible sequences require both the rectangle and trapezoid types. 

The topological conditions considered in Section~\ref{sec:cons} (when studying uniform consistency) aim to control the asymptotic growth rate of $\diam(\A_{\HH_n}(\lambda))$ for any fixed $\lambda$, which pertains to the asymptotic connectivity of the graph sequence $\HH_n$. Further details will be provided in the discussion in Section~\ref{sec:cons}.

\subsubsection{Random graph models}\label{sec:rg}

We introduce two random hypergraph models commonly used in the literature. The first is called the nonuniform random hypergraph model (NURHM). Let $\mM$ be the same integer in Assumption~\ref{ass:edgesize}. The edge $\EE$ in a NURHM model is defined as $\EE = \sqcup_{\muu=2}^{\mM}\EE^{(\muu)}, \EE^{(\muu)}\subseteq {\V\choose m},$
where $\EE^{(\muu)}$ are independent $\muu$-uniform random hypergraphs generated as follows.
Fixing $\muu$, $\mathbb I_{\{e\in \EE^{(\muu)}\}}$ are independent Bernoulli random variables with parameter $p_{e, n}^{(\muu)}$ for $e\in{\V\choose m}$. 
Denote the lower and upper bound of $\muu$-edge probabilities as 
$p_n^{(\muu)} = \min_{e\in{\V\choose m}}p_{e, n}^{(\muu)}$ and $q_n^{(\muu)} = \max_{e\in{\V\choose m}}p_{e, n}^{(\muu)},$
and the corresponding order of the expected number of edges as
\begin{align}
	&\xi_{n,-}:=\sum_{\muu=2}^\mM n^{\muu-1}p_n^{(\muu)} & \xi_{n,+}:=\sum_{\muu=2}^\mM n^{\muu-1}q_n^{(\muu)}.\label{myxis}
\end{align}
When $p_n^{(\muu)} = q_n^{(\muu)} = p_n^{(\mM)}\mathbb I_{\{\muu=\mM\}}$, NURHM reduces to the $\mM$-way \ER model \citep{MR0125031}. 

The second model is the hypergraph stochastic block model (HSBM) which produces cluster structures. For simplicity, we will focus on an $\mM$-uniform HSBM model with $L$ clusters partitions $\VV$ into $L$ subsets, $\VV = \sqcup_{\ell\in [L]} \VV_\ell$. Here, $L$ is an absolute constant and edges within and across partitioned sets have different probabilities of occurrence: $\P\{e\in \EE\} = \omega_{n, \ell}\mathbb I_{\{e\in{\V\choose M}\}}$ if $e\subseteq \VV_\ell$ for $\ell \in [L]$ and $\P\{e\in \EE\} = \omega_{n, 0}\mathbb I_{\{e\in{\V\choose M}\}}$ otherwise. 
Similar to before, we define the minimal order of the expected number of edges in an HSBM as 
\begin{align}
	\zeta_{n,-}:=n^{\mM-1}\min_{0\leq \ell \leq L}\omega_{n, \ell}.\label{myzeta}
\end{align}
We do not introduce the notation for the upper bound as it is not needed in the subsequent analysis. 
HSBM can be used to simulate heterogeneous networks by allowing $\max_{\ell}\omega_{n, \ell}/\min_{\ell}\omega_{n, \ell}$ to diverge as $n\to\infty$. 

\section{Model identifiability}\label{sec:ident}

We discuss the model identifiability of the $\PLDC$ model. To be self-contained, we define model identifiability first. 

\begin{definition}[Model identifiability]
	Given a comparison graph $\HH(\VV, \EE)$ and covariates $\mathcal{X} =\{X_{\T, j}\}_{j\in \T, \T\in \EE}$, a $\PLDC$ model is said to be identifiable with respect to $(\HH, \mathcal X)$ if $\P_{\t_1}\{\ \cdot  \mid \T\} = \P_{\t_2}\{\ \cdot \mid \T\}$ for all $\T\in\EE$ implies $\t_1 = \t_2$, where $\P_\t\{\ \cdot \mid \T\}$ is defined in \eqref{eq:PLDC}. In other words, the mapping $\t\mapsto\{\P_\t\{\ \cdot \mid \T\}\}_{\T\in \EE}$ is an injection.  \end{definition}

This definition can be applied to the PL model by removing $\mathcal X$ and $\v$, and we will use it without further clarification.

The $\PLDC$ is not identifiable with respect to any $(\HH, \mathcal X)$ without imposing any constraints since $\P_{\t_1}\{\ \cdot \mid \T\} = \P_{\t_2}\{\ \cdot \mid \T\}$ if $\t_1$ and $\t_2$ differ by a multiple constant times $\bm 1$ on the $\u$ component. Similar issues also arise in the PL model. In the latter, additional constraints (e.g., $\bm 1^\top\u = 0$) are often imposed to ensure model identifiability.

In the $\PLDC$ model, however, it is not clear whether a single constraint like $\bm 1^\top\u = 0$ is sufficient to ensure model identifiability. For example, in the CARE model (viewed as a special instance of the $\PLDC$ model), $X_{\T, j}=X_j$ are independent of $\T$. In such situations, the model is not identifiable under $\bm 1^\top\u = 0$, and extra constraints are needed \cite[Proposition 1]{fan2022uncertainty}. We will elaborate on this in Section~\ref{sec:decomp}. Nevertheless, if we allow $\mathcal X$ to be edge-dependent, then a single constraint like $\bm 1^\top\u = 0$ is often sufficient for model identifiability subject to mild non-degeneracy conditions.

To state the result, we need a few more notations. Recall the ordered tuple representation of each edge $\T_i = (j_{i1}, \ldots, j_{im_i})$ with $1\leq j_{i1}<\cdots<j_{im_i}\leq n$. This fixes the orientation of the edge, which is independent of the comparison outcomes. Define
\begin{align}\label{myK}
	\K = \left[\Delta X_1^\top, \cdots, \Delta X_N^\top\right]^\top \in\R^{\M\times d}\quad\quad \M = \bigg(\sum_{i\in [N]}m_i\bigg)-N,
\end{align}
where
\begin{align}\label{DeltaX}
	\Delta X_{i}:=
	\left[X_{\T_i, j_{i2}}-X_{\T_i, j_{i1}},
	\cdots,
	X_{\T_i, j_{i(m_i-1)}}-X_{\T_i, j_{i1}}\right]^\top
	\in\R^{(m_i-1)\times d}, \quad  i\in [N].
\end{align}
\checkedd{Since the $\PLDC$ model depends only on the relative log-scores of the objects, $\Delta X_{i}$ is a minimal breaking (a spanning tree) that preserves the relative covariates information on the hyperedge $\T_i$. Such a breaking also induces a pairwise comparison graph $\HH_\br = (\VV, \EE_\br)$, where $\EE_\br = \{(j_{i1}, j_{it}), t\in [m_i], i\in [N]\}$ consists of all ordered pairwise tuples resulting from breaking $\T_i$ apart.} For ease of illustration, we show how an ordered tuple $(j_{i1}, j_{i2}, j_{i3}, j_{i4})$ is broken in Figure \ref{fig:br}. Note that $\HH_\br$ may not be simple and may contain multi-edges. Denote $\Q\in\R^{n\times \M}$ as the incidence matrix of $\HH_\br$ under the fixed orientation. 

\begin{figure}[t]
	\centering 
	\includegraphics[width=0.5\linewidth, trim={0cm 0cm 0cm 0cm},clip]{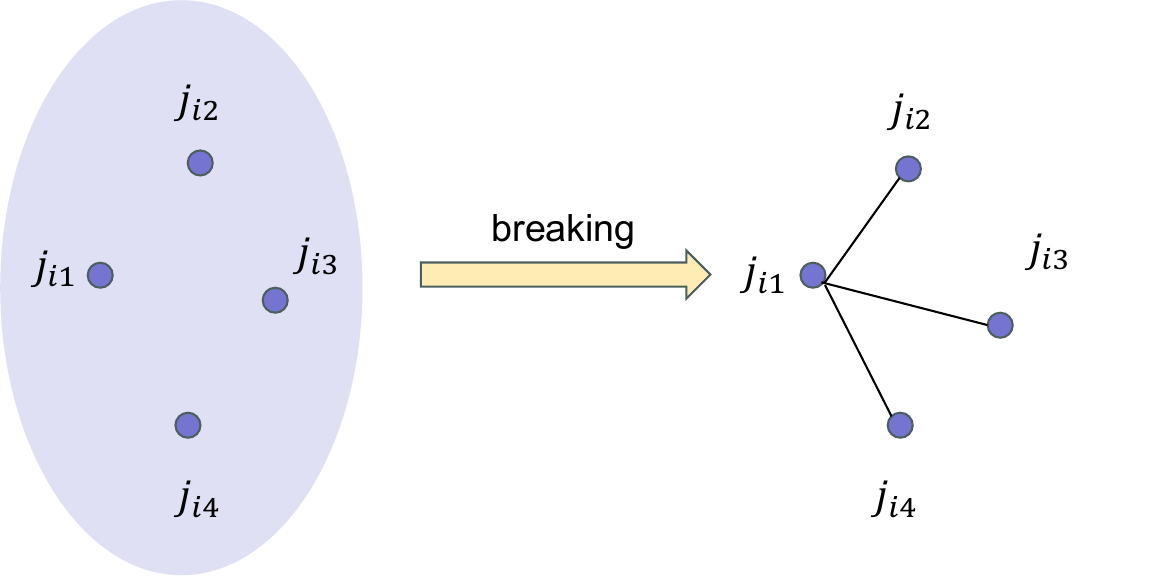}
	\caption{Illustration of edges in $\HH_\br$ obtained from breaking a hyperedge in $\HH$, where a hyperedge $(j_{i1}, j_{i2}, j_{i3}, j_{i4})$, $j_{i1}< j_{i2}< j_{i3}< j_{i4}$ is broken into three edges $(j_{i1}, j_{i2})$, $(j_{i1}, j_{i3})$, and $(j_{i1}, j_{i4})$ in $\HH_\br$.} \label{fig:br}
\end{figure}

\begin{proposition}[Identifiability of $\PLDC$]\label{prop:1}
	Let $\HH = (\VV, \EE)$ be a comparison graph with $\EE=\{\T_i\}_{i\in [N]}$.
	Assuming $\bm 1^\top\u=0$, the $\PLDC$ model is identifiable with respect to $(\HH, \mathcal X)$ if and only if the matrix $\AA=[\Q^\top, \K]\in\R^{\M\times (n+d)}$ satisfies $\rank(\AA) = n+d-1$. 
\end{proposition}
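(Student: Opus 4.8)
The plan is to reduce the identifiability question to a purely linear-algebraic statement about $\AA$ by exploiting the fact that the $\PLDC$ likelihood on each hyperedge depends on the log-scores only through their pairwise differences. First I would establish a \emph{single-edge reduction}: for a fixed edge $\T_i=(j_{i1},\dots,j_{im_i})$, the distribution $\P_{\t}\{\ \cdot\mid \T_i\}$ is in bijective correspondence with the vector of relative log-scores $d_i(\t):=\big(s_{j_{ik}}(\t;\T_i)-s_{j_{i1}}(\t;\T_i)\big)_{k=2}^{m_i}\in\R^{m_i-1}$. One direction is the shift-invariance noted after \eqref{eq:PL}: adding a common constant to all log-scores on $\T_i$ leaves \eqref{eq:PLDC} unchanged, so $\P_\t$ depends on $\t$ only through $d_i(\t)$. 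For the converse, the top-choice marginals satisfy $\P_\t\{j\ \text{ranked first}\mid \T_i\}=\exp\{s_j\}/\sum_{t\in\T_i}\exp\{s_t\}$, so $\log\P_\t\{j\ \text{first}\}-\log\P_\t\{j'\ \text{first}\}=s_j-s_{j'}$; since these marginals are determined by the full law, $d_i(\t)$ is recovered. Hence $\P_{\t_1}\{\ \cdot\mid\T_i\}=\P_{\t_2}\{\ \cdot\mid\T_i\}$ iff $d_i(\t_1)=d_i(\t_2)$.

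Next I would \emph{assemble the linear map}. Writing $s_{j_{ik}}(\t;\T_i)-s_{j_{i1}}(\t;\T_i)=(u_{j_{ik}}-u_{j_{i1}})+(X_{\T_i,j_{ik}}-X_{\T_i,j_{i1}})^\top\v$ and stacking $d_i(\t)$ over all $i\in[N]$, the covariate part reproduces $\K\v$ by \eqref{DeltaX}, while the utility part $u_{j_{ik}}-u_{j_{i1}}$ is exactly the entry of $\Q^\top\u$ associated with the oriented pair $(j_{i1},j_{ik})\in\EE_\br$. Thus the stacked difference vector equals $\AA\t=\Q^\top\u+\K\v$, and by the single-edge reduction, $\P_{\t_1}\{\ \cdot\mid\T\}=\P_{\t_2}\{\ \cdot\mid\T\}$ for all $\T\in\EE$ iff $\AA(\t_1-\t_2)=\bm 0$, i.e.\ $\t_1-\t_2\in\ker\AA$.

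It then remains to carry out a \emph{kernel/dimension count}. Since every column of the incidence matrix $\Q$ sums to zero, $\Q^\top\bm 1=\bm 0$, so $\bm e_0:=[\bm 1^\top,\bm 0^\top]^\top\in\ker\AA$ and in particular $\rank\AA\le n+d-1$. Identifiability under $\bm 1^\top\u=0$ means the map $\t\mapsto\AA\t$ is injective on the hyperplane $V=\{\t:\bm 1^\top\u=0\}$, i.e.\ $\ker\AA\cap V=\{\bm 0\}$. If $\rank\AA=n+d-1$, then $\dim\ker\AA=1$, so $\ker\AA=\operatorname{span}\{\bm e_0\}$; since $\bm 1^\top\bm 1=n\neq 0$ we have $\bm e_0\notin V$, giving a trivial intersection and hence identifiability. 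Conversely, if $\rank\AA\le n+d-2$, then $\dim\ker\AA\ge 2$, and since $V$ has codimension one, $\dim(\ker\AA\cap V)\ge\dim\ker\AA-1\ge 1$, producing a nonzero $\t_1-\t_2$ yielding equal laws, i.e.\ non-identifiability. This establishes the claimed equivalence.

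The only genuinely non-mechanical step is the single-edge reduction: recognizing that the hypergraph $\PLDC$ likelihood is faithful to the relative log-scores and to nothing more, which is what converts a statement about distributions on permutations into a condition on $\ker\AA$. Once this bijection is in hand, identifying the stacked differences with $\AA\t$ and the subsequent dimension count are routine; the only points requiring care are tracking the orientation convention so that the utility differences match $\Q^\top\u$, and handling the single constraint $\bm 1^\top\u=0$ correctly in the converse direction.
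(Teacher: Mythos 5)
Your proposal is correct and follows essentially the same route as the paper: reduce equality of the edge-wise laws to $\AA(\t_1-\t_2)=\bm 0$ and then perform a kernel/dimension count using $[\bm 1^\top,\bm 0^\top]^\top\in\ker(\AA)$. The only difference is that you spell out the single-edge reduction (recovering relative log-scores from the top-choice marginals), a step the paper simply asserts with ``which can be equivalently stated as,'' so your write-up is, if anything, more complete.
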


Since $\rank(\Q^\top)\leq n-1$ ($\Q^\top \bm 1 = \bm 0$), the condition $\rank(\AA) = n+d-1$ can be interpreted as requiring: (I) $\rank(\Q^\top) = n-1$; (II) $\rank(\K) = d$; and (III) $\range(\Q^\top)\cap\range(\K) = \{\bm 0\}$. One can verify that (I) holds if and only if $\HH$ (or $\HH_\br$) is connected; (II) holds if and only if $\K$ has full column rank. The last condition can be parsed through the lens of the non-existence of curl-free flows in $\range(\K)$.

\checkedd{To illustrate the geometric intuition behind the last point, we adopt a Hodge decomposition perspective in \cite{jiang2011statistical}; see also \cite{lim2020hodge} for a review of related concepts.} For convenience, we assume $\HH_\br$ is simple. Let $L_\wedge^2(\EE_\br)$ and $L_\wedge^2(\EE_\triangle)$ be the space of alternating functions defined on the edges and triangles in $\HH_\br$, respectively. 
The curl operator $\curl: L_\wedge^2(\EE_\br)\to L_\wedge^2(\EE_\triangle)$ is defined as 
\begin{align}
	\curl(f)(i, j, k) = f(i, j) + f(j, k) + f(k, i),\label{curl}
\end{align}  
where $(i, j, k)$ is a triangle in $\HH_\br$ (that is, $\{i, j\}, \{j, k\}, \{i, k\}\in \EE_\br$). 
Identifying $\range(\Q^\top)$ as a subspace of $L_\wedge^2(\EE_\br)$, $\range(\Q^\top)\subseteq\ker(\curl)$.
Therefore, a sufficient condition for $\range(\Q^\top)\cap\range(\K) = \{\bm 0\}$ is $\ker(\curl)\cap\range(\K) = \{\bm 0\}$, which can be checked by calculating net flows in $\range(\K)$ on the triangles in $\HH_\br$. This yields a sufficient condition for the identifiability of the $\PLDC$ model as follows. 

\begin{proposition}\label{prop:curl}
	Let $f_1, \ldots, f_d\in L^2_\wedge(\EE_\br)$ be the columns of $\K$. 
	If $\HH$ is connected and has $d$ triangles $\mathcal T_{\triangle}=\{(j_{\ell 1}, j_{\ell 2}, j_{\ell 3})\}_{\ell\in [d]}$ such that $\bm T_{\triangle} = (\curl(f_k)(j_{\ell 1}, j_{\ell 2}, j_{\ell 3}))_{\ell, k\in [d]}\in\R^{d\times d}$ has full rank, then $\rank(\AA) = n+d-1$. 
\end{proposition}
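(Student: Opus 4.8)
The plan is to reduce the rank identity to the three conditions spelled out in the paragraph following Proposition~\ref{prop:1}: (I) $\rank(\Q^\top) = n-1$; (II) $\rank(\K) = d$; and (III) $\range(\Q^\top)\cap\range(\K) = \{\bm 0\}$. Granting these, the dimension formula for a sum of subspaces yields
\[
\rank(\AA) = \dim\big(\range(\Q^\top) + \range(\K)\big) = (n-1) + d - 0 = n+d-1,
\]
which is exactly the assertion. Condition (I) is supplied by connectivity of $\HH$, using the equivalence (already recorded after Proposition~\ref{prop:1}) between $\rank(\Q^\top) = n-1$ and connectivity of $\HH$ (equivalently $\HH_\br$). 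So the real content is to extract (II) and (III) from the hypothesis that $\bm T_\triangle$ has full rank.

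The crux is to use the full-rank hypothesis on $\bm T_\triangle$ to establish both condition (II) and the statement $\ker(\curl)\cap\range(\K) = \{\bm 0\}$, the latter being stronger than (III). For the intersection statement, I would take $g\in\ker(\curl)\cap\range(\K)$ and write $g = \K\bm c = \sum_{k\in[d]} c_k f_k$ for some $\bm c\in\R^d$; by linearity of $\curl$, evaluating $\curl(g) = \sum_k c_k\,\curl(f_k)$ on the $d$ prescribed triangles turns the constraint $\curl(g)=0$ into the linear system $\bm T_\triangle\bm c = \bm 0$, and full rank of $\bm T_\triangle$ then forces $\bm c = \bm 0$, hence $g = \bm 0$. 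The same computation yields (II) directly: any $\bm c$ with $\K\bm c = \bm 0$ satisfies $\curl(\K\bm c) = 0$ on every triangle, so $\bm T_\triangle\bm c = \bm 0$, whence $\bm c = \bm 0$ and $\K$ has full column rank.

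To finish, I would invoke the containment $\range(\Q^\top)\subseteq\ker(\curl)$ noted before the proposition, i.e.\ gradient flows are curl-free because $\curl(\Q^\top\bm u)(i,j,k) = (u_j-u_i)+(u_k-u_j)+(u_i-u_k) = 0$. Intersecting with $\range(\K)$ gives $\range(\Q^\top)\cap\range(\K)\subseteq\ker(\curl)\cap\range(\K) = \{\bm 0\}$, which is precisely (III); combining (I)--(III) in the displayed dimension formula closes the argument. I do not anticipate a serious obstacle, as the Hodge-theoretic scaffolding is already assembled in the surrounding text; the only points needing care are bookkeeping ones, namely fixing the edge orientations of $\HH_\br$ consistently so that each $\curl(f_k)$ is well-defined on the chosen triangles (using the simplicity of $\HH_\br$ assumed in the preceding discussion), and confirming that the prescribed triples are genuine triangles of $\HH_\br$ so that the entries of $\bm T_\triangle$ are meaningful.
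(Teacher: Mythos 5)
Your proposal is correct and follows essentially the same route as the paper: both arguments reduce the claim to conditions (I)--(III), obtain (II) by observing that $\bm T_\triangle$ is a linear image of $\K$ (so $\ker(\K)\subseteq\ker(\bm T_\triangle)=\{\bm 0\}$), and obtain (III) by evaluating $\curl(\K\bm\beta)$ on the prescribed triangles to get $\bm T_\triangle\bm\beta$ and invoking $\range(\Q^\top)\subseteq\ker(\curl)$. The only cosmetic difference is that the paper phrases (II) via an explicit selection matrix $\bm B\in\{\pm1,0\}^{d\times N}$ with $\bm B\K=\bm T_\triangle$, whereas you argue directly with kernels; the substance is identical.
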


\begin{remark}\label{rem:curl}
	\checkedd{Proposition~\ref{prop:curl} offers additional insights into how dynamic covariates contribute to model identifiability. Let $d=1$ and consider a complete pairwise comparison scenario of three objects $1, 2$, and $3$. In this case, $\bm T_{\triangle} = (X_{\{1,2\}, 2}-X_{\{1,2\}, 1}) + (X_{\{2,3\}, 3}-X_{\{2,3\}, 2}) + (X_{\{1,3\}, 1}-X_{\{1,3\}, 3})$, which has full rank if and only if it is nonzero. This holds when the covariates $X_{\{i,j\}, i}$ are sufficiently random. For instance, when $X_{\{i,j\}, i}$ exhibit cyclic behavior, such as $X_{\{1,2\}, 2}>X_{\{1,2\}, 1}, X_{\{2,3\}, 3}>X_{\{2,3\}, 2}$, and $X_{\{1,3\}, 1}>X_{\{1,3\}, 3}$ (i.e., the relative covariate information both helps and hinders in the comparisons an object participates, which differs from the cyclic comparison outcomes \citep{jiang2011statistical, singh2024analysis}). Such fluctuations prevent the covariates' effects from being fully absorbed into the utility parameters. On the other hand, for static covariates, $\bm T_{\triangle}=0$.}
\end{remark} 

\color{black}

\section{Parameter estimation}\label{sec:est}

We now consider estimating the model parameter $\t^*$ using the MLE. We will provide a sufficient and necessary condition for the unique existence of the MLE in the $\PLDC$ model. Moreover, when the covariates in the $\PLDC$ model are edge-independent, we show that the computation of the MLE is equivalent to that in the original PL model plus an additional step of decomposition. This justifies the necessity to incorporate the dynamic part of the covariates in order to improve model prediction. 

\subsection{Maximum likelihood estimation}
{Given a comparison graph $\HHn(\VVn, \EEn)$, edge-dependent covariates $\mathcal{X} =\{X_{\T, j}\}_{j\in \T, \T\in \EEn}$, and independent comparison outcomes $\{\pi_i\}_{i\in [N]}$ with $\pi_i\in\mathcal S(\T_i)$, $|\T_i| = m_i$, the log-likelihood $l$ at $\t = (\u^\top, \v^\top)^\top$ can be written as
\begin{align}
	&l(\t) = \sum_{i\in [N]}\sum_{j\in [m_i]}\left[s_{i j}-\log(\sum_{t=j}^{m_i}\exp{s_{it}})\right] \quad &s_{ij}=s_{\pi_i(j)}(\t; \T_i) = u_{\pi_i(j)} + X_{\T_i, \pi_i(j)}^\top\bm\v,\label{likelihood}
\end{align}
where $\pi_i(j)$ denotes the object in $\T_i$ with rank $j$ in the comparison outcome $\pi_i$. 
Under the model identifiability constraint $\bm 1^\top\u = 0$, we define the MLE $\tt$ for $\bm\t^*$ as 
\begin{align}
	\tt = (\widehat{\bm u}^\top, \widehat{\bm v}^\top)^\top = \argmax_{\t = (\u^\top, \v^\top)^\top: \bm 1^\top\bm u = 0}l(\t).\label{def:mle}
\end{align}
Although $\bm 1^\top\bm u = 0$ alone ensures model identifiability at the population level, it is unclear whether $\tt$ exists. We next provide a sufficient and necessary condition under which the MLE defined in \eqref{def:mle} uniquely exists.  
\begin{customthm}{2$^\dagger$}\label{ass:ue}
	Given $\HHn = (\VVn, \EEn)$, $\mathcal X = \{X_{\T, j}\}_{j\in \T, \T\in\EEn}$, and the comparison outcomes $\{\pi_\T\}_{\T\in\EEn}$, for any nonzero $\t=(\u^\top,\v^\top)^\top \in\R^{n+d}$ with $\bm{1}^\top \u = 0$, there exist objects $k_1, k_2\in\V$ and $\T\in\EEn$ such that $k_1\succ k_2$ on $\T$ and $u_{k_1}+X^\top_{\T,k_1}\v<u_{k_2}+X^\top_{\T,k_2}\v$. 
\end{customthm}

Assumption \ref{ass:ue} may appear exotic at first sight. Without $\v$, Assumption~\ref{ass:ue} is equivalent to the sufficient and necessary condition for the unique existence of the MLE in the PL model \citep{MR0097876, MR2051012}. The additional requirement on $\v$ is weaker than \cite[Assumption 2]{MR2051012} when specialized to the home-field advantage model and holds for general covariate types and dimensions. \checkedd{To the best of our knowledge, Assumption \ref{ass:ue} is the first necessary and sufficient condition to ensure the unique existence of the MLE in the PL model with dynamic covariates. In other covariate-assisted pairwise cardinal models \citep{singh2024graphical}, the MLE coincides with the least-squares estimator, which has a closed-form solution, so only uniqueness needs to be verified to ensure the estimator's well-posedness.} 
	
\begin{theorem}[Unique existence of the MLE]\label{thm:existence}
	The MLE $\tt$ defined in \eqref{def:mle} uniquely exists if and only if Assumption~\ref{ass:ue} holds. In particular, under Assumption~\ref{ass:ue}, the log-likelihood function $l(\t)$ is strictly concave on the constraint set $\bm 1^\top\u = 0$. 
\end{theorem}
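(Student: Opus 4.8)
The plan is to use the concavity of $l$ and to recast the unique existence of its maximizer as a statement about the recession (asymptotic slope) of $l$ along every admissible direction, which will be controlled exactly by Assumption~\ref{ass:ue}. Throughout I write a direction as $\bm\delta=(\bm a^\top,\bm b^\top)^\top$ (playing the role of $\t=(\u^\top,\v^\top)^\top$ in the statement of Assumption~\ref{ass:ue}) with $\bm a\in\R^n$, $\bm b\in\R^d$, $\bm 1^\top\bm a=0$, and I set $\Delta_\T(k):=a_k+X_{\T,k}^\top\bm b$ for an object $k$ on edge $\T$, together with $\Delta_{it}:=\Delta_{\T_i}(\pi_i(t))$.

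First I would record the basic structure. Writing $l(\t)=\sum_{i\in[N]}\sum_{j\in[m_i]}g_{ij}(\t)$ with $g_{ij}(\t)=s_{ij}-\log\sum_{t=j}^{m_i}\exp(s_{it})$, each $g_{ij}$ is an affine function minus a log-sum-exp, hence concave, so $l$ is concave on $\{\bm 1^\top\u=0\}$. A direct computation gives $\nabla^2 g_{ij}=-J_{ij}^\top\Sigma_{ij}J_{ij}$, where $J_{ij}$ is the constant Jacobian of the linear map $\t\mapsto(s_{it})_{t\ge j}$ and $\Sigma_{ij}=\mathrm{diag}(p)-pp^\top$ is the covariance of the categorical law on $\{\pi_i(t):t\ge j\}$ with weights $\propto\exp(s_{it})$. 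Since these weights are strictly positive, $\Sigma_{ij}\succeq 0$ with kernel exactly the all-ones direction, and this holds independently of $\t$. Hence the null space of $\nabla^2 l$ is the point-independent subspace $\mathcal N$ of directions $\bm\delta$ for which $\Delta_{it}$ is constant in $t$ within every comparison $i$; moreover, by shift-invariance of the softmax, $l$ is constant along any direction in $\mathcal N$.

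Next I would analyze the ray asymptotics. For any base point $\t_0$ and nonzero admissible $\bm\delta$, a log-sum-exp estimate gives $s^{-1}l(\t_0+s\bm\delta)\to\phi(\bm\delta):=\sum_{i,j}\big(\Delta_{ij}-\max_{t\ge j}\Delta_{it}\big)\le 0$ as $s\to\infty$; by concavity, $l(\t_0+s\bm\delta)\to-\infty$ iff $\phi(\bm\delta)<0$, and $\phi(\bm\delta)=0$ forces $s\mapsto l(\t_0+s\bm\delta)$ to be nondecreasing. The crux is then the equivalence: Assumption~\ref{ass:ue} holds iff $\phi(\bm\delta)<0$ for every nonzero admissible $\bm\delta$. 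For the forward implication, given $\bm\delta$, Assumption~\ref{ass:ue} supplies an edge $\T$ and a pair $k_1\succ k_2$ with $\Delta_\T(k_1)<\Delta_\T(k_2)$; at the stage where $k_1$ is selected, $k_2$ still remains, so that stage contributes at most $\Delta_\T(k_1)-\Delta_\T(k_2)<0$ while all other stages contribute $\le 0$, giving $\phi(\bm\delta)<0$. Conversely, if Assumption~\ref{ass:ue} fails, the witnessing $\bm\delta$ satisfies $\Delta_\T(k_1)\ge\Delta_\T(k_2)$ whenever $k_1\succ k_2$, so $\Delta_{it}$ is nonincreasing along each observed ranking; then the selected object at each stage already attains $\max_{t\ge j}\Delta_{it}$, every stage contributes exactly $0$, and $\phi(\bm\delta)=0$.

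Finally I would assemble both directions. If Assumption~\ref{ass:ue} holds, then $\phi<0$ on all nonzero admissible directions makes $l$ coercive on $\{\bm 1^\top\u=0\}$, so a maximizer exists; and since $\bm\delta\in\mathcal N$ would force $\phi(\bm\delta)=0$, we get $\mathcal N\cap\{\bm 1^\top\bm a=0\}=\{\bm 0\}$, i.e.\ $\nabla^2 l$ is negative definite on the constraint tangent space, which is precisely strict concavity on $\{\bm 1^\top\u=0\}$; coercivity plus strict concavity yields a unique maximizer, which also settles the ``in particular'' clause. If Assumption~\ref{ass:ue} fails, the witnessing $\bm\delta$ has $\phi(\bm\delta)=0$: when $\bm\delta\in\mathcal N$, $l$ is constant along $\bm\delta$, so any maximizer is nonunique; when $\bm\delta\notin\mathcal N$, $l$ is strictly concave yet nondecreasing along $+\bm\delta$, hence strictly increasing with supremum not attained, so no maximizer exists. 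Either way unique existence fails. I expect the reverse direction to be the main obstacle: one must correctly treat the critical directions where the recession slope vanishes and separate the flat case ($\bm\delta\in\mathcal N$, non-uniqueness) from the strictly-increasing-but-unattained case ($\bm\delta\notin\mathcal N$, non-existence). It is the multi-stage hypergraph structure coupled with the covariate-dependent score differences that makes pinning down $\phi$ and $\mathcal N$ more delicate than in the pairwise PL/BT setting.
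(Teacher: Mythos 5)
Your proposal is correct and follows essentially the same route as the paper: sufficiency via strict concavity of $l$ on $\{\bm 1^\top\u=0\}$ (negative definiteness of the Hessian on the constraint tangent space, forced by the assumption) together with coercivity obtained from positive homogeneity and compactness of the unit sphere — which you package as negativity of the recession function $\phi$ — and necessity via the same dichotomy between a flat direction in the Hessian's null space (non-uniqueness) and a direction with a strictly negative stage contribution (non-existence, which the paper derives from $\nabla l(\tt)^\top\t_0<0$ at the putative maximizer rather than from strict ray monotonicity). The only cosmetic difference is that you work with the summed recession slope $\phi(\bm\delta)$ whereas the paper works with the single worst term $\mathW(\t)=\max_{i,j,t}Z_{(i,j,t)}^\top\t$ and the pointwise bound $-l(\t)\ge \mathW(\t)$; the two criteria are equivalent since $\phi(\bm\delta)<0$ exactly when $\mathW(\bm\delta)>0$.
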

	
Unlike the identiability condition in Proposition~\ref{prop:1}, Assumption~\ref{ass:ue} depends also on the comparison outcomes $\{\pi_\T\}_{\T\in\EEn}$. Since we cannot impose assumptions on $\{\pi_\T\}_{\T\in\EEn}$ when discussing the uniform consistency of the MLE, we restate Assumption~\ref{ass:ue} as an alternative condition called \emph{approximate existence profile} (AEP) that depends on $\HH$ and $\mathcal X = \{X_{\T, j}\}_{j\in \T, \T\in\EEn}$ only. The trade-off is that we can only require Assumption~\ref{ass:ue} to hold with high probability. 
	
\begin{assumption}[AEP]\label{ass:aep}
	The hypergraph $\HHn = (\VVn, \EEn)$ (with $|\VVn| = n$) and the corresponding covariates $\mathcal X = \{X_{\T, j}\}_{j\in \T, \T\in\EEn}$ are an AEP, that is, with probability at least $1-n^{-2}$, the comparison outcome $\{\pi_\T\}_{\T\in\EEn}$ satisfies Assumption~\ref{ass:ue} with respect to $\HHn$ and $\mathcal X$. 
\end{assumption}
	
The probability $1-n^{-2}$ is not essential and can be replaced with any sequence $1-\delta_n$ with $\sum_{n\in\N}\delta_n<\infty$ (since we want to apply the Borel--Cantelli lemma later). To show the definition of AEP in Assumption~\ref{ass:aep} is not vacuous, we provide an explicit way to construct such profiles when $\Delta X_i$'s are random and well-balanced and $\HH$ is sufficiently connected. 
	
\begin{customthm}{2$^\ddagger$}\label{ass:avoidpa}
	$\{\Delta X_i\}_{i\in [N]}$ are independent random matrices and there exists $\delta>0$ such that
	\begin{align}
		c_0:=\liminf_{n\to\infty}\min_{i\in [N]}\inf_{\|\v\|_\infty=1}\min_{j\in [m_i-1]}\P\{\Delta X_i[j,:]\v>\delta\}>0, \label{mass}
	\end{align}
	where $\Delta X_i[j,:]$ denotes the $j$th row of $\Delta X_i$. 
\end{customthm}
Assumption~\ref{ass:avoidpa} is a balancing condition on the covariates only; it requires for every $\v$, $\Delta X_i[j,:]\v$ does not over-concentrate on either side of individuals; a further discussion on Assumption~\ref{ass:avoidpa} is given in Proposition~\ref{prop:1+}. 
Moreover, we need the following uniform boundedness assumption, which will be used in the asymptotic results in the remainder of the article.
	
\begin{assumption}\label{ass:unif-bdd}
	The true parameter $\t^*$ and the covariates $X_{\T_i, j}$ are uniformly bounded:
	\begin{align*}
		R:=\sup_{n\in\N}\max\left\{\|\t^*\|_\infty, \max_{i\in [N], j\in \T_i}\|X_{\T_i, j}\|_1\right\}<\infty.
	\end{align*}
\end{assumption}
	
\begin{proposition}\label{prop:at_verify}	
	Under Assumptions~\ref{ass:avoidpa} and \ref{ass:unif-bdd}, if the modified Cheeger constant of $\HH_n$ satisfies ${h_{\HH_n}}/{\log n}\to\infty$, then a.s., $\HH_n$ and $\mathcal X = \{X_{\T, j}\}_{j\in \T, \T\in\EEn}$ are an AEP for all large $n$. 
\end{proposition}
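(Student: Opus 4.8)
The plan is to avoid reasoning about the AEP event directly and instead bound the \emph{joint} probability (over both the random covariates $\mathcal X$ and the outcomes $\{\pi_\T\}$) that Assumption~\ref{ass:ue} fails, then convert this to an almost-sure statement. Write $G_n$ for the ($\mathcal X$-measurable) event that $(\HH_n,\mathcal X)$ is an AEP; by Assumption~\ref{ass:aep}, $G_n=\{\P_{\pi}(\text{Assumption~\ref{ass:ue} fails}\mid\mathcal X)\le n^{-2}\}$. Markov's inequality gives $\P_{\mathcal X}(G_n^\complement)\le n^2\,\P_{\mathcal X,\pi}(\text{Assumption~\ref{ass:ue} fails})$, so it suffices to show the joint failure probability is at most $n^{-4}$ (or any $\e_n$ with $\sum_n n^2\e_n<\infty$); the conclusion ``AEP for all large $n$'' then follows from Borel--Cantelli. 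This treats covariates and outcomes symmetrically and turns everything into one high-probability estimate.

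\textbf{Geometric reformulation and the per-direction margin bound.} By Theorem~\ref{thm:existence}, Assumption~\ref{ass:ue} fails exactly when some nonzero $\t=(\u^\top,\v^\top)^\top$ with $\bm 1^\top\u=0$ produces no disagreement; since consistency of consecutive pairs implies consistency of all pairs, this says the polyhedral cone $\{\t\in H:\ s_{\pi_i(j)}(\t;\T_i)\ge s_{\pi_i(j+1)}(\t;\T_i)\ \forall i,j\}$ is nontrivial, where $H=\{\bm 1^\top\u=0\}$. I would show this cone is trivial with high probability by covering the unit sphere of $H$. The key per-direction estimate is that for each \emph{fixed} $\t_0=(\u_0^\top,\v_0^\top)^\top$ and each edge $\T_i$, a disagreement with a quantitative margin occurs on some breaking pair with probability bounded below by an absolute constant. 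Two ingredients combine here. First, Assumption~\ref{ass:unif-bdd} keeps the true scores bounded, so every pairwise marginal $\P(a\succ b\mid\T_i)$ lies in $[c_1,1-c_1]$. Second, applying the mass condition \eqref{mass} to both $\v_0/\|\v_0\|_\infty$ and $-\v_0/\|\v_0\|_\infty$ (legitimate since $d$ is fixed and both have unit $\infty$-norm) shows $\Delta X_i[j,:]\v_0$ exceeds $\delta\|\v_0\|_\infty$ in magnitude, with the sign that turns the observed order into a \emph{strict} disagreement, with probability at least $c_0$. Because the $\Delta X_i$ are independent and the outcomes are conditionally independent across edges, ``no margin-$\eta$ disagreement anywhere'' has probability at most $e^{-c'(\#\text{ favorable edges})}$ for each fixed direction.

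\textbf{Connectivity for the utility part and multi-scale covering.} The subtlety is that the covariate margin scales with $\|\v_0\|_\infty$, so directions with a tiny covariate component must be forced by $\u_0$ instead, where the margin is the deterministic gap $|u_{0,a}-u_{0,b}|$, which varies edge to edge. This is where connectivity enters: ordering vertices by $\u_0$ and applying a co-area/Cheeger estimate, the number of boundary edges carrying a utility gap of a given size is bounded below by $h_{\HH_n}$ times the size of the smaller side of the corresponding cut, so the number of favorable edges is $\gtrsim h_{\HH_n}$ times the effective support of $\t_0$. To make the per-direction bounds uniform I would \emph{not} use a single $\e$-net of the sphere of $H$ (cardinality $(3/\e)^{n+d-1}$, which would demand $h_{\HH_n}\gg n\log n$), but a graph-based chaining / multi-scale net stratified by effective support: directions concentrated on a set of size $s$ live near an $s$-dimensional coordinate face and are covered by $\binom{n}{s}(3/\e)^{O(s)}$ points, while each has $\gtrsim h_{\HH_n}s$ favorable edges. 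The union bound contributes, at scale $s$, a factor $\binom{n}{s}(3/\e)^{O(s)}e^{-c'h_{\HH_n}s}$, whose sum over $s$ is finite and $o(n^{-4})$ precisely because $h_{\HH_n}/\log n\to\infty$ dominates $\log n+\log(1/\e)$ per unit of support. A short stability step---taking $\e$ comparable to the margin so that a strict margin-$\eta$ disagreement at a net point persists on the same pair for all nearby $\t$, via $|s_k(\t;\T)-s_k(\t_0;\T)|\le(1+R)\|\t-\t_0\|_\infty$---upgrades the net-level bound to all directions.

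\textbf{Main obstacle.} I expect the last paragraph to require the most care: correctly matching the number of disagreement-forcing edges (which, through the Cheeger constant, grows only like $h_{\HH_n}$ times the support of the direction) against the covering complexity at each scale, while handling the coupling between covariate-driven and utility-driven disagreements inside a single margin bound. This multi-scale balancing, rather than the elementary per-edge probability estimate, is what makes $h_{\HH_n}/\log n\to\infty$ the natural threshold.
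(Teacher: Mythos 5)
Your overall skeleton is right in two places: the final conversion (Markov / conditional probability turning a joint bound $\P_{\mathcal X,\pi}\{\text{Assumption~\ref{ass:ue} fails}\}\leq n^{-4}$ into $\P\{\text{not AEP}\}\leq n^{-2}$ and then Borel--Cantelli) matches the paper, and so does the per-edge mechanism (bounded scores keep every pairwise reversal probability above an absolute constant, and \eqref{mass} applied to $\pm\v_0/\|\v_0\|_\infty$ makes the covariate increment land on the favorable side of $\pm\delta\|\v_0\|_\infty$ with probability $c_0$). The gap is in how you make the bound uniform over directions $\t_0$. The paper never takes a net over the $n$-dimensional $\u$-sphere at all: it reduces Assumption~\ref{ass:ue} to a statement (their Assumption 2$^*$) indexed only by a \emph{partition} $\VV=U\cup U^\complement$ and a unit vector $\v\in\R^d$. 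Given nonzero $(\u_0,\v_0)$, take $U$ to be the argmin set of $\u_0$; then for any $k_1\in U$, $k_2\in U^\complement$ the utility gap $u_{0,k_1}-u_{0,k_2}$ is negative \emph{deterministically}, so it suffices that the covariate increment be negative too -- no comparison of magnitudes, hence no ``coupling'' between the two contributions and no quantitative margin in $\u_0$ is ever needed. The union bound is then over $2^n$ subsets against a failure probability $\exp\{-c_1|\partial U|/3\}\leq n^{-7\min\{|U|,|U^\complement|\}}$ (this is where $h_{\HH_n}/\log n\to\infty$ enters), which beats $\binom{n}{k}\leq n^k$, times a \emph{finite} net in $\R^d$ since $d$ is fixed.

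Your stratified-support chaining does not close as stated. The claim that a direction of effective support $s$ has $\gtrsim h_{\HH_n}s$ favorable edges is unjustified: the only cut you can certify disagreements on without fighting the covariate term is essentially a single level-set cut of $\u_0$, whose smaller side can have size $1$ regardless of $s$ (e.g.\ $\u_0$ with distinct entries), giving only $\gtrsim h_{\HH_n}$ favorable edges. Against a covering complexity $\binom{n}{s}(3/\e)^{O(s)}=e^{O(s\log(n/\e))}$ at scale $s$, the term $e^{O(s\log(n/\e))}e^{-c'h_{\HH_n}}$ forces $h_{\HH_n}\gtrsim s\log n$, i.e.\ $h_{\HH_n}\gtrsim n\log n$ in the worst case -- exactly the obstruction you set out to avoid. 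Using many level-set cuts to recover the factor $s$ reintroduces edge-dependent margins and the utility/covariate coupling you flagged as the main obstacle. The fix is not a finer multi-scale balance but the dimension reduction above: the sign pattern of the utility gaps across a cut is a sufficient statistic, so the union bound lives over subsets of $[n]$ (with probability decaying like $n^{-7\min\{|U|,|U^\complement|\}}$) rather than over a net of the sphere in $\R^{n+d}$.
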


\subsection{A space decomposition perspective on fixed covariates}\label{sec:decomp}
	
Before turning to discuss the computational aspects of \eqref{def:mle}, we observe that the computation of \eqref{def:mle} can be simplified when $X_j=X_{\T, j}$ are independent of $\T$, reducing to the CARE model. As briefly mentioned in Section~\ref{sec:ident}, the CARE model is generally not identifiable and requires additional constraints to ensure a unique solution. \checkedd{To understand why this occurs, observe that for any parameter $\t = (\u^\top, \v^\top)^\top$ with $\v\neq\bm 0$ in the CARE model, the log-likelihood remains unchanged by letting $\bm u' = \u  + [X_1, \ldots, X_n]^\top\v - n^{-1}\bm 1\bm 1^\top[X_1, \ldots, X_n]^\top\v$ and $\v' = \bm 0$. This shows that the part of $\v$ can be perfectly absorbed into $\u$ by an affine transformation.} Consequently, for the CARE model, the optimum value of the log-likelihood function is the same as the PL model. Hence, adding fixed covariates does not enhance model fitting. Proposition~\ref{prop:equiv} formalizes this observation. 
	
\begin{proposition}[Equivalence between PL and CARE]\label{prop:equiv}
	Let $X_j\in\R^d$ be the covariates for the $j$th item and $\bm Z = [X_1, \ldots, X_n]^\top\in\R^{n\times d}$.  
	Let $l(\t)$ be the log-likelihood function defined in \eqref{likelihood}.
	Consider the MLEs of the PL model and the CARE model:
\begin{align*}
	\widetilde{\u}, \bm 0 &= \argmax_{\t=(\u^\top, \bm 0^\top)^\top: \bm 1^\top\u = 0}l(\t)&\mathrm{(PL\  model)},\\
	\uu, \vv &= \argmax_{\t=(\u^\top, \v^\top)^\top: \bm 1^\top\u = 0, \bm Z^\top\u = \bm 0}l(\t)&\mathrm{(CARE \ model).}
\end{align*}
Suppose that $\rank(\bm Z) = d$ and $\bm 1\not\in\range(\bm Z)$. 
If the MLE in either model uniquely exists, then they satisfy the following relations:
\begin{align*}
	\vv = \left(\bm Z^\top\bm Z - \frac{\bm Z^\top\bm 1\bm 1^\top\bm Z}{n}\right)^{-1}\bm Z^\top\widetilde{\u}, \ \ \ 	\uu = \widetilde{\u} - \left(\bm I - \frac{\bm 1\bm 1^\top}{n}\right)\bm Z \vv.
\end{align*}
In particular, $l(\widetilde{\u}, \bm 0) = l(\uu, \vv)$. 
\end{proposition}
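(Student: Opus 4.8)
The plan is to reduce both optimization problems to maximizing a single shift-invariant concave function and then read off the stated formulas from its first-order conditions. Since $X_{\T,j}=X_j$ is edge-independent, the score of object $\pi_i(j)$ equals $u_{\pi_i(j)}+X_{\pi_i(j)}^\top\v=(\u+\bm Z\v)_{\pi_i(j)}$, so the log-likelihood in \eqref{likelihood} factors through the combined score vector: writing $g(\w):=l\big((\w^\top,\bm0^\top)^\top\big)$ for the covariate-free PL log-likelihood viewed as a function of a full score vector $\w\in\R^n$, we have $l(\u,\v)=g(\u+\bm Z\v)$ and, in particular, $l(\u,\bm0)=g(\u)$. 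The function $g$ is concave and invariant under shifts along $\bm1$, i.e. $g(\w+c\bm1)=g(\w)$ for all $c\in\R$; differentiating in $c$ yields the identity $\bm1^\top\nabla g(\w)\equiv\bm0$, which I would use repeatedly.

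First I would treat the PL side. Since $\widetilde{\u}$ maximizes $g$ over $\{\bm1^\top\u=0\}$, the Lagrange condition reads $\nabla g(\widetilde{\u})=\mu\bm1$; pairing with $\bm1$ and using $\bm1^\top\nabla g\equiv\bm0$ gives $\mu n=0$, hence $\nabla g(\widetilde{\u})=\bm0$. By shift-invariance the whole line $\{\widetilde{\u}+c\bm1:c\in\R\}$ then consists of unconstrained maximizers of $g$, and concavity together with the assumed uniqueness shows these are the only maximizers (any other global maximizer, projected back onto $\{\bm1^\top\u=0\}$, would contradict uniqueness). Because both estimators amount to maximizing the same $g$ over transversal hyperplanes that each meet every coset of $\operatorname{span}(\bm1)$ exactly once, unique existence for one model is equivalent to unique existence for the other, and the optimal values coincide.

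Next I would derive the CARE optimality conditions, with $\v\in\R^d$ free and $\u$ constrained. Introducing multipliers $\mu$ for $\bm1^\top\u=0$ and $\bm\lambda$ for $\bm Z^\top\u=\bm0$, and writing $\w=\uu+\bm Z\vv$, stationarity in $\v$ gives $\bm Z^\top\nabla g(\w)=\bm0$, while stationarity in $\u$ gives $\nabla g(\w)=\mu\bm1+\bm Z\bm\lambda$. Substituting the latter into the former yields $\mu\,\bm Z^\top\bm1+\bm Z^\top\bm Z\bm\lambda=\bm0$, and pairing $\nabla g(\w)=\mu\bm1+\bm Z\bm\lambda$ with $\bm1$ (again using $\bm1^\top\nabla g\equiv\bm0$) gives $\mu=-\bm1^\top\bm Z\bm\lambda/n$. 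Eliminating $\mu$ leaves $\big(\bm Z^\top\bm Z-\bm Z^\top\bm1\bm1^\top\bm Z/n\big)\bm\lambda=\bm0$. The matrix here is $\bm Z^\top(\bm I-\bm1\bm1^\top/n)\bm Z$, the Gram matrix of the $\bm1$-centered columns of $\bm Z$; it is positive definite precisely because $\rank(\bm Z)=d$ and $\bm1\notin\range(\bm Z)$ (if $(\bm I-\bm1\bm1^\top/n)\bm Z\v=\bm0$ then $\bm Z\v\in\operatorname{span}(\bm1)$, which forces $\bm Z\v=\bm0$ and hence $\v=\bm0$). Therefore $\bm\lambda=\bm0$, then $\mu=0$, so $\nabla g(\w)=\bm0$; that is, $\w=\uu+\bm Z\vv$ lies on the critical line, $\uu+\bm Z\vv=\widetilde{\u}+c\bm1$ for some $c\in\R$.

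Finally I would solve for the coefficients using the constraints. Applying $\bm1^\top$ to $\uu+\bm Z\vv=\widetilde{\u}+c\bm1$ and using $\bm1^\top\uu=\bm1^\top\widetilde{\u}=0$ gives $c=\bm1^\top\bm Z\vv/n$; applying $\bm Z^\top$ and using $\bm Z^\top\uu=\bm0$ gives $\bm Z^\top\bm Z\vv=\bm Z^\top\widetilde{\u}+c\,\bm Z^\top\bm1$, which after inserting $c$ becomes $\big(\bm Z^\top\bm Z-\bm Z^\top\bm1\bm1^\top\bm Z/n\big)\vv=\bm Z^\top\widetilde{\u}$. Inverting the (now invertible) matrix gives the stated formula for $\vv$, and back-substituting $\uu=\widetilde{\u}+c\bm1-\bm Z\vv=\widetilde{\u}-(\bm I-\bm1\bm1^\top/n)\bm Z\vv$ gives the formula for $\uu$; the identity $l(\uu,\vv)=g(\w)=g(\widetilde{\u})=l(\widetilde{\u},\bm0)$ is then immediate from shift-invariance. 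I expect the main obstacle to be the CARE optimality step: one must argue that the Lagrange multipliers vanish, so that the constrained CARE optimum coincides with the unconstrained critical line of $g$, and this rests entirely on the positive definiteness of $\bm Z^\top(\bm I-\bm1\bm1^\top/n)\bm Z$, which is exactly where the hypotheses $\rank(\bm Z)=d$ and $\bm1\notin\range(\bm Z)$ enter.
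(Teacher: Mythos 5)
Your proof is correct, but it takes a genuinely different route from the paper's. The paper's argument is purely algebraic: it defines the explicit linear map $\phi(\u) = (\phi_1(\u),\phi_2(\u))$ given by the very formulas in the statement, checks that $\phi$ is a bijection between the two constraint sets $\mathscr C_1=\{\bm 1^\top\u=0\}$ and $\mathscr C_2=\{\bm 1^\top\u=0,\ \bm Z^\top\u=\bm 0\}$ (with inverse $\phi^\dagger(\u,\v)=\u+\bm Z\v-n^{-1}\bm 1\bm 1^\top\bm Z\v$), and verifies $l(\phi(\u))=l(\u,\bm 0)$, so the two optimization problems are literally the same problem after a change of variables and the maximizers correspond under $\phi$. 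You instead work through first-order optimality: you factor $l(\u,\v)=g(\u+\bm Z\v)$, use the shift-invariance identity $\bm 1^\top\nabla g\equiv\bm 0$ to kill the PL multiplier, and then show that the CARE multipliers $\mu,\bm\lambda$ vanish because $\bm Z^\top(\bm I-\bm 1\bm 1^\top/n)\bm Z$ is positive definite, after which both optima sit on the same unconstrained critical line of $g$ and the formulas drop out of the constraints. Both arguments hinge on the same two facts (the likelihood depends only on $\u+\bm Z\v$, and the centered Gram matrix is invertible under $\rank(\bm Z)=d$, $\bm 1\notin\range(\bm Z)$); the paper's version is shorter and avoids calculus, while yours additionally exhibits the common optimum as an unconstrained stationary point of $g$. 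Two small expository points you should tighten: when you assert that the CARE feasible set maps onto a subspace meeting each coset of $\operatorname{span}(\bm 1)$ exactly once, that requires $\bm 1\notin\{\u:\bm 1^\top\u=0,\bm Z^\top\u=\bm 0\}\oplus\range(\bm Z)$, which again reduces to $\bm 1\notin\range(\bm Z)$; and to transfer uniqueness back to $(\uu,\vv)$ you should note that $(\u,\v)\mapsto\u+\bm Z\v$ is injective on $\mathscr C_2$ because $\range(\bm Z)\cap\ker(\bm Z^\top)=\{\bm 0\}$ and $\bm Z$ has full column rank. Neither is a gap in substance.
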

	
According to Proposition \ref{prop:equiv}, the PL and CARE models share the same maximum likelihood while the latter contains more parameters. Therefore, the CARE model cannot achieve a better score under frequently used model selection metrics such as the Akaike information criterion (AIC) or Bayesian information criterion (BIC) than the PL model. In contrast, the PlusDC model often improves the AIC or BIC over the PL model. This underscores both the necessity and sufficiency of considering dynamic covariates.
	
\begin{remark}
	Proposition \ref{prop:equiv} offers an intuitive interpretation: a player's utility to win consists of both internal skills and external influences. Fixed covariates, unchanged across games, reflect individual traits and can be absorbed into internal skills. In contrast, dynamic covariates capture varying external impacts, thereby improving prediction. 
\end{remark}
	
\subsection{Alternating maximization algorithm}\label{sec:am}
	
Since $l(\t)$ is a concave function of $\t$, the MLE can be found by applying standard convex optimization methods. Due to the asymmetric roles of $\u$ and $\v$, we consider an alternating maximization (AM) strategy to optimize over $\u$ and $\v$, respectively. Our approach is general and can be extended to settings where the score $s_{ij}$ has a nonlinear dependence on $\v$. The outline of the algorithm is given in Algorithm~\ref{alg:1}. 
	
\begin{algorithm}
	\hspace*{\algorithmicindent} \textbf{Input}: initial guess $\bm\t^{(0)} = (\bm u^{(0)}, \bm v^{(0)})$, tolerance $\e>0$, iterate $\tauf =0$\\
	\hspace*{\algorithmicindent} \textbf{Output}: the joint MLE $\tt$
	\begin{algorithmic}[1]
		\WHILE{not converge}
		\STATE $\bm u^{(\tauf+1)}\gets \argmax_{\bm u: \bm 1^\top \bm u = 0}l(\bm u, \bm v^{(\tauf)})$
		\STATE $\bm v^{(\tauf+1)}\gets \argmax_{\bm v}l(\bm u^{(\tauf+1)}, \bm v)$
		\STATE $\bm\t^{(\tauf+1)} \gets (\bm u^{(\tauf+1)}, \bm v^{(\tauf+1)})$
		\IF{$\frac{1}{N}(l(\t^{(\tauf+1)})-l(\t^{(\tauf)}))\leq \e$}{
			\STATE $\bm\t = \bm\t^{(\tauf+1)}$
			\STATE \textbf{break}}
		\ELSE{
			\STATE $\tauf\gets \tauf+1$}
		\ENDIF
		\ENDWHILE
	\end{algorithmic}
	\caption{Alternating maximization for finding the joint MLE} 
	\label{alg:1}
\end{algorithm}

\checkedd{
	Steps 2 and 3 in Algorithm~\ref{alg:1} are solved using a Minorize--Maximization (MM) algorithm and a Newton--Raphson procedure, respectively. The implementation and analysis of these procedures are standard. When the MLE exists uniquely, all iterates of the AM algorithm remain within a uniformly bounded domain. By Theorem~\ref{thm:existence} and a continuity argument, the normalized log-likelihood function is strongly concave on a common compact domain containing all iterates. Given the smoothness of the function, existing results \citep{luo1993error, beck2015convergence} can be applied to establish the linear convergence of the algorithm. Under suitable assumptions, the total computational complexity of the AM algorithm is $\mathcal O(N\mathrm{poly}(\log (1/\e)))$. More details are given in the supplementary file.} 
\section{Uniform consistency of the MLE}\label{sec:cons}

In this section, we establish the uniform consistency of the MLE in the $\PLDC$ model. For ease of understanding, we add the subscript $n$ in $\HH(\VV, \EE)$ to emphasize the dependence on $n$. Specifically, 
let $\HH_n = (\VV_n, \EE_n)$ be a comparison graph with $\EE_n=\{\T_i\}_{i\in [N]}$. Matrix $\bm Q_n\in\R^{n\times \M}$ denotes the incidence matrix of the induced graph $(\HH_n)_\br$ of $\HH_n$ under the prefixed orientation, and $\bm K_n\in\R^{\M\times d}$ denotes the covariate differences between objects on each edge. To establish the uniform consistency of the MLE, we need additional assumptions on the incoherence between $\range(\bm Q_n^\top)$ and $\range(\bm K_n)$ and non-degeneracy of $\bm K_n$, as well as appropriate topological conditions on the comparison graph sequence $\HH_n$.

\begin{assumption}[Incoherence]\label{ass:cosine}
	There exists $\e>0$ such that
	\begin{align*}
		\sup_{n\in\N}\sup_{\substack{\bm x\in\range(\bm Q_n^\top),  \bm y\in\range(\bm K_n)\\ \|\bm x\|_2=\|\bm y\|_2 = 1}}\bm x^\top \bm y<1-\e.
	\end{align*}
\end{assumption}

\begin{assumption}[Nondegeneracy]\label{ass:minK}
	Denote the smallest singular value of $\bm K_n$ as $\sigma_{\min}(\bm K_n)$. We assume that $\liminf_{n\to\infty}\sigma_{\min}(\bm K_n)/{\sqrt{\M}} >c > 0$ for some positive constant $c>0$. 
\end{assumption}

\checkedd{Assumption~\ref{ass:cosine} can be viewed as an asymptotic generalization of the model identifiability condition in Proposition~\ref{prop:1} for a sequence of models. Indeed, for fixed $n$, the model identifiability condition holds if and only if (I) $\rank(\Q^\top) = n-1$; (II) $\rank(\K) = d$; and (III) $\range(\Q^\top)\cap\range(\K) = \{\bm 0\}$. Condition (III) is equivalent to the existence of $\e_n>0$ such that 
	\begin{align*}
		\sup_{\substack{\bm x\in\range(\bm Q_n^\top),  \bm y\in\range(\bm K_n)\\ \|\bm x\|_2=\|\bm y\|_2 = 1}}\bm x^\top \bm y<1-\e_n.
	\end{align*}
		Assumption~\ref{ass:cosine} further requires $\e_n$ to be uniformly bounded away from $0$ so the sequence of models is asymptotically identifiable.} Assumption~\ref{ass:minK} is standard in the statistics literature. Both assumptions hold when sufficient randomness of $\bm K_n$ exists (Proposition~\ref{prop:dl}).

\begin{assumption}[Diameter of weakly admissible sequences]\label{ass:topology}
	For any fixed $\lambda>0$, the diameter of the $\lambda$-weakly admissible sequences in $\HH_n$ satisfies
	\begin{align*}
		\lim_{n\to\infty}\diam(\A_{\HH_n}(\lambda))\left(\frac{\log n}{h_{\HH_n}}\right)^{1/4}=0, 
	\end{align*}
	where $h_{\HH_n}$ is the modified Cheeger constant of $\HH_n$. 
\end{assumption}

Assumption~\ref{ass:topology} concerns the asymptotic connectivity of $\HH_n$. The required condition implies that $h_{\HH_n}$ is divergent, which can be roughly understood as a lower bound on the minimum degree of a graph. The controlled growth rate of $\diam(\A_{\HH_n}(\lambda))$ suggests that the neighborhood of an arbitrary vertex set should be reasonably large since a constant proportion of the boundary edges in $\partial A_j$ in a weakly admissible sequence intersects $A_{j+1}$. 

\begin{theorem}[Uniform consistency of the MLE]\label{thm:main}
	Under Assumptions~\ref{ass:edgesize}-\ref{ass:topology}, a.s., the MLE uniquely exists for all large $n$ with its components $\uu$ and $\vv$ satisfying
	\begin{equation}\label{rates}\small
		\|\uu-\u^*\|_\infty  \lesssim\diam(\mathcal A_{\HH_n}(\lambda))^2\sqrt{\frac{\log n}{h_{\HH_n}}}=o(1), \quad 
		\|\vv-\v^*\|_\infty  \lesssim \diam(\mathcal A_{\HH_n}(\lambda))\sqrt{\frac{\log n}{h_{\HH_n}}}=o(1),
	\end{equation}
	where $\lambda>0$ depends only on $\mM$ and $R$ in Assumptions~\ref{ass:edgesize} and \ref{ass:unif-bdd}. Consequently, $\tt$ is uniformly consistent, that is, $\|\tt-\t^*\|_\infty \to 0 $ as $n \to \infty$ a.s.  
\end{theorem}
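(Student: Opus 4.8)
The plan is to establish existence first and then to bound the error $\bm\epsilon := \tt - \t^*$ by marrying a concentration estimate for the score function at $\t^*$ to a lower bound on the curvature of $l$, and finally to upgrade the resulting averaged control into the stated $\ell_\infty$ rates by a graph-based chaining argument run along weakly admissible sequences. For existence: by Assumption~\ref{ass:aep} the outcomes satisfy Assumption~\ref{ass:ue} with probability at least $1-n^{-2}$, so Theorem~\ref{thm:existence} gives unique existence of $\tt$ on that event; since $\sum_n n^{-2}<\infty$, the Borel--Cantelli lemma yields a.s. unique existence for all large $n$. On this event $l$ is strictly concave on $\{\bm 1^\top\u=0\}$ and the first-order condition $\nabla l(\tt)=\bm 0$ holds, so everything reduces to controlling $\bm\epsilon$.

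Next I would set up the local geometry (Stage~1). Since the $\PLDC$ likelihood depends on $\t$ only through the relative log-scores on the broken graph, one may write $\nabla l(\t)=\AA^\top\bm r(\AA\t)$ and $-\nabla^2 l(\t)=\AA^\top D(\AA\t)\AA$ with $\AA=[\Q^\top,\K]$, where $\bm r$ collects the centered win/loss residuals and $D$ is a block-diagonal (one block per edge) positive semidefinite weight matrix. Under Assumption~\ref{ass:unif-bdd} the blocks of $D$ have eigenvalues bounded above and below by positive constants depending only on $\mM$ and $R$, so the curvature in a direction $\t=(\u^\top,\v^\top)^\top$ is comparable to $\|\Q^\top\u+\K\v\|_2^2$. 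I would lower-bound this by separating the two ranges: for $\bm a\in\range(\Q^\top)$ and $\bm b\in\range(\K)$ the incoherence Assumption~\ref{ass:cosine} gives $\|\bm a+\bm b\|_2^2\geq\e(\|\bm a\|_2^2+\|\bm b\|_2^2)$; a Cheeger-type inequality then bounds the Rayleigh quotient of $\Q\Q^\top$ from below in terms of $h_{\HH_n}$ for $\u\perp\bm 1$, while Assumption~\ref{ass:minK} gives $\|\K\v\|_2^2\gtrsim\M\|\v\|_2^2$. Simultaneously, a Bernstein/Azuma bound applied coordinatewise to the independent edge residuals controls $\nabla l(\t^*)$, the $\sqrt{\log n}$ factor entering through a union bound over the $n$ vertices. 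Combining the gradient bound with the curvature lower bound produces a first, averaged (essentially $\ell_2$) bound on $\bm\epsilon$.

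The crux is to boost this averaged bound to the $\ell_\infty$ rates in \eqref{rates} (Stage~2). Here I would chain on the vertex errors: starting from a level set $A_1$ of $|\uu-\u^*|$, I would sum the stationarity relation $\nabla_\u l(\tt)=\bm 0$ over a nested $\lambda$-weakly admissible sequence $A_1\subset\cdots\subset A_J$, using weak admissibility (a fraction $\lambda$ of $\partial A_j$ reaching into $A_{j+1}$) to transfer the error bound from one level to the next. The boundary-flow terms must be controlled uniformly over the combinatorially many candidate level sets, which is where empirical-process bounds enter, as opposed to a single coordinatewise Bernstein estimate. Each of the $\diam(\A_{\HH_n}(\lambda))$ chaining steps contributes one factor of the error scale $\sqrt{\log n/h_{\HH_n}}$, which is the source of the $\diam$ factors appearing in \eqref{rates}.

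The main obstacle is the coupling between $\uu$ and $\vv$ inside the chaining step. Unlike the covariate-free PL analysis of \cite{han2023unified}, the boundary flow across $\partial A_j$ now carries the covariate differences through $\K\vv$, so the utility chaining cannot be closed without a simultaneous bound on $\|\vv-\v^*\|$; conversely, the sharp bound on $\vv$ (obtained from the $\K$-block of the normal equations via Assumptions~\ref{ass:cosine} and \ref{ass:minK}) requires a good bound on $\uu$. I would resolve this feedback loop by the two-stage error boosting: feed the crude Stage~1 bound on $\vv$ into a first chaining pass for $\uu$, then use the improved $\uu$-bound to sharpen $\vv$, and iterate; because the $\vv$-error itself carries one factor of $\diam$, its absorption into the utility chaining is exactly what upgrades the rate for $\uu$ to $\diam^2\sqrt{\log n/h_{\HH_n}}$ while leaving $\vv$ at $\diam\sqrt{\log n/h_{\HH_n}}$. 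Finally, Assumption~\ref{ass:topology} guarantees $\diam(\A_{\HH_n}(\lambda))^2\sqrt{\log n/h_{\HH_n}}\to 0$, which gives the $o(1)$ conclusion and hence a.s. uniform consistency of $\tt$.
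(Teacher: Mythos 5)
Your overall architecture matches the paper's: existence via Assumption~\ref{ass:aep}, Theorem~\ref{thm:existence}, and Borel--Cantelli; a chaining lemma along $\lambda$-weakly admissible level sets of $\uu-\u^*$ giving $\|\uu-\u^*\|_\infty\lesssim\diam(\A_{\HH_n}(\lambda))\cdot\max\{\sqrt{(\log n)/h_{\HH_n}},\|\vv-\v^*\|_\infty\}$, with the boundary-flow sums controlled uniformly over subsets by Hoeffding plus a union bound; a score-plus-curvature estimate exploiting Assumptions~\ref{ass:cosine} and \ref{ass:minK} to extract the $\vv$ rate; and a two-pass interleaving of these to resolve the $\uu$--$\vv$ coupling, which is exactly how the $\diam^2$ versus $\diam$ asymmetry in \eqref{rates} arises. (One minor point: the Cheeger-type spectral lower bound on $\Q\Q^\top$ you invoke is never needed --- the paper discards the $\|\Q^\top(\uu-\u^*)\|_2^2$ term entirely and obtains all $\uu$ control from chaining.)

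There is, however, a genuine gap in your Stage~1: circular localization. Your claim that the Hessian weight blocks have eigenvalues bounded below by constants depending only on $\mM$ and $R$ is true only on a bounded parameter set; Assumption~\ref{ass:unif-bdd} bounds $\t^*$ and the covariates but says nothing about $\tt$, so the intermediate point $\bar\t\in[\t^*,\tt]$ in your mean-value expansion may have degenerate PL probabilities and the curvature lower bound $\gtrsim\|\Q^\top\u+\K\v\|_2^2$ fails there. In other words, the score-plus-curvature argument you use to \emph{establish} consistency already presupposes it. The paper fills this with a separate first stage: a uniform deviation bound $\sup_{\t\in\B}|l(\t)-\bar l(\t)|\lesssim\sqrt{(n\log n)/N}$ over the fixed ball $\B=\{\|\t-\t^*\|_\infty\le 1\}$ (covering numbers plus Hoeffding on the normalized log-likelihood), a strong-convexity lower bound $\bar l(\t^*)-\bar l(\t)\gtrsim\|\v-\v^*\|_2^2$ for the population cross-entropy (via Lemma~\ref{lm:ce}, Luce's choice axiom, and Assumptions~\ref{ass:cosine}--\ref{ass:minK}), and then a concavity/segment contradiction to rule out $\tt\in\B^\complement$. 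The delicate subcase --- the segment $[\tt,\t^*]$ exiting $\B$ without crossing the region where $\|\v-\v^*\|_2$ is large, which forces $\|\vv-\v^*\|_\infty\lesssim((n\log n)/N)^{1/4}\|\uu-\u^*\|_\infty$ --- is closed by feeding this into the chaining lemma and invoking Assumption~\ref{ass:topology} to produce $\|\vv-\v^*\|_\infty<\tfrac12\|\vv-\v^*\|_\infty$, a contradiction. Only after this crude $(\log n/h_{\HH_n})^{1/4}$ localization does the Taylor expansion you describe become legitimate and boost the exponent to $1/2$. Your plan needs this first stage (or an equivalent non-circular localization) spelled out to be a proof.
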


Under Assumptions~\ref{ass:edgesize}-\ref{ass:topology}, the convergence rates for $\uu$ and $\vv$ are \emph{different} and characterized by the parameters $h_{\HH_n}$ and $\diam(\mathcal A_{\HH_n}(\lambda))$. In Section~\ref{sec:special case}, we will provide explicit upper bounds on both parameters when the graph sequence is drawn from appropriate random graph models.

\begin{figure}[t]
	\centering
	\begin{minipage}{0.442\linewidth}
		\centering
		\includegraphics[width=\linewidth, trim={0cm 0cm 0cm 0cm},clip]{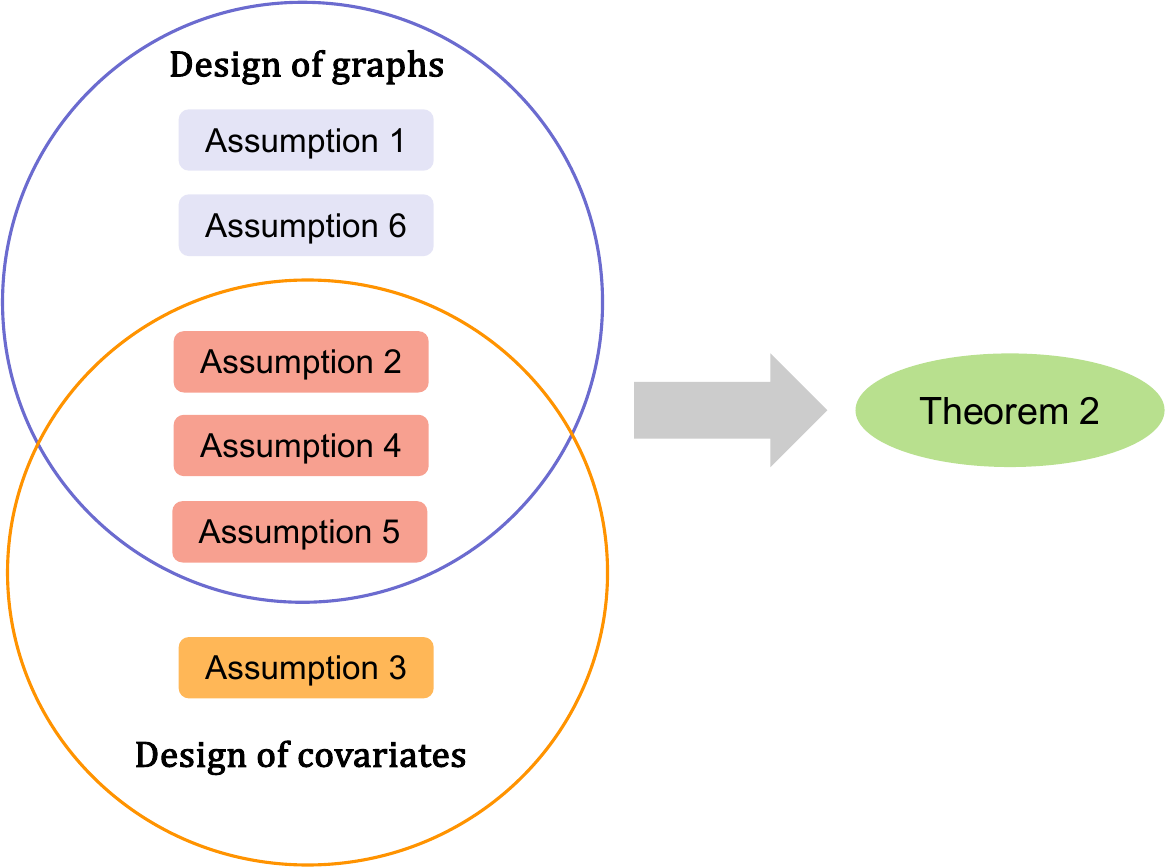}
		\caption*{(a)}\label{aaa1}
	\end{minipage}
	\hfill
	\begin{minipage}{0.55\linewidth}
		\centering
		\includegraphics[width=\linewidth, trim={0cm 0cm 0cm 0cm},clip]{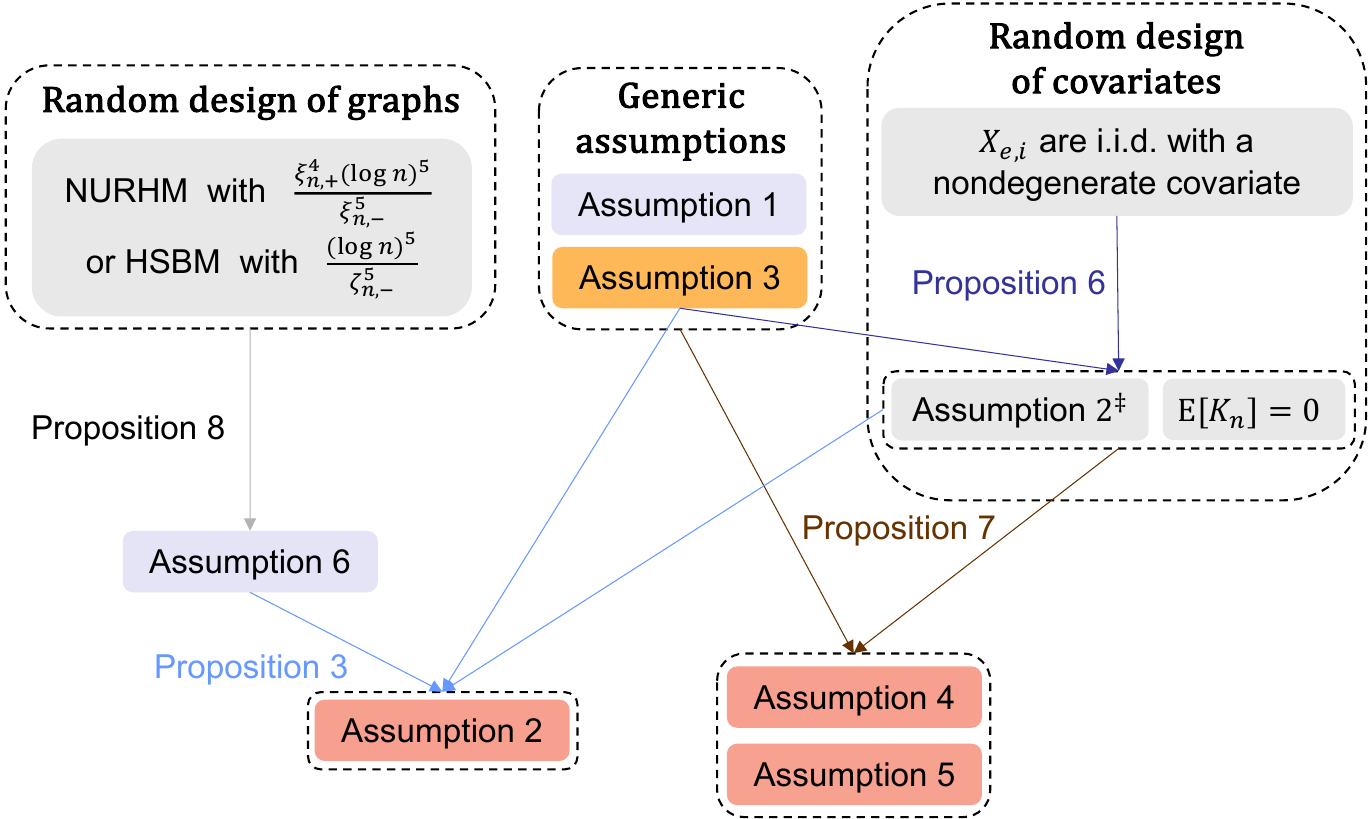}
		\caption*{(b)}\label{bbb1}
	\end{minipage}
	\caption{(a): Connections between various assumptions and the uniform consistency result in the deterministic design. (b): Random designs of graphs and covariates combined with generic assumptions imply the other assumptions needed for the uniform consistency result.} \label{fig:ass}
\end{figure}

\section{Uniform consistency under random designs}\label{sec:special case}

Theorem~\ref{thm:main} established a uniform consistency result for the MLE under deterministic assumptions of the design of both comparison graphs and covariates (see Figure~\ref{fig:ass}~(a)). It is not yet clear whether the required assumptions can hold simultaneously. This section addresses this question by considering random design models. For the comparison graphs, we consider the random hypergraph models in Section~\ref{sec:rg}, including both NURHM and HSBM. For the covariates, we leverage Assumption~\ref{ass:avoidpa}, which is a random design assumption on the covariates. A further specification of Assumption~\ref{ass:avoidpa} is given in Proposition~\ref{prop:1+}. 

\begin{proposition}\label{prop:1+}
	Suppose there exists an absolute constant $\delta'>0$ such that, for each $n$ and $i \in [N]$, the conditional distribution of each row of $\Delta X_i$ on $\HH_n$ has mean zero and covariance lower bounded by $\delta'^2 \bm I$ in the Loewner order. Under Assumption~\ref{ass:unif-bdd}, Assumption~\ref{ass:avoidpa} holds. 
\end{proposition}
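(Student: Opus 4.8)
The plan is to verify the mass condition \eqref{mass} by reducing it to a one-dimensional anti-concentration estimate for a bounded, mean-zero random variable whose variance is bounded below. Fix $n$, an edge index $i\in [N]$, a row index $j\in [m_i-1]$, and a direction $\v\in\R^d$ with $\|\v\|_\infty = 1$; write $W=\Delta X_i[j,:]$ for the corresponding random row and set $Z = W\v$. All expectations and probabilities are taken conditionally on $\HH_n$, and I would carry this conditioning silently throughout.

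First I would record three elementary facts about $Z$. Since $\E[W]=\bm 0$ by hypothesis, $\E[Z]=0$. Since $\mathrm{Cov}(W)\succeq \delta'^2\bm I$ in the Loewner order,
\[
\Var(Z)=\v^\top\mathrm{Cov}(W)\,\v\ \geq\ \delta'^2\|\v\|_2^2\ \geq\ \delta'^2\|\v\|_\infty^2=\delta'^2,
\]
using $\|\v\|_2\geq\|\v\|_\infty$. Finally, each row of $\Delta X_i$ is a difference of two covariate vectors, so by H\"older's inequality and Assumption~\ref{ass:unif-bdd}, $|Z|\leq\|W\|_1\|\v\|_\infty\leq 2R=:M$ almost surely. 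Thus $Z$ is mean-zero, supported in $[-M,M]$, with $\E[Z^2]\geq\delta'^2$.

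The main step, and the only place requiring a genuine idea, is to convert the variance lower bound into a lower bound on the right-tail probability $\P\{Z>\delta\}$. Writing $Z_+=\max\{Z,0\}$ and $Z_-=\max\{-Z,0\}$, the mean-zero property gives $\E[Z_+]=\E[Z_-]$, while $0\leq Z_\pm\leq M$ gives $\E[Z_+^2]\leq M\,\E[Z_+]$ and $\E[Z_-^2]\leq M\,\E[Z_-]$. Summing and using $Z^2=Z_+^2+Z_-^2$ together with $\E[Z^2]\geq\delta'^2$,
\[
\delta'^2\ \leq\ \E[Z_+^2]+\E[Z_-^2]\ \leq\ 2M\,\E[Z_+],\qquad\text{so}\qquad \E[Z_+]\ \geq\ \frac{\delta'^2}{2M}.
\]
Splitting $\E[Z_+]$ according to whether $Z\leq\delta$ or $Z>\delta$ yields $\E[Z_+]\leq\delta+M\,\P\{Z>\delta\}$, hence $\P\{Z>\delta\}\geq(\E[Z_+]-\delta)/M$. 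Taking $\delta=\delta'^2/(4M)=\delta'^2/(8R)$ gives $\P\{Z>\delta\}\geq\delta'^2/(4M^2)=\delta'^2/(16R^2)$.

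Finally, I would observe that both the threshold $\delta=\delta'^2/(8R)$ and the lower bound $\delta'^2/(16R^2)$ depend only on the absolute constants $\delta'$ and $R$; in particular they are uniform over $\v$, $j$, $i$, and $n$, so no compactness argument over the sphere $\|\v\|_\infty=1$ is needed. Passing to the infimum over $\v$, the minima over $j$ and $i$, and then the $\liminf$ in $n$ preserves the bound, giving $c_0\geq\delta'^2/(16R^2)>0$, which is exactly \eqref{mass}. Since the conditional bound holds for every realization of $\HH_n$, the tower property yields the same unconditional bound, and the independence of $\{\Delta X_i\}_{i\in[N]}$ demanded in Assumption~\ref{ass:avoidpa} is inherited from the design; hence Assumption~\ref{ass:avoidpa} holds with $\delta=\delta'^2/(8R)$. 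The main obstacle is thus confined to the elementary anti-concentration inequality above, with everything else amounting to bookkeeping of the conditioning and the uniformity of the constants.
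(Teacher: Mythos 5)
Your proof is correct, and it takes a genuinely more streamlined route than the paper's. The paper argues in two stages: it first uses a second-moment truncation to show $\P\{|\Delta X_i[j,:]\v|>\delta'/2\}\geq 3\delta'^2/(16R^2)$, then splits into cases according to which tail carries at least half of that mass; in the unfavorable case (heavy negative tail) it invokes the mean-zero property to transfer mass to the positive side via $\E[\Delta X_i[j,:]\v\cdot\mathbb I_{\{\Delta X_i[j,:]\v>0\}}] = -\E[\Delta X_i[j,:]\v\cdot\mathbb I_{\{\Delta X_i[j,:]\v\leq 0\}}]$ and runs a second truncation argument, ending with $\delta = 3\delta'^3/(128R^2)$ and $c_0 = 3\delta'^3/(256R^3)$. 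You instead exploit the mean-zero property from the outset: the identity $\E[Z_+]=\E[Z_-]$ combined with $Z_\pm^2\leq M Z_\pm$ converts the variance lower bound directly into $\E[Z_+]\geq \delta'^2/(2M)$, after which a single truncation yields the one-sided tail bound. This eliminates the case analysis entirely and gives sharper constants (a $\delta'^2/R^2$ rate rather than $\delta'^3/R^3$), at no cost in generality; the bookkeeping of the conditioning on $\HH_n$ and the uniformity over $\v$, $j$, $i$, $n$ is handled the same way in both arguments.
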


\begin{remark}\label{rem:3}
	The conditions in Proposition~\ref{prop:1+} hold if $X_{\T, i}$ are i.i.d. random vectors independent of $\HH_n$ and $n$ with a nondegenerate covariance matrix. While more interesting applications may involve varying distributions of $X_{\T,i}$ depending on $\T$ and $i$, this simple case is sufficient to justify the random design considered in Assumption~\ref{ass:avoidpa} exists.  
\end{remark}

Under the random design assumption of both comparison graphs and covariates as well as Assumptions~\ref{ass:edgesize} and \ref{ass:unif-bdd}, the following results hold.

\begin{proposition}\label{prop:dl}
	Under Assumptions~\ref{ass:edgesize} and \ref{ass:unif-bdd}, if $\bm K_n\in\R^{N_\br\times d}$ has mean zero and Assumption~\ref{ass:avoidpa} holds, then for all large $n$, with probability at least $1-n^{-2}$, $\sigma_{\min}(\bm K_n)\gtrsim \sqrt{N}\gtrsim\sqrt{N_\br}$  and 
	\begin{align*}
		\sup_{\substack{\bm x\in  \range(\bm Q_n^\top),  \bm y\in\range(\bm K_n)\\ \|\bm x\|_2=\|\bm y\|_2 = 1}}\bm x^\top \bm y\lesssim \sqrt{\frac{n+d}{N}}.
	\end{align*}
	In particular, if $N/n\to\infty$ as $n \to \infty$, then Assumptions~\ref{ass:cosine}-\ref{ass:minK} hold a.s.  
\end{proposition}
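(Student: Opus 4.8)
The plan is to treat $\K$ as a random matrix whose rows are grouped into $N$ blocks $\Delta X_i$ that are independent across $i$ (Assumption~\ref{ass:avoidpa}), each block having bounded size (Assumption~\ref{ass:edgesize}) and bounded entries (Assumption~\ref{ass:unif-bdd}, so $\|\Delta X_i[j,:]\|_2\le 2R$), and to control its two relevant spectral quantities by matrix concentration, using throughout that $d$ is fixed. First I would lower-bound $\sigma_{\min}(\K)$. The key point is that the mass condition \eqref{mass}, together with the mean-zero hypothesis on the rows, forces a uniform lower bound on the per-row covariance: for any unit $\v$, $\Var(\Delta X_i[j,:]\v)\ge \delta^2\,\P\{\Delta X_i[j,:]\v>\delta\}\ge \delta^2 c_0$, whence $\lambda_{\min}\big(\E[(\Delta X_i[j,:])^\top\Delta X_i[j,:]]\big)\gtrsim 1$ (the $\ell_\infty$-to-$\ell_2$ conversion costs only a factor depending on the fixed $d$). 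Summing over rows gives $\E[\K^\top\K]\succeq c\,\M\,\bm I$. I would then apply matrix Bernstein to the independent summands $\bm S_i=(\Delta X_i)^\top\Delta X_i$, whose operator norms are $O(1)$, to get $\|\K^\top\K-\E[\K^\top\K]\|_{\mathrm{op}}\le \tfrac12\lambda_{\min}(\E[\K^\top\K])$ with probability at least $1-n^{-2}$ (this needs $\M\gtrsim\log n$, which holds in the intended regime). Since $N\le\M=\sum_i(m_i-1)\le(\mM-1)N$ with $\mM$ bounded, $\sqrt N\asymp\sqrt{\M}$, so $\sigma_{\min}(\K)^2=\lambda_{\min}(\K^\top\K)\gtrsim\M$ yields $\sigma_{\min}(\K)\gtrsim\sqrt N\gtrsim\sqrt{\M}$.

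For the incoherence bound I would first reduce the supremum to a ratio of spectral quantities. Writing a unit $\y\in\range(\K)$ as $\y=\K\bm a$ with $\bm a$ in the row space, the constraint $\|\y\|_2=1$ gives $\|\bm a\|_2\le\sigma_{\min}(\K)^{-1}$; hence for unit $\x\in\range(\Q^\top)$, $\x^\top\y=\x^\top\K\bm a\le\|\K^\top\x\|_2\|\bm a\|_2\le\sigma_{\max}(\bm P\K)/\sigma_{\min}(\K)$, where $\bm P$ is the orthogonal projection onto $\range(\Q^\top)$. With the denominator handled above, it remains to show $\sigma_{\max}(\bm P\K)\lesssim\sqrt{n+d}$. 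The crucial simplification, available precisely because $d$ is fixed, is to bound $\sigma_{\max}(\bm P\K)\le\|\bm P\K\|_F$, which sidesteps any $\e$-net/union bound over the $(n-1)$-dimensional subspace. Writing $\|\bm P\K\|_F^2=\sum_{k\in[d]}\bm c_k^\top\bm P\bm c_k$ for the columns $\bm c_k$ of $\K$, the expectation is $\E[\bm c_k^\top\bm P\bm c_k]=\mathrm{tr}(\bm P\,\E[\bm c_k\bm c_k^\top])\le\mathrm{rank}(\bm P)\,\|\E[\bm c_k\bm c_k^\top]\|_{\mathrm{op}}\lesssim n$, since $\E[\bm c_k\bm c_k^\top]$ is block-diagonal across edges with blocks of bounded size and bounded entries.

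To upgrade this to a high-probability statement I would concentrate each quadratic form $\bm c_k^\top\bm P\bm c_k$ around its mean via a Hanson--Wright-type inequality for vectors with independent bounded blocks (the blocks being $\Delta X_i[:,k]$); at deviation scale comparable to the mean $O(n)$ the tail is $\exp(-cn)$, so a union bound over the $d$ columns gives $\|\bm P\K\|_F^2\lesssim n$ with probability at least $1-n^{-2}$. Dividing by $\sigma_{\min}(\K)\gtrsim\sqrt N$ then yields the stated bound $\lesssim\sqrt{(n+d)/N}$. For the final claim, when $N/n\to\infty$ the incoherence supremum tends to $0$ and is eventually below $1-\e$, giving Assumption~\ref{ass:cosine}, while $\sigma_{\min}(\K)/\sqrt{\M}\gtrsim 1$ gives Assumption~\ref{ass:minK}; since the per-$n$ failure probabilities are summable, Borel--Cantelli promotes both to almost-sure statements for all large $n$.

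The main obstacle I anticipate is the concentration step for $\sigma_{\max}(\bm P\K)$: controlling it uniformly over the high-dimensional subspace $\range(\Q^\top)$ without paying an exponential net factor. The Frobenius-norm reduction resolves the uniformity cheaply (thanks to fixed $d$), but one still needs a quadratic-form deviation inequality that accommodates the within-edge dependence of the rows of a single $\Delta X_i$ — they share the common anchor $X_{\T_i,j_{i1}}$ and are not independent — while exploiting independence across edges. Making a block version of Hanson--Wright (equivalently, a decoupling-plus-Bernstein argument) deliver the $\exp(-cn)$ tail at the mean scale, together with the bookkeeping that $\|\E[\bm c_k\bm c_k^\top]\|_{\mathrm{op}}=O(1)$ despite this dependence, is the delicate part.
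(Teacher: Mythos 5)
Your first half (the bound $\sigma_{\min}(\K)\gtrsim\sqrt{N}$) is essentially the paper's argument: both lower-bound $\E[\K^\top\K]$ per block via the mass condition \eqref{mass} (note you do not actually need the mean-zero hypothesis there --- Markov's inequality gives the second-moment bound $\E[(\Delta X_i[j,:]\v)^2]\ge\delta^2c_0$ directly) and then invoke a matrix concentration inequality; the paper uses the matrix Chernoff bound where you use matrix Bernstein, which is immaterial. For the incoherence bound, however, you take a genuinely different route. The paper controls $\sup_{(\x,\z)}\x^\top\K\z$ over $\mathbb S(\range(\Q^\top))\times\mathbb S(\R^d)$ by verifying sub-Gaussian increments and applying Dudley's inequality, handling the within-edge dependence of the rows of $\K$ (for $\mM>2$) by decomposing $\K=\sum_{j\in[\mM-1]}\bm K_{n,j}$ into matrices with fully independent rows and mutually orthogonal column spaces. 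You instead bound the supremum by $\sigma_{\max}(\bm P\K)/\sigma_{\min}(\K)$ with $\bm P$ the projection onto $\range(\Q^\top)$, pass to $\|\bm P\K\|_F$ (cheap because $d$ is fixed), and compute $\E\|\bm P\K\|_F^2\le d\cdot\mathrm{rank}(\bm P)\max_k\|\E[\bm c_k\bm c_k^\top]\|_{\mathrm{op}}\lesssim n$ using the block-diagonal structure of $\E[\bm c_k\bm c_k^\top]$ (here the mean-zero hypothesis is genuinely used). This avoids chaining entirely and is arguably more elementary. The one step you leave open --- a Hanson--Wright inequality for quadratic forms with independent bounded blocks rather than independent entries --- is real but closeable: either via standard decoupling for second-order chaos with $\|\bm P\|_F^2\le n$ and $\|\bm P\|_{\mathrm{op}}=1$, or, more simply, by borrowing the paper's own splitting device, since $\|\bm P\K\|_F\le\sum_{j\in[\mM-1]}\|\bm P\bm K_{n,j}\|_F$ and each $\bm K_{n,j}$ has independent mean-zero bounded entries in each column, so the classical scalar Hanson--Wright applies to each of the $O(1)$ summands. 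With that step supplied, your argument delivers the same conclusion, and the final Borel--Cantelli step matches the paper.
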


\begin{proposition}\label{prop:rg}
	Let $\HH_n(\VV_n, \EE_n)$ be a hypergraph sequence with $\VV_n = [n]$. The following statements hold:
	\begin{enumerate}
		\item[(i)] If $\HH_n$ is sampled from a NURHM model with the corresponding $\xi_{n, \pm}$ in \eqref{myxis} satisfying
		${\xi^4_{n, +}(\log n)^5}/{\xi^5_{n,-}}\to 0$ as $ n\to\infty, \label{pj123}$
		then Assumption~\ref{ass:topology} holds a.s. Moreover, for any $\lambda>0$, $\diam(\A_{\HH_n}(\lambda))\lesssim (\log n)\xi_{n,+}/\xi_{n, -}$ and $h_{\HH_n}\gtrsim \xi_{n, -}$. 
		\item[(ii)] If $\HH_n$ is sampled from an HSBM model with the corresponding $\zeta_{n, -}$ in \eqref{myzeta} satisfying
		${(\log n)^5}/{\zeta_{n,-}}\to 0$ as $ n\to\infty, $
		then Assumption~\ref{ass:topology} holds a.s. Moreover, for any $\lambda>0$, $\diam(\A_{\HH_n}(\lambda))\lesssim\log n$ and $h_{\HH_n}\gtrsim \zeta_{n, -}$. 
	\end{enumerate}
\end{proposition}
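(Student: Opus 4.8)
The plan is to reduce Assumption~\ref{ass:topology} for each model to two quantitative estimates on a single almost-sure event—a lower bound on the modified Cheeger constant $h_{\HH_n}$ and an upper bound on $\diam(\A_{\HH_n}(\lambda))$—and then to check that the stated sparsity conditions force the product $\diam(\A_{\HH_n}(\lambda))(\log n/h_{\HH_n})^{1/4}$ to vanish. Granting the advertised bounds, for NURHM one gets $\diam(\A_{\HH_n}(\lambda))(\log n/h_{\HH_n})^{1/4}\lesssim(\log n)^{5/4}\xi_{n,+}\xi_{n,-}^{-5/4}$, whose fourth power is exactly $\xi_{n,+}^4(\log n)^5/\xi_{n,-}^5\to 0$; for HSBM the same product is $\lesssim(\log n)^{5/4}\zeta_{n,-}^{-1/4}$, whose fourth power is $(\log n)^5/\zeta_{n,-}\to 0$. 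Thus the stated conditions are precisely what is needed, and everything reduces to proving the two bounds on an event of probability $1-o(n^{-1})$ and invoking Borel--Cantelli.

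First I would establish the Cheeger lower bounds by concentration plus a union bound over all vertex subsets. For fixed $U$ with $|U|\le n/2$, the quantity $|\partial U|$ is a sum of independent edge indicators whose mean is $\asymp\xi_{n,-}|U|$ (resp.\ $\asymp\zeta_{n,-}|U|$), since the number of size-$\muu$ edges meeting both $U$ and $U^\complement$ is $\asymp|U|\,n^{\muu-1}$ when $|U|\le n/2$. A Chernoff bound gives $\P(|\partial U|<\tfrac12\E|\partial U|)\le\exp(-c\,\xi_{n,-}|U|)$, and since there are at most $(en/|U|)^{|U|}=\exp(|U|\log(en/|U|))$ sets of each size, the union bound closes as long as $\xi_{n,-}\gg\log n$ (resp.\ $\zeta_{n,-}\gg\log n$), which is implied by the hypotheses (using $\xi_{n,+}\ge\xi_{n,-}$ in the NURHM case). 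The total failure probability is summable in $n$, so $h_{\HH_n}\gtrsim\xi_{n,-}$ (resp.\ $h_{\HH_n}\gtrsim\zeta_{n,-}$) for all large $n$ a.s. A parallel argument controls the maximum degree, giving $\Delta(\HH_n)\lesssim\xi_{n,+}$ in NURHM and, in HSBM, within-cluster and cross-cluster degrees concentrating at $n^{\mM-1}\omega_{n,\ell}$ and $n^{\mM-1}\omega_{n,0}$; these degree bounds feed the diameter step.

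Next I would bound $\diam(\A_{\HH_n}(\lambda))$ by neighborhood growth. In any $\lambda$-weakly admissible sequence, at least $\lambda|\partial A_j|$ boundary edges meet $A_{j+1}\setminus A_j$, so $\sum_{v\in A_{j+1}\setminus A_j}\deg(v)\ge\lambda|\partial A_j|\ge\lambda h_{\HH_n}\min\{|A_j|,|A_j^\complement|\}$; dividing by the maximum degree yields $|A_{j+1}\setminus A_j|\gtrsim\lambda\,(h_{\HH_n}/\Delta(\HH_n))\min\{|A_j|,|A_j^\complement|\}$. For NURHM, $h_{\HH_n}/\Delta(\HH_n)\gtrsim\xi_{n,-}/\xi_{n,+}$, so $\min\{|A_j|,|A_j^\complement|\}$ first grows and then (past $n/2$) its complement shrinks geometrically with ratio $1+c\lambda\xi_{n,-}/\xi_{n,+}$; hence reaching the halfway point takes $\lesssim(\log n)\,\xi_{n,+}/\xi_{n,-}$ steps, which is the claimed diameter bound.

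The hard part is the HSBM bound $\diam(\A_{\HH_n}(\lambda))\lesssim\log n$ in the genuinely heterogeneous regime $\max_\ell\omega_{n,\ell}/\min_\ell\omega_{n,\ell}\to\infty$, where $h_{\HH_n}/\Delta(\HH_n)$ can tend to $0$ and the naive geometric bound only gives the inflated $\log n\cdot\max_\ell\omega_{n,\ell}/\min_\ell\omega_{n,\ell}$. To recover the sharp $\log n$, I would run the growth argument \emph{per cluster}, exploiting that $L=O(1)$. Restricted to a single $\VV_\ell$, the within-cluster edges form an $\mM$-uniform \ER hypergraph whose Cheeger constant and maximum degree are \emph{both} $\asymp n^{\mM-1}\omega_{n,\ell}$, so its local expansion ratio is $\Theta(1)$; likewise the cross-cluster edges form a homogeneous family with ratio $\Theta(1)$. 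By pigeonhole, the covered $\lambda$-fraction of $\partial A_j$ contains a $\lambda/(L+1)$-fraction lying in one of these $L+1$ homogeneous families, and covering it forces the corresponding coordinate $|A_j\cap\VV_\ell|$ (or the cross-reachable set) to increase by a constant factor. Since each of the $L=O(1)$ monotone coordinates can undergo only $O(\log n)$ multiplicative increments before saturating at $\Theta(n)$, the total number of steps is $O(L\log n)=O(\log n)$. The main technical care lies in making the per-family pigeonhole rigorous, in handling cross-cluster edges that may add vertices to several clusters at once, and in making all within-cluster and cross-cluster estimates hold \emph{uniformly} over subsets via the concentration bounds of the first step.
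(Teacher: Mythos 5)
Your proposal is correct in outline and rests on the same core mechanism as the paper's argument: a concentration-plus-union-bound lower estimate $|\partial U|\gtrsim\xi_{n,-}\min\{|U|,|U^\complement|\}$ (resp.\ $\zeta_{n,-}$) giving the Cheeger bound, and a neighborhood-growth recursion showing that weak admissibility forces $|A_{j+1}\setminus A_j|\gtrsim(\xi_{n,-}/\xi_{n,+})\min\{|A_j|,|A_j^\complement|\}$, hence geometric growth and the stated diameter bounds; the final reduction of Assumption~\ref{ass:topology} to these two estimates is exactly as you compute. The difference is one of presentation and of where the work is done. The paper does not reprove the growth machinery: it cites \cite[Lemma 5.1]{han2023unified} for admissible sequences and observes that the only new quantity appearing for \emph{weakly} admissible sequences, namely $|\{e\in\partial A_j: e\cap(A_{j+1}\setminus A_j)\neq\emptyset\}|$, is bounded by $|\partial(A_{j+1}\setminus A_j)|$ because of the containment $\{e\in\partial A_j: e\cap(A_{j+1}\setminus A_j)\neq\emptyset\}\subseteq\partial(A_{j+1}\setminus A_j)$, so the cited bounds transfer verbatim. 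Your device of bounding the covered edges by $\sum_{v\in A_{j+1}\setminus A_j}\deg(v)$ is an equivalent counting step (both are controlled by $\Delta(\HH_n)\,|A_{j+1}\setminus A_j|$), so nothing is lost or gained there. What your write-up adds is a self-contained treatment of the heterogeneous HSBM case: the per-family pigeonhole (selecting one of the $L+1$ homogeneous families carrying a $\lambda/(L+1)$-fraction of its own boundary, whence a constant-factor multiplicative increment of the corresponding monotone coordinate, each of which can occur only $O(\log n)$ times) is a legitimate way to recover the sharp $O(\log n)$ diameter when $\max_\ell\omega_{n,\ell}/\min_\ell\omega_{n,\ell}\to\infty$, and it is precisely the content the paper delegates to the citation. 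If you carry this out, do make the two points you already flag explicit: the within-cluster family can only be selected when $\partial^{(\ell)}A_j\neq\emptyset$ (so the coordinate is well-defined and nonzero), and the cross-cluster family's potential boundary count $\gtrsim\min\{|A_j|,|A_j^\complement|\}\,n^{\mM-1}$ needs a short geometric argument since not every $\mM$-subset is a candidate cross edge.
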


\begin{remark}
	The sparsity conditions in Proposition~\ref{prop:rg} are optimal in the leading-order asymptotics. For either homogeneous NURHM ($\xi_{n, +} = \xi_{n, -}$) or HSBM, Assumption~\ref{ass:topology} holds as long as $(\log n)^5/\xi_{n, -}\to 0$ or $(\log n)^5/\zeta_{n, -}\to 0$. Meanwhile, the MLE exists only if the graph sequence is asymptotically connected, and for \ER graphs this occurs only if $\xi_{n,-}>\log n$.  
\end{remark}

Combining all the results, we arrive at Figure~\ref{fig:ass}~(b), which characterizes the implications of the assumptions used in the uniform consistency analysis under both deterministic and random designs. In particular, we have the following uniform consistency result under random designs. 

\begin{theorem}[Uniform consistency of the MLE under random designs]\label{thm:randdesign}
	Suppose the comparison graph sequence $\HH_n$ and the covariates $\{X_{\T, i}\}$ are generated from the random design as follows: (1). $\HH_n$ is sampled from NURHM with ${\xi^4_{n, +}(\log n)^5}/{\xi^5_{n,-}}\to 0$ or HSBM with $(\log n)^5/\zeta_{n, -}\to 0$; and (2). $\{X_{\T, i}\}_{\T\in\EE, i\in \T}$ are i.i.d. random vectors independent of $\HH_n$ and $n$ with a nondegenerate covariance matrix.
	Under Assumptions~\ref{ass:edgesize} and \ref{ass:unif-bdd}, a.s., the MLEs computed based on $\HH_n$ and $\{X_{\T, i}\}_{\T\in\EE, i\in \T}$ uniquely exist for all large $n$ and are uniformly consistent. 
\end{theorem}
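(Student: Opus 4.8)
The plan is to treat Theorem~\ref{thm:randdesign} as a synthesis result: under the stated random design I would verify that every deterministic hypothesis of Theorem~\ref{thm:main} that is not already assumed --- namely Assumption~\ref{ass:aep} (AEP), Assumption~\ref{ass:cosine} (incoherence), Assumption~\ref{ass:minK} (nondegeneracy), and Assumption~\ref{ass:topology} (diameter control) --- holds almost surely for all large $n$, and then invoke Theorem~\ref{thm:main} to obtain unique existence together with the rate bounds \eqref{rates}, which already force $\|\uu-\u^*\|_\infty\to 0$ and $\|\vv-\v^*\|_\infty\to 0$. Assumptions~\ref{ass:edgesize} and \ref{ass:unif-bdd} are assumed outright (the latter ensures the i.i.d.\ covariates have bounded support, which is compatible with having a nondegenerate covariance), so no work is needed there. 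Since the covariates are independent of $\HH_n$, I would split the argument into two conditionally independent stages: first extract the graph quantities from the hypergraph randomness, then condition on a good graph realization and treat the covariate randomness.

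For the graph stage I would apply Proposition~\ref{prop:rg}. The NURHM condition ${\xi^4_{n, +}(\log n)^5}/{\xi^5_{n,-}}\to 0$ (resp.\ the HSBM condition $(\log n)^5/\zeta_{n,-}\to 0$) yields Assumption~\ref{ass:topology} almost surely, together with $h_{\HH_n}\gtrsim \xi_{n,-}$ (resp.\ $\zeta_{n,-}$). Two auxiliary facts are needed downstream: $h_{\HH_n}/\log n\to\infty$ and $N/n\to\infty$. The first holds because $\xi_{n,+}\geq\xi_{n,-}$ forces $(\log n)^5/\xi_{n,-}\to 0$ (and directly in the HSBM case), so $h_{\HH_n}\gtrsim\xi_{n,-}\gg(\log n)^5$; the second follows from a Chernoff bound giving $N\gtrsim n\xi_{n,-}$ (resp.\ $n\zeta_{n,-}$), whence $N/n\gtrsim\xi_{n,-}\to\infty$. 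I would fix a probability-one graph event on which all of these hold for all large $n$.

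For the covariate stage I would first note that hypothesis (2) of the theorem is precisely the setting of Remark~\ref{rem:3}, so Proposition~\ref{prop:1+} applies and Assumption~\ref{ass:avoidpa} holds; moreover each row of $\bm K_n$, being a difference of two i.i.d.\ covariate vectors, has mean zero, so $\bm K_n$ has mean zero. Conditioning on the good graph event, Proposition~\ref{prop:dl} then applies (mean-zero $\bm K_n$, Assumption~\ref{ass:avoidpa}, and $N/n\to\infty$), giving Assumptions~\ref{ass:cosine} and \ref{ass:minK} almost surely; here the bounded edge-size fact $N_\br\asymp N$ turns $\sigma_{\min}(\bm K_n)\gtrsim\sqrt N$ into the nondegeneracy bound and $\sqrt{(n+d)/N}\to 0$ into incoherence. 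In parallel, Proposition~\ref{prop:at_verify} applies because Assumptions~\ref{ass:avoidpa} and \ref{ass:unif-bdd} hold and $h_{\HH_n}/\log n\to\infty$, yielding that $(\HH_n,\mathcal X)$ is an AEP, i.e.\ Assumption~\ref{ass:aep}, almost surely for all large $n$.

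Finally I would combine the almost-sure events. The statements carrying failure probability $n^{-2}$ (those in Proposition~\ref{prop:dl} and in the AEP) are upgraded to hold for all large $n$ by Borel--Cantelli, since $\sum_n n^{-2}<\infty$. Intersecting the probability-one graph event with the probability-one covariate events produces a single full-probability event on which Assumptions~\ref{ass:edgesize}--\ref{ass:topology} hold for all large $n$; on that event Theorem~\ref{thm:main} delivers unique existence and uniform consistency. I expect the main obstacle to be the careful bookkeeping across the three nested layers of randomness --- graph, covariates, and comparison outcomes (the last handled internally by Theorem~\ref{thm:main} via a tower/Fubini argument) --- and, within that, confirming that the two derived prerequisites $N/n\to\infty$ and $h_{\HH_n}/\log n\to\infty$ genuinely follow from the leading-order sparsity conditions rather than needing to be imposed separately.
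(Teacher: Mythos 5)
Your proposal is correct and follows essentially the same route as the paper: the paper's own proof simply cites Propositions~\ref{prop:at_verify}, \ref{prop:1+}, \ref{prop:dl}, and \ref{prop:rg} together with Remark~\ref{rem:3} to verify Assumptions~\ref{ass:aep}, \ref{ass:cosine}, \ref{ass:minK}, and \ref{ass:topology} almost surely and then invokes Theorem~\ref{thm:main}. Your additional checks that $h_{\HH_n}/\log n\to\infty$ and $N/n\to\infty$ follow from the sparsity conditions, and the Borel--Cantelli bookkeeping, are correct details that the paper leaves implicit.
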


\section{Numerical study: ATP tennis data}\label{sec:atp}

We apply the $\PLDC$ model to analyze an ATP tennis dataset collected over an extended period; more numerical results on synthetic and horse-racing data can be found in Sections B2-B3 of the supplementary file. To find the MLE in the $\PLDC$ model, we use Algorithm~\ref{alg:1} with $\e =  10^{-10}$ and initialization $\bm 0$. 

Tennis ranking data are frequently analyzed using \checkedd{pairwise comparison} models such as the BT model and its variants. One criticism of these analyses is that age is not taken into account while comparing different players; an older player's decline may be accompanied by a younger player's rise. To mitigate this, we incorporate covariates constructed based on players' ages to account for the age effects when comparing players' performances.

The dataset was collected by Jeff Sackmann and is openly accessible in this GitHub repository\footnote{\href{https://github.com/JeffSackmann/tennis_atp}{https://github.com/JeffSackmann/tennis-atp}}. Since the data in the early days is incomplete, we focus on the relatively recent period between {January 1980 and May 2024}. The dataset consists of tour-level main draw matches and tour-level qualifying and challenger main draw matches. We focus on the former to eliminate additional heterogeneous effects besides time, as its data size is sufficiently large to draw interesting conclusions. After preprocessing, we are left with $n=1,803$ players, each participating in at least $10$ matches in the dataset. The total number of matches is $N = 136,935$. 

We first identify appropriate covariates to represent the age effects. As a rule of thumb, a player's performance initially improves with experience over time and then declines due to aging. To model this phenomenon, we construct covariates using Gaussian radial basis functions $f(t; a, \lambdaf) = \exp\{{-\lambdaf (t-a)^2}\}$ with varying mean and scale parameters $(a, \lambdaf) \in \{25, 30, 35\}\times\{0.01, 0.03\}$, where $a$ denotes the potential peak time and $\lambdaf$ denotes the potential decreasing rate away from the peak. Although different individuals may have different peaks and rates, we only consider these effects at a population level. \checkedd{There are 64 candidate models and we apply the BIC criterion to select the best model; see Figure~\ref{fig:tennis} (a)-(b).}

We use the selected covariates to fit a $\PLDC$ model. The resulting normalized log-likelihood is $-0.594$, representing a $1.4\%$ improvement over the BT model, which has a log-likelihood of $-0.603$.  This improvement is not so small given the existing number of parameters in the BT model.  The resulting ranking statistics of the top 10 players are given in the table in Figure~\ref{fig:tennis} (c). Additionally, based on the estimated model parameters $\vv$, we plot the estimated age effect in Figure~\ref{fig:tennis} (b). Furthermore, we provide the complete log score dynamics of the top ten players throughout their careers in Figure~\ref{fig:dc}.

\begin{figure}[htbp]
	\centering
	\begin{minipage}{0.325\linewidth}
		\centering
		\includegraphics[width=\linewidth,  trim={0.5cm 0.1cm 2.5cm 2cm}, clip]{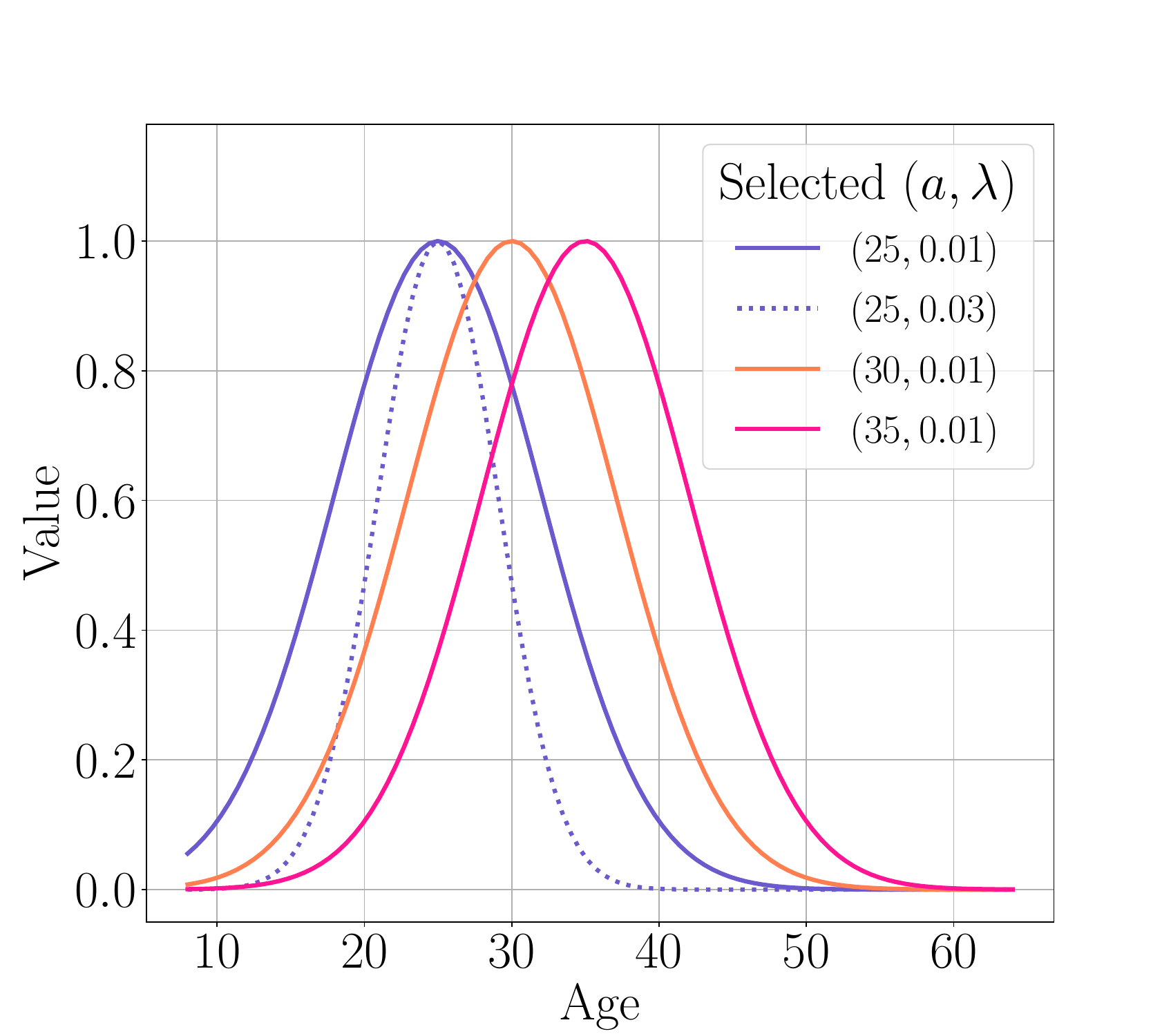}
		\caption*{(a)}\label{111}
	\end{minipage}\hfill 
	\begin{minipage}{0.325\linewidth}
		\centering
		\includegraphics[width=\linewidth,   trim={0.5cm 0.1cm 2.5cm 2cm}, clip]{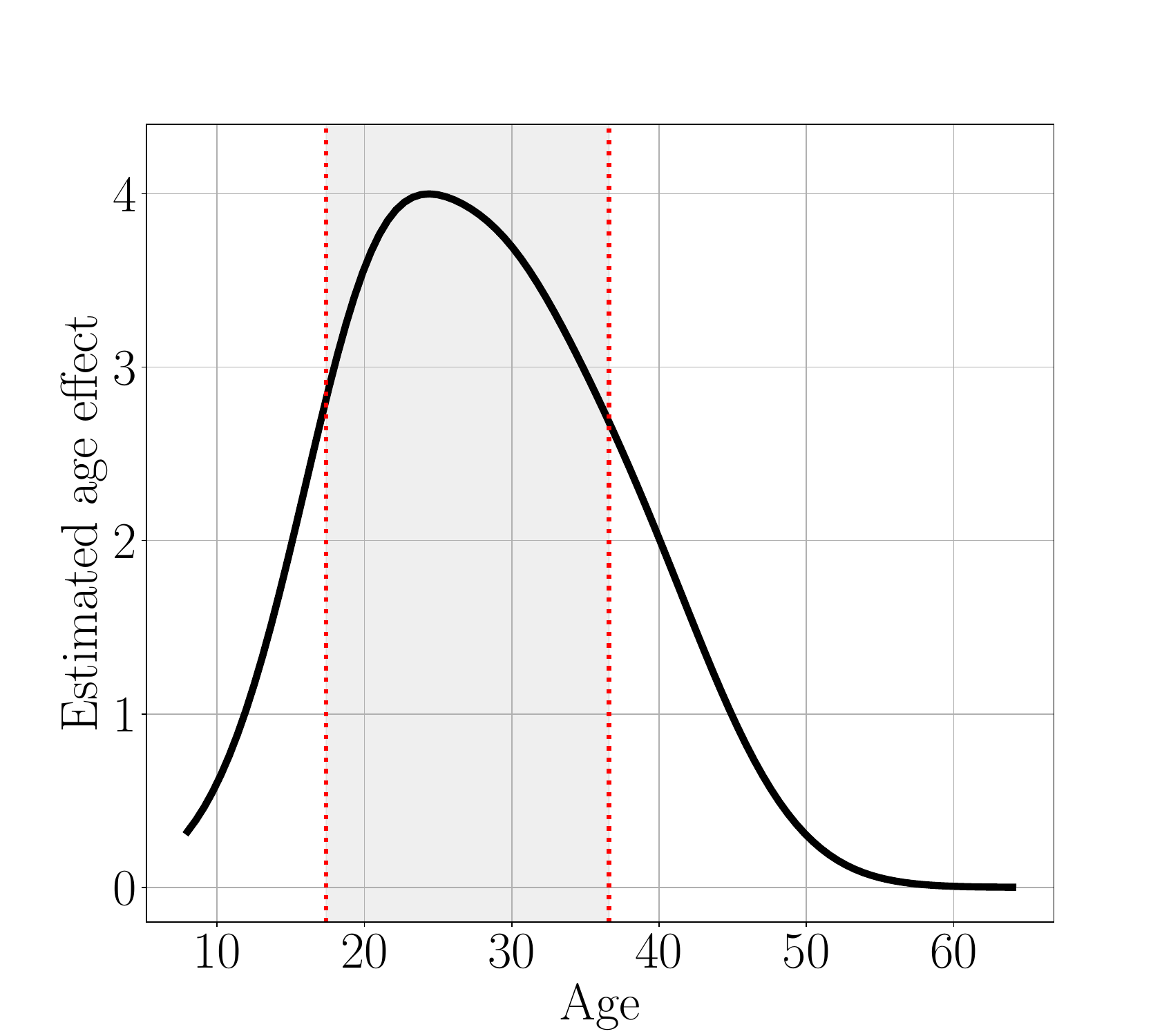}
		\caption*{(b)}
	\end{minipage}\hfill
	\begin{minipage}{0.325\textwidth}
		\vspace{3mm} 
		\resizebox{\textwidth}{!}{
			\begin{tabular}{ccc}
				Player & $\text{Rank}_{\text{BT}}$ & $\text{Rank}_\PLDC$  \\ \hline
				Roger Federer     & 3    & 1     \\
				Novak Djokovic     & 1   & 2  \\
				Rafael Nadal    & 2   & 3   \\
				Jimmy Connors     & 23   & 4     \\
				Andre Agassi      & 13   & 5   \\
				Ivan Lendl     & 7    & 6     \\
				Carlos Alcaraz     & 4   & 7   \\
				Pete Sampras      & 9   & 8    \\
				Andy Murray     & 5  & 9   \\
				John McEnroe     & 11    & 10    
			\end{tabular}
		}
		\caption*{~ \\ (c)}
	\end{minipage}
	\caption{
		(a): Four Gaussian radial basis functions in the best model selected by BIC. 
		(b): The estimated $\text{age effect} (t) =6.238\cdot\exp\{{-0.01\cdot (t-25)^2}\}-0.865\cdot\exp\{{-0.03\cdot (t-25)^2}\}-3.338\cdot\exp\{{-0.01\cdot (t-30)^2}\}+3.319\cdot\exp\{{-0.01\cdot (t-35)^2}\}$. 	The gray region bounded by the two vertical dashed lines captures $99\%$ of the age range of the players in the dataset.
		(c): Top ten players ranked by the $\PLDC$ model based on their estimated utilities and their ranks in the BT model. }
	\label{fig:tennis}
\end{figure}

The ranking results in Figure~\ref{fig:tennis} show a noticeable overlap in the top ten players selected by both BT and 
$\PLDC$. However, BT favors younger players while $\PLDC$ mitigates this bias, identifying legends like Jimmy Connors despite generational differences.  The estimated population age effects take an asymmetric hill shape, peaking around $24.3$ and slowly decreasing afterward. From this, we can discern the peak time for each player. For instance, in Figure~\ref{fig:dc}, the peak times for Federer and Djokovic are near 2006 and 2011, respectively, consistent with the information in Wikipedia. Even more surprisingly, our result successfully finds the peak of Jimmy Connors in the mid-1970s, which is out of range of the training data. These results suggest that the $\PLDC$ model provides a useful tool to compare players at different times on a fairer scale than the BT model. 

Figure~\ref{fig:dc} provides an interesting summary of history of tennis competition as learned from the data\footnote{\href{https://en.wikipedia.org/wiki/List_of_ATP_number_1_ranked_singles_tennis_players}{https://en.wikipedia.org/wiki/List\_of\_ATP\_number\_1\_ranked\_singles\_tennis\_players}}. In 1980, Jimmy Connors dominated until he was challenged by Ivan Lendl and John McEnroe. The turning point came in 1983 when Connors lost his No. 1 spot permanently. Lendl and McEnroe then led for seven years. In the early 1990s, Pete Sampras rose to the top, soon facing his longtime rival, Andre Agassi, in a highly competitive era. This lasted for more than a decade until the winter of 2004. The ``Big Four'': Roger Federer, Rafael Nadal, Novak Djokovic, and Andy Murray began their 18-year reign. Recently, Carlos Alcaraz has emerged as a potential successor, but time will tell his final place in history.

\section{Discussion}

In this paper, we introduced the $\PLDC$ statistical ranking model, which incorporates dynamic covariates within the PL framework. We formulated the model, characterized its identifiability conditions, and examined various aspects of the MLE, including its unique existence, uniform consistency, and algorithms. Numerical experiments are provided to validate our findings and demonstrate the model's application to real-world ranking datasets.

As a next step, one may consider conducting statistical inference for the MLE in $\PLDC$. In our case, the MLE is biased. Due to the interplay between local and global parameters, a debiasing procedure may be needed to establish an asymptotic normality result for inference. 
However, this procedure is quite technical and demands meticulous asymptotic analysis.  In fact, conducting inference in heterogeneous graphs with varying node degrees is an open problem in the statistical ranking literature.  For model interpretation, one might be interested in covariate selection and causal inference in the presence of unobserved confounders.  We leave these topics for future investigation.

	\medskip
	\noindent 
	\textbf{Supplementary Material}: The supplementary file provides a list of notations, additional numerical results,  and all the technical proofs.

\newpage

\begin{appendix}

\section*{Supplementary}
In this supplementary file, we provide additional details on the implementation and analysis of the alternating maximization algorithm, along with numerical experiments on synthetic and horse racing data. We also present detailed proofs of all technical results in the manuscript. For the reader's convenience, we list the important notation used in both the main article and this supplement in Section \ref{sec:notation}. The details of Algorithm \ref{alg:1} and additional numerical results are included in Section \ref{sec:addnum}. Technical proofs related to model identifiability are provided in Section \ref{proof:identifiability}, while those related to parameter estimation are included in Section \ref{proof:emle}. The proof of uniform consistency is given in Section \ref{sec:mainproof}. Additional illustrations of concepts related to graph topology are provided in Section~\ref{FFF}.
\section{Notation list}\label{sec:notation}
In this part, we summarize the important notation used throughout both the main article and the supplementary.\vspace{0.5cm}

{\bf Basic notation:}
\begin{itemize}
	\item $[n]: =\{1,\ldots,n\}$, where $ n\in\mathbb N $ is the number of subjects.
	\item $\mathbb I_{B}(\cdot)$ is the indicator function on a set $B$.
	\item $O(\cdot)$ and $o(\cdot)$ are the Bachmann--Landau asymptotic notation. 	$\lesssim$ and $\gtrsim$ represent the asymptotic inequality relations, and $\asymp$ if both $\lesssim$ and $\gtrsim$ hold.
	
	\item $\u\in\R^n$ is a local parameter in the $\PLDC$ model. $\v\in\R^d$ is a global parameter in the $\PLDC$ model. $\t = (\u^\top,\v^\top)^\top$.
	
	\item  $\bm\t^* = [(\u^*)^\top, (\bm v^*)^\top]^\top$ and $\tt = (\widehat{\bm u}^\top, \widehat{\bm v}^\top)^\top$ representing true value and maximum likelihood estimation respectively.
	\item $N$: the number of total comparison data.
	\item $\T_i, i\in[N]$: the set collecting the subjects that participate the $i$th comparison.
	\item $m_i$: the number of subjects that participate the $i$th comparison, namely $m_i = |\T_i|.$ 
	The upper bound of $m_i$ is $M$.
	\item  $\pi_i(j)$ represents the item ranked $j$ in $\T_i$ and $r_i(k)$ represents the rank of item $k$ in $\T_i$.
	\item $X_{\T, j}\in\R^d$ is the covariate of $j$th item in  $\T$; $s_j(\t;\T):= u_j + X^\top_{\T,j}\v$ is the score of item $j$ in $\T$.
	\item $s_{ij}:=s_{\pi_i(j)}(\t; \T_i) = u_{\pi_i(j)} + X_{\T_i, \pi_i(j)}^\top\bm\v$ is score of the item ranked $j$ in $\T_i$.
	\item $R:=\sup_{n\in\N}\max\left\{\|\t^*\|_\infty, \max_{i\in [N], j\in \T_i}\|X_{\T_i, j}\|_1\right\}$ is the upper bound.

\end{itemize}

\noindent{\bf Notation used in hypergraph:}
\begin{itemize}
	\item $\VV=[n] := \{1, \ldots, n\}$ is the set of items, and $\mathscr P(\VV)$ is the power set of $\VV$.
	\item  $\mathcal S(\T)$ is the set of permutation on $\T$, for $\T\subseteq\VV$.
	
	\item ${\V\choose m}= \{\T\in\mathscr P(\VV): |\T|=m\}$ is the set of $\muu$-edge.
	\item $\HH(\VV, \EE)$ is a comparison hypergraph with $N$ edges, where $\EE = \{\T_i\}_{i\in [N]}\subseteq \mathscr{P}(\VV)$.
	\item $\deg(k) := |\{\T_i\in \EE: k\in \T_i\}|$ is the degree of $k$th item;
	$\partial U := \{\T\in \EE: e\cap U\neq\emptyset, e\cap U^\complement\neq\emptyset\}$ is the boundary of subset of item $U$;
	$h_\HH$ is the modified Cheeger constant defined in Definition \ref{def:ch}. 	
	\item $\A_\HH(\lambda)$ is the set of $\lambda$-weakly admissible sequences and $\diam(\A_\HH(\lambda))$ is the corresponding diameter, that is, $\diam(\A_\HH(\lambda))=\max_{\{A_j\}_{j=1}^J\in\A_\HH(\lambda)}J$.
	\item $\HH_\br$, $\EE_\br$ and ${N_{\br}}$ are the graph, edges and the number of edges after breaking respectively.  
	\item $\Q $: the incidence matrix of $\HH_\br$; $\K$: the matrix of  the covariate differences between subjects after breaking.
	\item $\bm{W}_n = [\Q^\top,\K]\in\R^{\M\times (n+d)}$ is the augment matrix corresponding to $\HH_\br$.
	\item {The $\range(\Q^\top)$ and $\range(\K)$ are the linear space spanned by the columns of $\Q^\top$ and $\K$.}
	\item $p_n^{(m)}$: the lower bound of the probability of an $m$-size edge in NURHM; $q_n^{(m)}$: the upper bound of the probability of an $m$-size edge in NURHM.
	\item In NURHM, we define $\xi_{n,-}:=\sum_{m=2}^M n^{m-1}p_n^{(m)}$ and $ \xi_{n,+}:=\sum_{m=2}^M n^{m-1}q_n^{(m)}$.
	\item $\omega_{n, \ell}$, $\ell\in[L]$: the probability of an edge in $\ell$th community in HSBM.
	\item In HSBM, we define $\zeta_{n,-}:=n^{M-1}\min_{0\leq \ell \leq L}\omega_{n, \ell}$.
	\end{itemize}

\section{Additional numerical results}\label{sec:addnum}

In this section, we first provide additional details of Algorithm \ref{alg:1} concerning its implementation. Then, we verify and apply the results established in the main manuscript. This includes verifying the uniform consistency of the MLE using synthetic data and applying the $\PLDC$ model to a horse-racing dataset with in-field and off-field information.  In all experiments, we use Algorithm~\ref{alg:1} with stopping parameter {$\e =  10^{-10}$} and the initialization for $\bm\t^{(0)} =\bm 0$.

\subsection{Details of Algorithm \ref{alg:1}}

We now describe the details for implementing steps 2 and 3 in Algorithm \ref{alg:1}. Then, we discuss the overall computational complexity for its implementation. 

\smallskip
\underline{\bf \texorpdfstring{Fix $\v$ and update $\u$}{}.} 
We start by fixing $\v$ and optimize over $\u$. We utilize an explicit iteration method that generalizes the Minorize--Maximization (MM) algorithm in \cite{MR2051012}. Recall the likelihood function $l(\u, \v)$:
\begin{align*}
	l(\u, \v) = \sum_{i\in [N]}\sum_{j\in [m_i]}\left[s_{ij}(\u, \v)-\log(\sum_{t=j}^{m_i}\exp{s_{it}(\u, \v)})\right],
\end{align*}
where $s_{ij}(\u, \v) = u_{\T_i,\pi_i(j)} + X_{\T_i, \pi_i(j)}^\top\bm\v$. Here, we write $\u, \v$ in place of $\t$ to emphasize their dependence.

To solve the subproblem $\bm u^{(\tauf+1)}= \argmax_{\bm u: \bm 1^\top \bm u = 0}l(\bm u, \bm v^{(\tauf)})$, we consider a sequence of minorizing functions $\{Q(\u \mid\u^{(\tauf,\rhof)}, \v^{(\tauf)})\}_{\tauf, \rhof}$ constructed based on a sequence $\{\u^{(\tauf,\rhof)}\}_{\rhof\in\N}$: 
{\fontsize{10.8}{12} \begin{align*}
		&Q(\u \mid \u^{(\tauf,\rhof)} , \v^{(\tauf)}) \\
		=& \sum_{i\in [N]}\sum_{j\in [m_i]}\Bigg[s_{ij}(\u, \v^{(\tauf)})-\frac{\sum_{t=j}^{m_i}\exp{s_{it}(\u, \v^{(\tauf)})}}{\sum_{t=j}^{m_i}\exp{s_{it}(\u^{(\tauf,\rhof)}, \v^{(\tauf)})}}+1-\log\bigg(\sum_{t=j}^{m_i}\exp{s_{it}(\u^{(\tauf,\rhof)}, \v^{(\tauf)})}\bigg)\Bigg].
\end{align*}}
It is straightforward to check that $Q(\u \mid\u^{(\tauf,\rhof)} , \v^{(\tauf)})$ are strictly concave whenever $\HH$ is connected. Moreover, since 
$	-\log(y)\ge -\log(x)+1-(y/x)$ holds for any $ x,y >0$,  we obtain for all $\u\in\R^n$, 
$l(\u, \v^{(\tauf)})\ge Q(\u \mid\u^{(\tauf,\rhof)} , \v^{(\tauf)}), $
with equality if $\u = \u^{(\tauf,\rhof)}.$ If we set the initial $ \u^{(\tauf,0)} = \u^{(\tauf)}$
and define ${\u}^{(\tauf,\rhof)}$ recursively as 
\begin{align}\label{mm}
	{\u}^{(\tauf,\rhof+1)} = \argmax_{\bm 1^\top\u = 0}Q(\u \mid\u^{(\tauf,\rhof)} , \v^{(\tauf)}), 
\end{align} 
then we have the following result.
\begin{proposition}\label{prop:mm}
	Under Assumption~\ref{ass:ue}, for every $\tauf$ and $\u^{(\tauf,\rhof)}$ defined in \eqref{mm}, 
	$\u^{(\tauf+1)} = \lim_{\rhof\to\infty}\u^{(\tauf,\rhof)}.$
\end{proposition}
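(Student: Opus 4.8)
The plan is to treat the inner recursion \eqref{mm} as a classical Minorize--Maximization scheme for the fixed-$\v$ objective $\u \mapsto l(\u, \v^{(\tauf)})$ on the affine subspace $\{\u : \bm 1^\top \u = 0\}$, and to prove global convergence to its unique maximizer $\u^{(\tauf+1)}$. First I would record the two structural facts already noted in the text: the minorization $l(\u,\v^{(\tauf)}) \geq Q(\u \mid \u^{(\tauf,\rhof)}, \v^{(\tauf)})$ with equality at $\u = \u^{(\tauf,\rhof)}$, and the strict concavity of each $Q(\cdot \mid \u^{(\tauf,\rhof)}, \v^{(\tauf)})$, valid because Assumption~\ref{ass:ue} forces $\HH$ to be connected (a utility vector that is constant on the components of a disconnected graph, nonzero and normalized to $\bm 1^\top \u = 0$, admits no violated comparison and would contradict the assumption). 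Combined with smoothness, the gap $l(\u,\v^{(\tauf)}) - Q(\u \mid \u_0, \v^{(\tauf)}) \geq 0$ with equality at $\u_0$ makes $\u_0$ a minimizer of a nonnegative smooth function, so the subspace gradients match: $\nabla_{\u} l(\u_0, \v^{(\tauf)}) = \nabla_{\u} Q(\u_0 \mid \u_0, \v^{(\tauf)})$.

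Next I would carry out the standard ascent and compactness bookkeeping. The chain
\[
l(\u^{(\tauf,\rhof+1)}, \v^{(\tauf)}) \geq Q(\u^{(\tauf,\rhof+1)} \mid \u^{(\tauf,\rhof)}, \v^{(\tauf)}) \geq Q(\u^{(\tauf,\rhof)} \mid \u^{(\tauf,\rhof)}, \v^{(\tauf)}) = l(\u^{(\tauf,\rhof)}, \v^{(\tauf)})
\]
shows that $\{l(\u^{(\tauf,\rhof)}, \v^{(\tauf)})\}_\rhof$ is nondecreasing. Since by Theorem~\ref{thm:existence} the map $\u \mapsto l(\u, \v^{(\tauf)})$ is strictly concave, and (applying Assumption~\ref{ass:ue} to directions $(\bm w^\top, \bm 0^\top)^\top$) coercive on the subspace, it is bounded above, admits the unique maximizer $\u^{(\tauf+1)}$, and all iterates remain in the compact superlevel set $\{\u : \bm 1^\top \u = 0,\ l(\u,\v^{(\tauf)}) \geq l(\u^{(\tauf,0)}, \v^{(\tauf)})\}$. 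Hence the function values converge to some $l^\infty$ and the bounded sequence $\{\u^{(\tauf,\rhof)}\}$ has at least one limit point.

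The crux is identifying every limit point with $\u^{(\tauf+1)}$. Let $\u^{(\tauf,\rhof_k)} \to \u^\star$ along a subsequence. From $l(\u^{(\tauf,\rhof_k+1)}, \v^{(\tauf)}) \geq Q(\u \mid \u^{(\tauf,\rhof_k)}, \v^{(\tauf)})$, which holds for every feasible $\u$, I would let $k \to \infty$, using convergence of the function values to $l^\infty = l(\u^\star,\v^{(\tauf)})$ together with joint continuity of $Q$, to obtain $l(\u^\star, \v^{(\tauf)}) \geq Q(\u \mid \u^\star, \v^{(\tauf)})$ for all feasible $\u$. Since $Q(\u^\star \mid \u^\star, \v^{(\tauf)}) = l(\u^\star, \v^{(\tauf)})$, this means $\u^\star$ maximizes $Q(\cdot \mid \u^\star, \v^{(\tauf)})$; its first-order condition combined with the gradient tangency from the first step gives $\nabla_\u l(\u^\star, \v^{(\tauf)}) = 0$ on the subspace. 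Strict concavity then forces $\u^\star = \u^{(\tauf+1)}$. As every convergent subsequence shares this limit and the iterates lie in a compact set, the whole sequence converges, yielding $\u^{(\tauf+1)} = \lim_{\rhof \to \infty} \u^{(\tauf,\rhof)}$.

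I expect the main obstacle to be the compactness step, namely extracting genuine boundedness of the iterates from Assumption~\ref{ass:ue}: the assumption only guarantees a single violated comparison per nonzero direction, so converting this into ray-coercivity of $l(\cdot, \v^{(\tauf)})$ and then into bounded superlevel sets (via the recession argument for concave functions) is where the care lies. The ascent property, the gradient matching, and the fixed-point identification are then routine consequences of the MM structure.
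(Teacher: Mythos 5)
Your proof is correct, but it takes a genuinely different route from the paper's. You run the classical qualitative MM convergence argument: ascent plus coercivity gives bounded iterates; a subsequence limit $\u^\star$ is shown to be a fixed point of the minorize--maximize map via the inequality $l(\u^{(\tauf,\rhof_k+1)},\v^{(\tauf)})\geq Q(\u\mid\u^{(\tauf,\rhof_k)},\v^{(\tauf)})$ passed to the limit; the tangency of gradients of $l$ and $Q(\cdot\mid\u^\star,\v^{(\tauf)})$ at $\u^\star$ then turns the first-order condition for $Q$ into one for $l$, and strict concavity identifies $\u^\star$ with $\u^{(\tauf+1)}$. The paper instead proves a \emph{quantitative} statement: after establishing the same boundedness of iterates via ascent and coercivity, it restricts to the compact set $\|\u\|_\infty\leq C$, extracts a strong-concavity constant $\sigma_1$ for $l(\cdot,\v^{(\tauf)})$ and a smoothness constant $\sigma_2$ for $Q$, and combines a Polyak--Lojasiewicz inequality with a sufficient-increase bound to obtain a geometric contraction $l(\u^{(\tauf+1)},\v^{(\tauf)})-l(\u^{(\tauf,\rhof+1)},\v^{(\tauf)})\leq(1-\sigma_2/(4\sigma_1))\,(l(\u^{(\tauf+1)},\v^{(\tauf)})-l(\u^{(\tauf,\rhof)},\v^{(\tauf)}))$, from which convergence of the iterates follows again by strong concavity. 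Your argument is shorter and avoids computing curvature constants, but it yields no rate; the paper's linear rate is not incidental, since it is what underwrites the $\mathcal{O}(\log(1/\e'))$ inner-iteration count in the complexity analysis of Algorithm~\ref{alg:1}. Your handling of the one delicate point --- deducing coercivity of the slice $\u\mapsto -l(\u,\v^{(\tauf)})$ on $\bm 1^\top\u=0$ by applying Assumption~\ref{ass:ue} to directions $(\bm w^\top,\bm 0^\top)^\top$ and using compactness of the unit sphere in the subspace --- is sound and matches what the paper implicitly relies on when it invokes the coercivity established in the proof of Theorem~\ref{thm:existence}.
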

In practice, a more explicit update rule for ${\u}^{(\tauf,\rhof+1)}$ can be obtained by setting \\$\nabla_\u Q(\u \mid\u^{(\tauf,\rhof)}, \v^{(\tauf)})=\bm 0$:
\begin{align}
	{u}_k^{(\tauf,\rhof+1)} = \log\{\deg(k)\} - \log\left[\sum_{i:k\in \T_i}\sum_{j\in [r_i(k)]}\frac{\exp{X^\top_{\T_i,k}\v^{(\tauf)}}}{\sum_{t=j}^{m_i}\exp{s_{it}(\u^{(\tauf,\rhof)}, \v^{(\tauf)})}}\right] \quad \ k \in [n]\label{eq:alt_u},
\end{align}
followed by centering. An alternative way to obtain \eqref{eq:alt_u} is to directly set $\nabla_\u Q(\u \mid\u^{(\tauf,\rhof)}, \v^{(\tauf)})=\bm 0$ and rearrange terms into a set of fixed point equations, similar to the Zermelo's algorithm in the BT model with no covariates \citep{newman2023efficient}. 

\smallskip

\noindent
\underline{\bf\texorpdfstring{Fix $\u$ and update $\v$.}{}}
Fixing $\bm{u}$, the model for $\bm{v}$ is similar to the logistic regression. 
Since $d$ is assumed to be fixed and does not grow with $n$, the subproblem $\bm v^{(\tauf+1)}\gets \argmax_{\bm v}l(\bm u^{(\tauf+1)}, \bm v)$ is a standard convex optimization problem in a fixed dimension. For convenience, we consider a damped Newton--Raphson procedure to find the minimizer of $\v$ conditional on the current estimate $\bm u$:
\begin{align}
	\v^{(\tauf,\rhof+1)} = \v^{(\tauf,\rhof)} - \nu^{(\tauf, \rhof)}\{\nabla^2_\v l(\u^{(\tauf)}, \v^{(\tauf,\rhof)})\}^{-1}\nabla_\v l(\u^{(\tauf)}, \v^{(\tauf,\rhof)}), \label{nt}
\end{align}
where $\nu^{(\tauf, \rhof)}>0$ is the step size, 
{\fontsize{10.5}{12}\begin{align*}
		\nabla_\v l(\u^{(\tauf)}, \v^{(\tauf,\rhof)}) &= \sum_{i\in [N]}\sum_{j\in [m_i]}\left(X_{\T_i, \pi_i(j)}-\frac{\sum_{t=j}^{m_i}\exp{s_{it}(\u^{(\tauf)}, \v^{(\tauf,\rhof)})}X_{\T_i, \pi_i(t)}}{\sum_{t=j}^{m_i}\exp{s_{it}(\u^{(\tauf)}, \v^{(\tauf,\rhof)})}}\right)\\
		\nabla^2_\v l(\u^{(\tauf)}, \v^{(\tauf,\rhof)}) &= -\sum_{i\in [N]}\sum_{j\in [m_i]}[X_{\T_i, \pi_i(j)}, \ldots, X_{\T_i, \pi_i(m_i)}][\text{diag}(\bm b_{ij})-\bm b_{ij}\bm b_{ij}^\top][X_{\T_i, \pi_i(j)}, \ldots, X_{\T_i, \pi_i(m_i)}]^\top,
\end{align*}}
where
\begin{align*}
	\bm b_{ij} = \frac{1}{\sum_{t=j}^{m_i}\exp{s_{it}(\u^{(\tauf)}, \v^{(\tauf,\rhof)})}}\left(\exp{s_{ij}(\u^{(\tauf)}, \v^{(\tauf,\rhof)})}, \ldots, \exp{s_{im_i}(\u^{(\tauf)}, \v^{(\tauf,\rhof)})}\right)^\top
\end{align*}
and $\text{diag}(\bm b_{ij})$ stands for the diagonal matrix with diagonal entries $\bm b_{ij}$. 
When $\nu^{(\tauf, \rhof)}\equiv 1$, this is the Newton--Raphson method.
A more advanced method involves selecting $\nu^{(\tauf, \rhof)}$ using the backtracking search. For further details on convergence analysis, see \cite{nesterov1994interior, nocedal1999numerical}. In our implementation, the step size parameters in updating $\v$ are set to be equal to $1$. (We observe that such a simple choice is sufficient to lead to fast convergence in our experiments.)
\smallskip

\noindent
\underline{\bf \texorpdfstring{Computational complexity}{}.} 
To find the overall computational complexity of Algorithm \ref{alg:1}, we first analyze the computational complexity of a single iteration, which involves updating both $\u$ and $\v$. These updates are approximately achieved using iterative algorithms (MM the Newton--Raphson algorithms), each requiring $\mathcal{O}(N)$ operations per update; see \eqref{eq:alt_u}-\eqref{nt}. Our proof of Proposition~\ref{prop:mm} shows that the MM algorithm has a linear convergence rate, while the Newton--Raphson algorithm typically exhibits quadratic local convergence. Assuming their respective convergence rate parameters do not depend on $n$ (these parameters are related to the smoothness and strict concavity of the normalized log-likelihood), the total number of updates required to achieve a tolerance $\e'$ within a single iteration of Algorithm \ref{alg:1} is $\mathcal{O}(\log (1/\e'))$.

On the other hand, due to the linear convergence of Algorithm \ref{alg:1}, the total number of iterations required for Algorithm \ref{alg:1} to achieve a global tolerance $\e$ is $\mathcal{O}(\log(1/\e))$. Therefore, the overall complexity of implementing Algorithm \ref{alg:1} is $\mathcal{O}(N\log (1/\e)\log(1/\e'))$. Setting $\e' = \e$, this complexity simplifies to $\mathcal{O}(N(\log (1/\e))^2)$, which remains effectively linear in $N$.

\subsection{Synthetic data}

We verify the uniform consistency of the MLE in the $\PLDC$ model using simulated data. To this end, we generate comparison graphs using the random graph models introduced in Section~\ref{sec:rg} and the comparison outcomes on edges using the $\PLDC$ model with three dynamic covariates ($d=3$). The edge sizes are randomly chosen between $2$ and $7$ in NURHM and fixed as $5$ in HSBM. For dynamic covariates, we simulate them as independent three-dimensional standard multivariate normal vectors for each object per edge. The true utility vector $\u^*$ is generated with each component selected uniformly in $[-0.5, 0.5]$ and centered to satisfy ${\bm 1}^\top \u^*=0$; the true model coefficient vector $\v^*$ is set to $\v^*=(1, -0.5, 0)^\top$. The total size of edges $N$ is chosen as: $N = 0.1n(\text{log}n)^3$ in NURHM, and $N = 0.07n^2$ in HSBM. More simulation details are given below. We repeat the experiment 300 times across a range of different $n\in\{200, 400, 600, 800, 1000\}$ and calculate $\|\uu-\u^*\|_\infty$ and $\|\vv-\v^*\|_\infty$. The simulation results are reported in Figure~\ref{fig:syn}. 

\begin{table}[h]
	\centering
	\begin{tabular}{c c c}
		Numbers of items $n$ & $ N $ in NURHM & $N$ in HSBM \\ \hline
		200     & 2,974    &  2,800 \\
		400     & 8,603    & 11,200\\
		600     & 15,706   & 25,200\\
		800     & 23,895   & 44,800 \\
		1000    & 32,961   & 70,000\\
	\end{tabular}
	\caption{Number of items $n$ and edges $N$ used in NURHM and HSBM.}
	\label{tab:num}
\end{table}

\begin{itemize}
	\item[(a).] NURHM: six groups are considered with $p_n^{(\muu)} = q_n^{(\muu)}$ for $\muu\in\{2,3,4,5,6,7\}$. The edges are equally allocated to each group.
	\item[(b).] HSBM: We consider two blocks $\VV = \VV_1\cup \VV_2$ with $|\VV_1| = \lfloor {n}/{3}\rfloor$ and $|\VV_2| = \lceil {2n}/{3} \rceil$, where $\lfloor\cdot\rfloor$ and $\lceil\cdot\rceil$ are the flooring and ceiling function, respectively. For each $i\in [N]$, the sampling probability of the $i$th edge is given as
	\begin{equation*}
		\P\{e_i\subseteq \VV_1\} : \P\{e_i\subseteq \VV_2\} : \P\{e_i\in \partial \VV_1\} = 5n : 20n : 4({\log}n)^3,
	\end{equation*}
	and each edges size $\mM$ is set to $\mM=5$. Once determining the community of $e_k$, we uniformly select an edge among all the possible edges in the corresponding community.
	
\end{itemize} 

\begin{figure}[htbp]
	\centering
	\begin{minipage}{0.24\linewidth}
		\centering
		\includegraphics[width=\linewidth, trim={0.3cm 0.6cm 1cm 0.5cm}, clip]{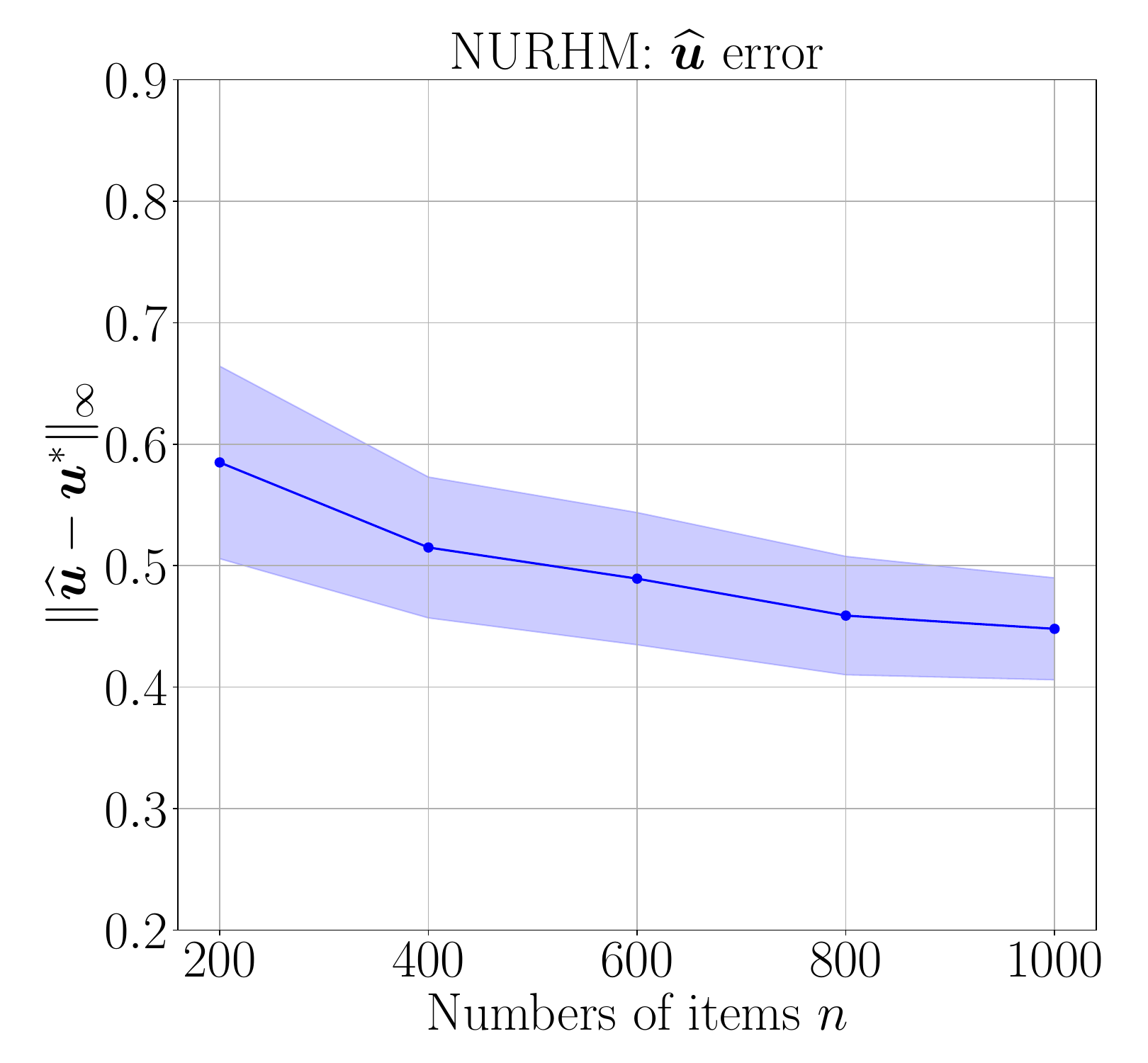}
		\caption*{(a)}
	\end{minipage}
	\begin{minipage}{0.24\linewidth}
		\centering
		\includegraphics[width=\linewidth, trim={0.0cm 0.6cm 1cm 0.5cm}, clip]{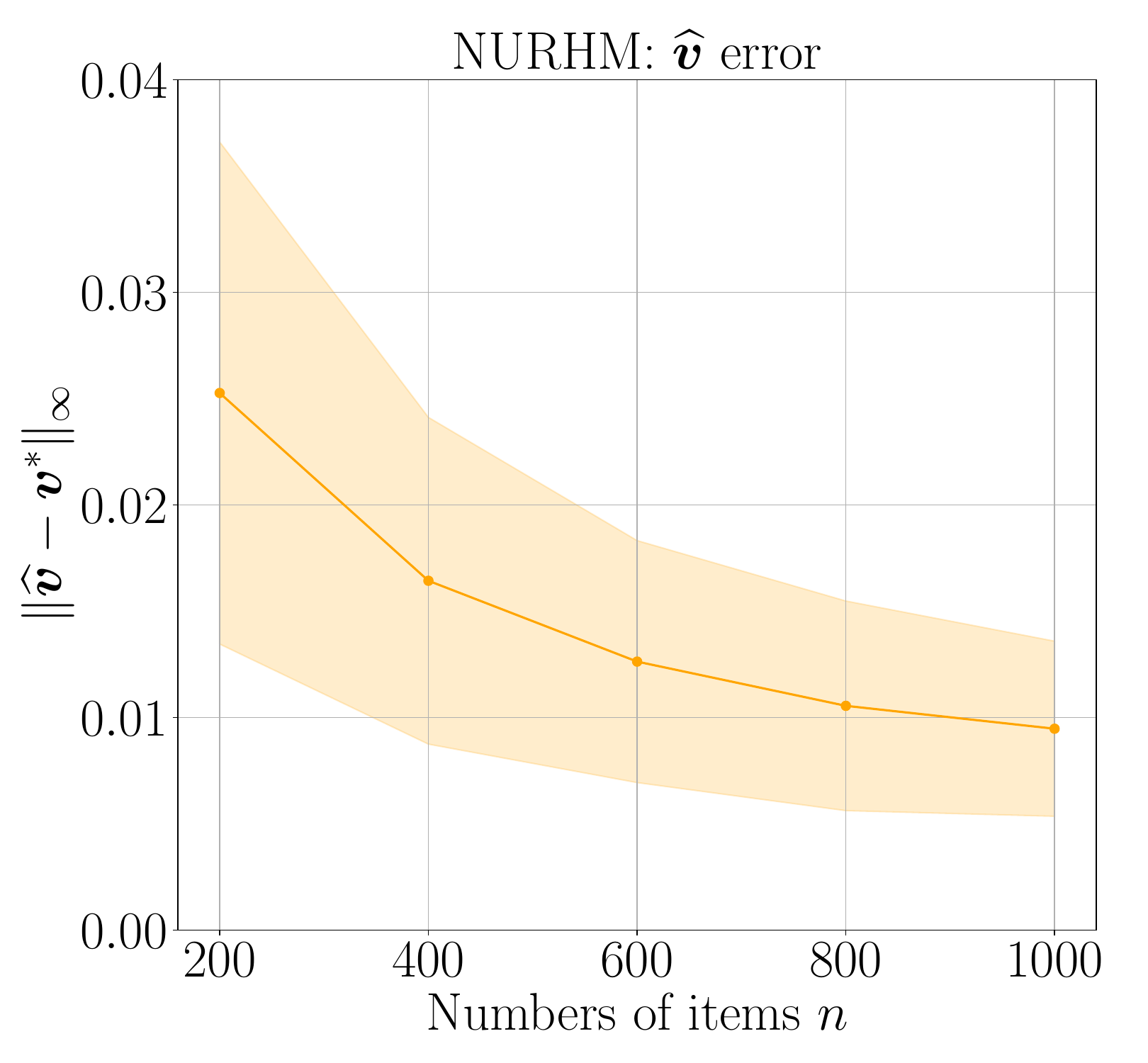}
		\caption*{(b)}
	\end{minipage}
	\begin{minipage}{0.24\linewidth}
		\centering
		\includegraphics[width=\linewidth, trim={0.3cm 0.6cm 1cm 0.5cm}, clip]{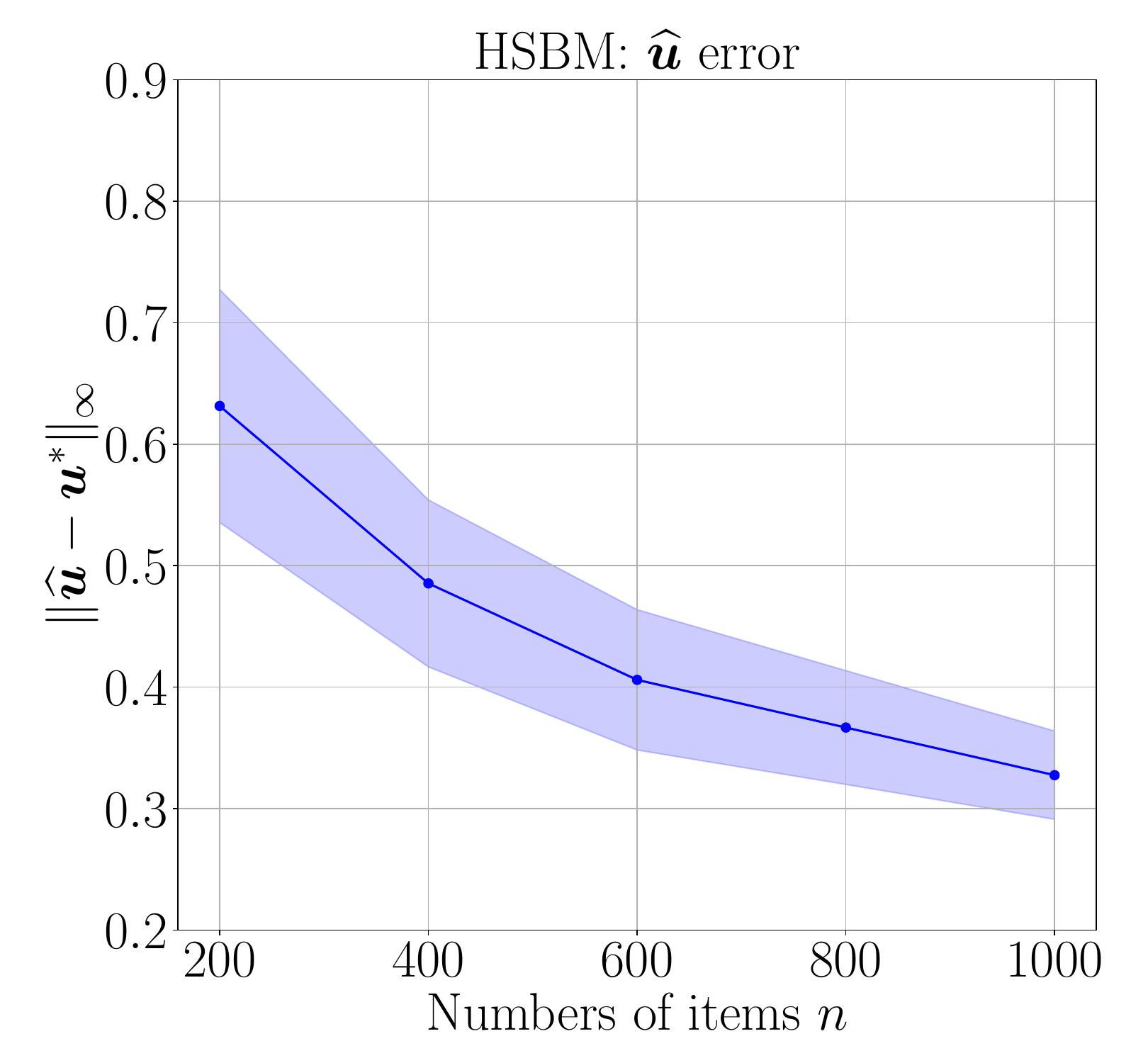}
		\caption*{(c)}
	\end{minipage}
	\begin{minipage}{0.24\linewidth}
		\centering
		\includegraphics[width=\linewidth, trim={0.0cm 0.6cm 1cm 0.5cm}, clip]{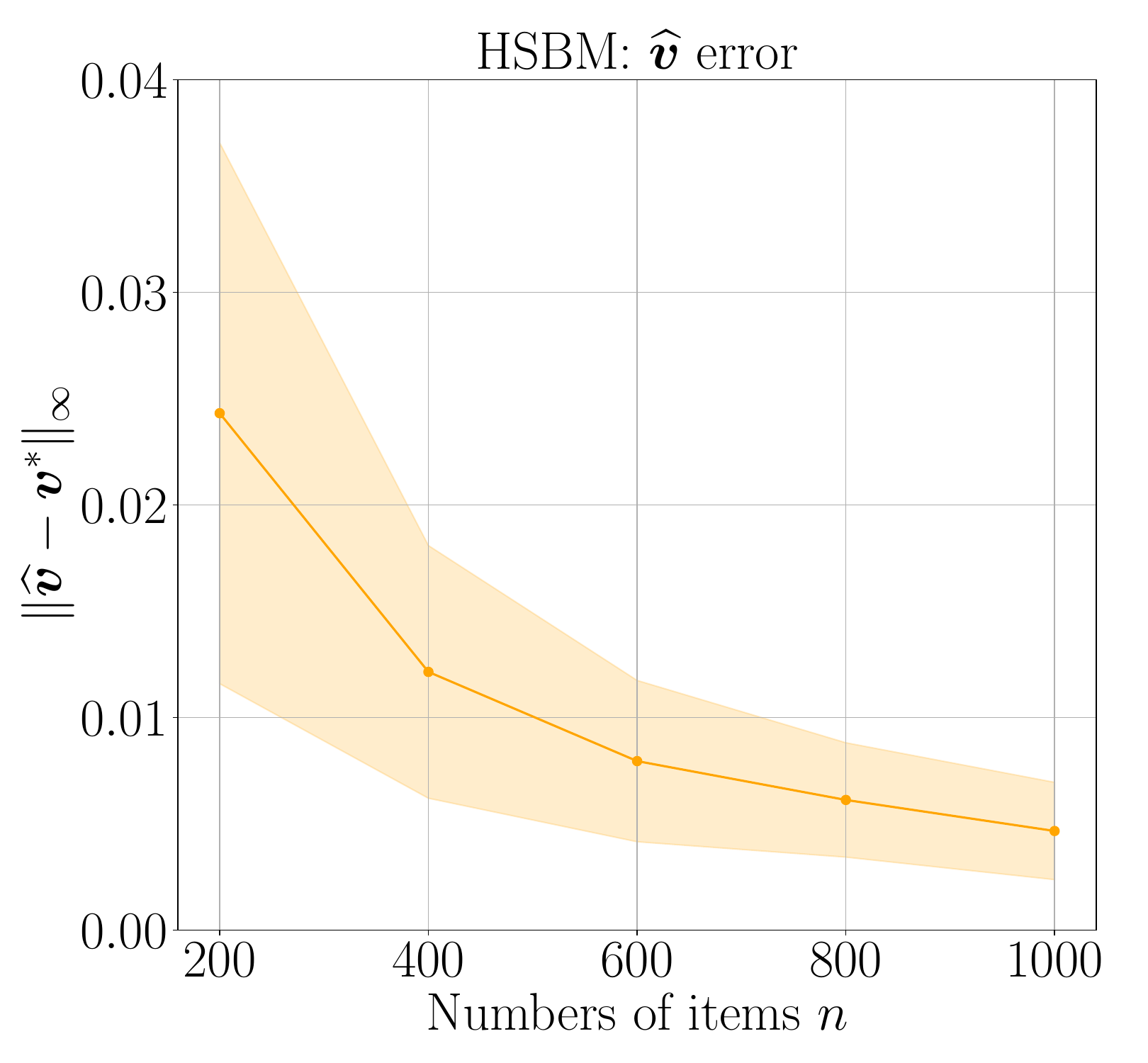}
		\caption*{(d)}
	\end{minipage}
	\caption{ Average value of $\|\uu-\u^*\|_\infty$ and $\|\vv-\v^*\|_\infty$ with the shadow area representing the sample standard deviation. (a)-(b): NURHM; (c)-(d): HSBM.}
	\label{fig:syn}
\end{figure}

The above simulation setup covers both cases of varying edge sizes (NURHM) and imbalanced data across objects (HSBM). The results in Figure~\ref{fig:syn} show that in both scenarios, the maximum entrywise errors of the joint MLE for $\u^*$ and $\v^*$ decay to zero as $n$ increases. This numerically verifies the uniform consistency result of the MLE in Theorem~\ref{thm:main}. Additionally, we notice that in both cases, the convergence rate of $\vv$ seems to be faster than $\u$, which provides further numerical evidence that the MLE of the global parameter may have a faster convergence rate, which is also consistent with the quantitative error bounds \eqref{rates} in Theorem~\ref{thm:main}.

\subsection{Horse-racing data}\label{sec:horse}

Horse racing is an equestrian sport with its associated betting industry constituting a pillar part of Hong Kong's economy. In this example, we consider a public Hong Kong horse-racing dataset\footnote{\href{https://www.kaggle.com/datasets/gdaley/hkracing}{https://www.kaggle.com/datasets/gdaley/hkracing}}. This dataset contains thousands of races of different horses from 1999 to 2005, along with relevant information. We first cleaned the dataset by removing horses that either participated in too few competitions or had won/lost in all the competitions they participated in. After preprocessing, there are $n=2,814$ horses and $N=6,328$ races. The size of comparisons is nonuniform and ranges between 4 and 14. The minimum degree of a horse is $10>\log n\approx 8$. 

To apply the $\PLDC$ model, we consider three factors to explain the internal scores of horses: the actual weight a horse carries in a competition, the draw (the post position number of a horse in a race), and the public belief (which is the winning probability derived from the last-minute win odds before the competition starts). The first two factors are in-field and reflect how the site conditions affect a horse's performance, while the third is an off-field factor summarizing unobserved information about the horses not explicitly observed from the dataset. We take the logarithm of the first and third factors (applying centering and normalization to the first factor as well) and rescale the draw information in each competition to a number between 0 and 1 based on the respective competition size. We apply the MLE to the full dataset and compute the MLE.

In this example, the estimated model coefficient vector is $\vv = (-0.139, -0.18\CR{7}, 0.502)^\top$. This suggests that either carrying a heavier weight or having a higher draw position negatively impacts a horse's performance. Conversely, public belief serves as a positive indicator of a horse's strength. To draw more convincing conclusions, we compare the log-likelihood, AIC, and BIC corresponding to the different models in Table~\ref{tab:model_choice}. The results indicate that $\PLDC$ performs better than PL in terms of the likelihood-based criteria, providing further evidence for incorporating dynamic covariates in the PL model to enhance prediction. 

\begin{table}[h]
	\centering
	\resizebox{\textwidth}{!}{
		\begin{tabular}{ccccccccc}
			& $({1,1,1})$ & $({0,1,1})$ & $({1,0,1})$ & $({1,1,0})$ & $({1,0,0})$ & $({0,1,0})$ & $({0,0,1})$ & $({0,0,0})$ \\ \hline
			\bf log-likelihood     & -16.985 & -17.033  & -16.996 
			& -17.573 & -17.650 & -17.594 & -17.046 & -17.671   \\
			\bf AIC                & 34.860  &  34.957  &  34.882 
			&  36.\CR{036} &  36.189 &  36.078 &  34.981 & 36.23\CR{0}\\
			\bf BIC                & 37.86\CR{5}  &  37.96\CR{1}  &  37.88\CR{6} 
			&  39.04\CR{0} &  39.19\CR{1} &  39.08\CR{0} &  37.98\CR{3} & 39.23\CR{2}\\
		\end{tabular}
	}
	\caption{The eight models above are generated from $\PLDC$ using different combinations of the three covariates. Here, $({i_1,i_2,i_3})$ represents the $\PLDC$ model incorporating the $t$th covariate if $i_t=1$. The PL model, namely $({0,0,0})$, is a special case of $\PLDC$. The log-likelihood, AIC, and BIC are normalized.}
	\label{tab:model_choice}
\end{table}

The estimated utility vector $\uu$ is plotted against the MLE obtained from the PL model in Figure~\ref{fig:struct}. The deviation of the data from the diagonal line is attributed to the covariates in the $\PLDC$ model. Moreover, we list the top ten horses ranked based on their estimated utilities in the $\PLDC$ model as well as other relevant statistics in the table in Figure~\ref{fig:struct}. It can be observed that the top-eight horses ranked by $\PLDC$ are shuffled compared to those in the PL model after accounting for the covariates.

\begin{figure}[htbp]
	\begin{minipage}{0.27\linewidth}
		\centering
		\includegraphics[width=\linewidth, trim={0cm 0.2cm 2cm 2cm}, clip]{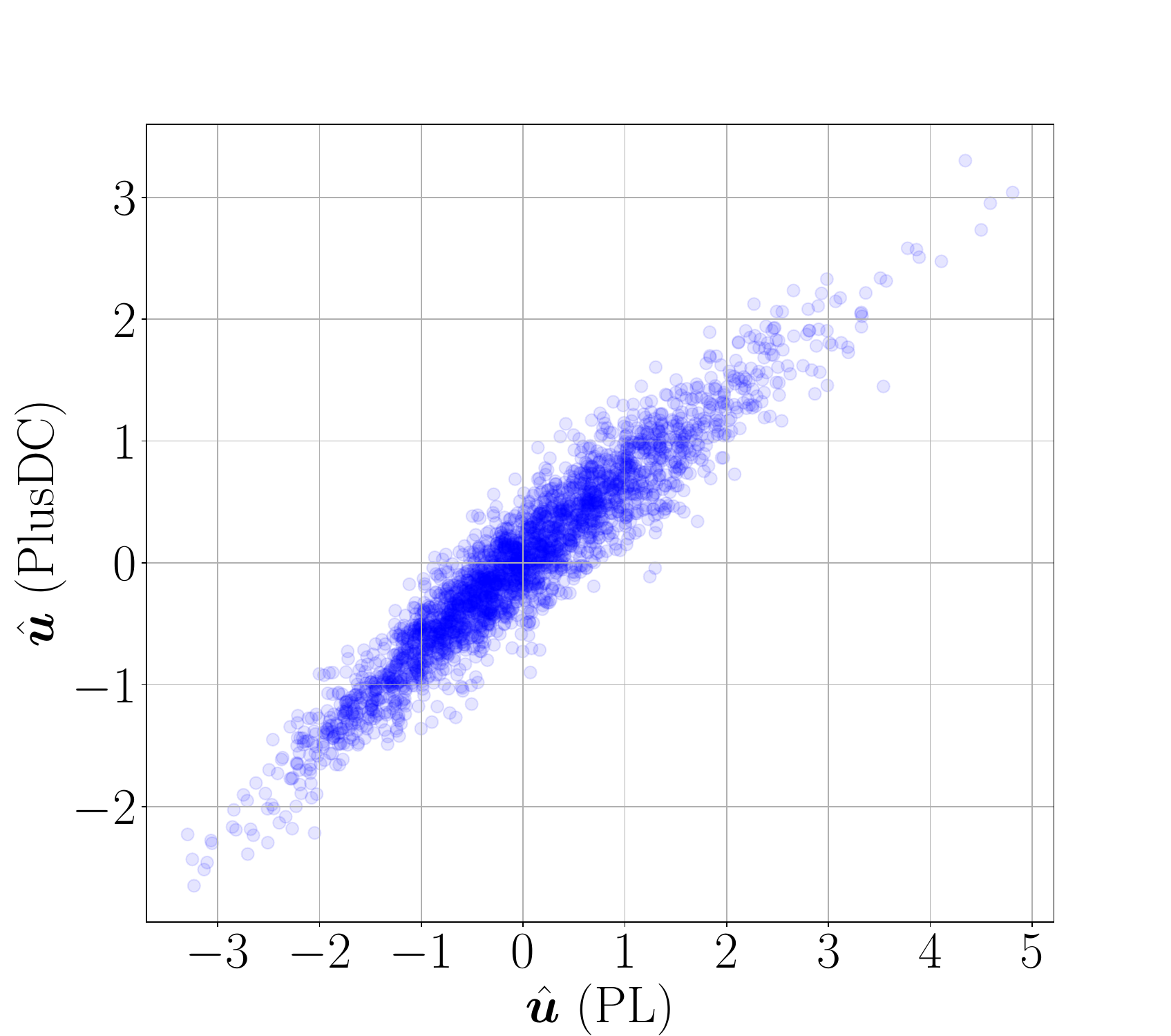}
	\end{minipage}
	\hfill
	\begin{minipage}{0.72\textwidth}
		\resizebox{\textwidth}{!}{
			\begin{tabular}{cccccccc}
				Horse id & Race & Mean place & Mean covariates (weight, draw, belief) & $\uu_{\PL}$ & $\uu_{\PLDC}$ & $\text{Rank}_{\PL}$  &  $\text{Rank}_{\PLDC}$ \\ \hline
				564     & 11    & 1.364  & (0.636, 0.506, 0.405)  & 4.344 & 3.302   &  4     & 1    \\
				1033     & 21   & 1.952   & (0.52\CR{8}, 0.67\CR{8}, 0.451)   &  4.809 & 3.040    & 1     & 2    \\
				2402     & 14   & 2.000   & (0.4\CR{46}, 0.5\CR{49}, 0.309)  &  4.590 & 2.952    & 2  & 3    \\
				1588     & 12   & 3.083  & (0.1\CR{89}, 0.4\CR{71}, 0.342)  &  4.500 & 2.733      &   3   & 4    \\
				160      & 11    & 3.182  &(0.\CR{015}, 0.5\CR{17}, 0.247)   &  3.778 & 2.582    &  8     & 5    \\
				2558     & 28    & 2.536   & (0.5\CR{34}, 0.52\CR{9}, 0.257)  &  3.864 & 2.571     & 7      & 6    \\
				2830     & 22   & 3.182  & (0.45\CR{9}, 0.5\CR{82}, 0.254)   &  3.891  & 2.509   & 6        & 7    \\
				218      & 16   & 2.438   & (0.54\CR{1}, 0.5\CR{30}, 0.273) &  4.109 & 2.475       & 5    & 8    \\
				1044     & 19   & 2.842   & (0.5\CR{32}, 0.\CR{500}, 0.309)   &   3.510 & 2.338   & 11         & 9    \\
				3044     & 34    & 3.647  & (0.52\CR{7}, 0.49\CR{8}, 0.162)  & 2.984 & 2.329    & 26      & 10   
			\end{tabular}
		}
	\end{minipage}
	\caption{Left: Scatter plot of the MLE estimate $\uu$ of the utility vector for both the PL and $\PLDC$ models. Right: Detailed information about the top ten horses ranked according to the $\PLDC$ model, including the number of races they participated in, their mean historical placement, the mean covariates information (normalized log actual weight, normalized draw, and log public belief of their chances of winning), the estimated utilities, and their estimated ranking results in the PL model.}
	\label{fig:struct}
\end{figure}

To further investigate whether the $\PLDC$ model improves prediction, we conduct a $\mathsf{k}$-fold cross-validation with $\mathsf{k}=63$. We randomly partition the dataset into 63 equal-sized subsamples (with the last one slightly larger than the rest) and use them as test data for the $\PLDC$ model estimated from the remaining data. We compute the mean cross-entropy (log-likelihood) for the top-1 observation, the top-3 observations, and all observations, and compare them with the corresponding performance of PL and public belief. The results are presented in Figure~\ref{fig:KFCV}.  
\begin{figure}[htbp]
	\centering
	\begin{minipage}{0.32\linewidth}
		\centering
		\includegraphics[width=\linewidth,  trim={0.5cm 1.5cm 0.5cm 0.5cm}, clip]{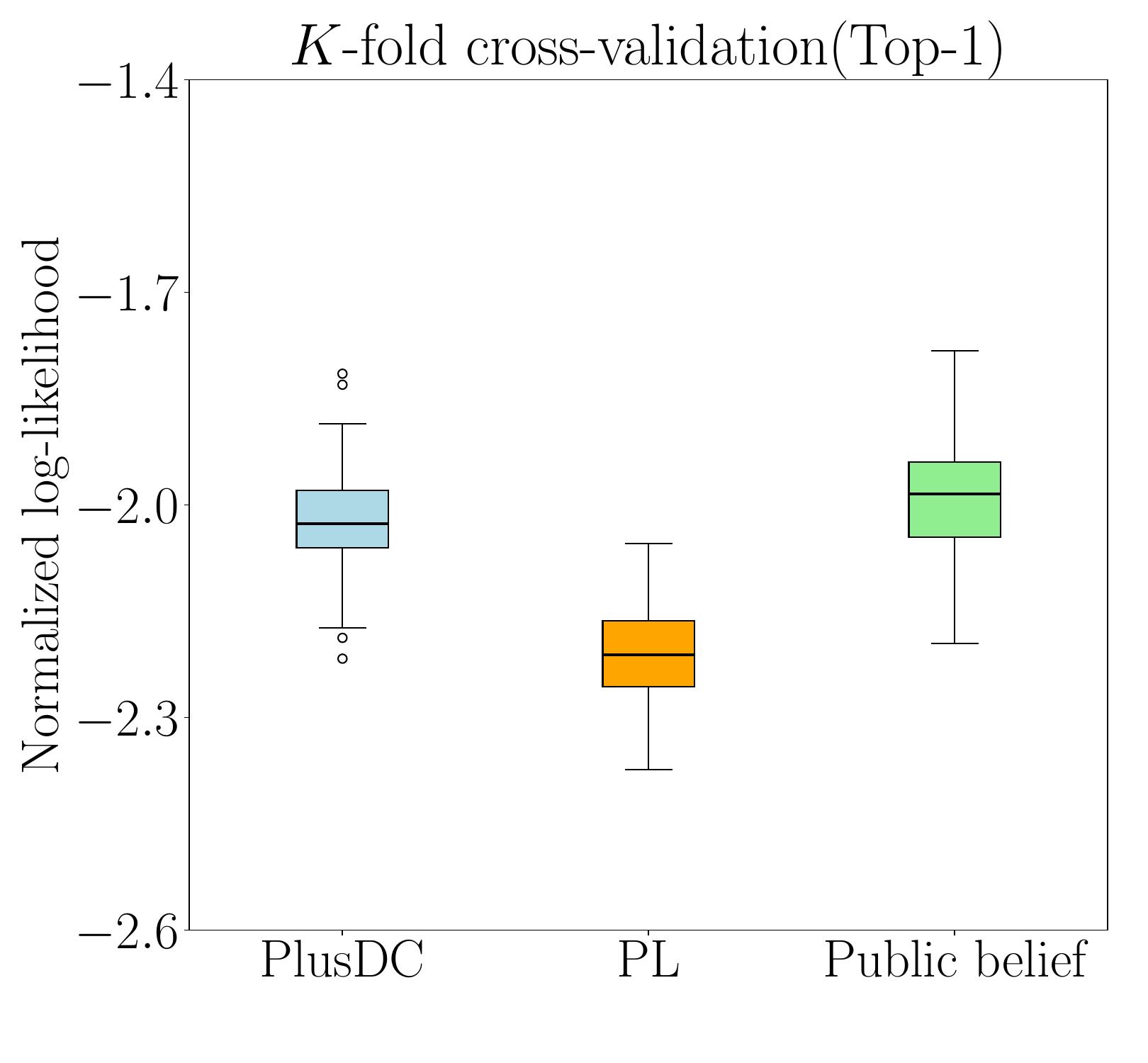}
		\caption*{(a)}
	\end{minipage}\hfill 
	\begin{minipage}{0.32\linewidth}
		\centering
		\includegraphics[width=\linewidth,   trim={0.5cm 1.5cm 0.5cm 0.5cm}, clip]{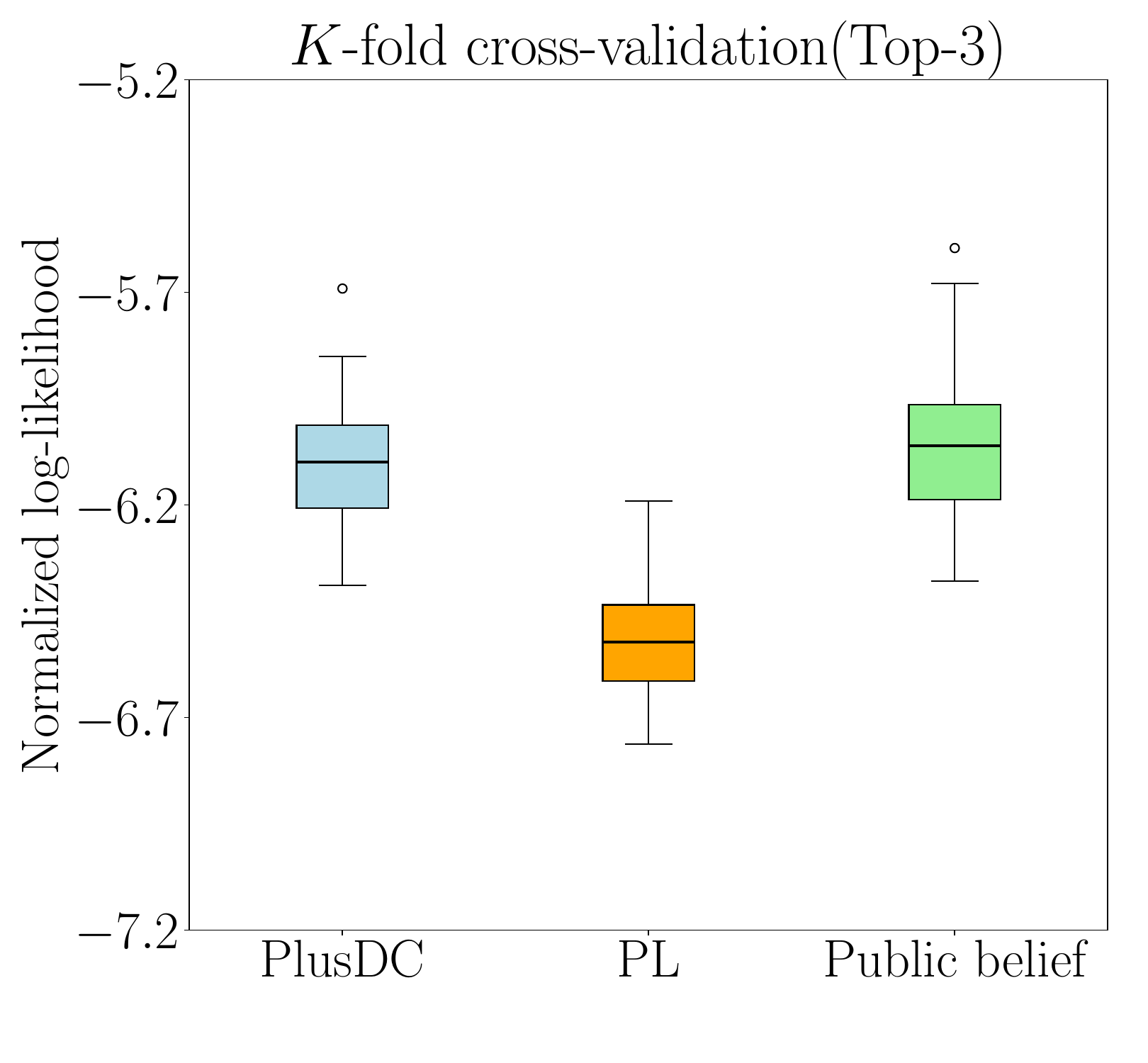}
		\caption*{(b)}
	\end{minipage}\hfill
	\begin{minipage}{0.32\linewidth}
		\centering
		\includegraphics[width=\linewidth,   trim={0.5cm 1.5cm 0.5cm 0.5cm}, clip]{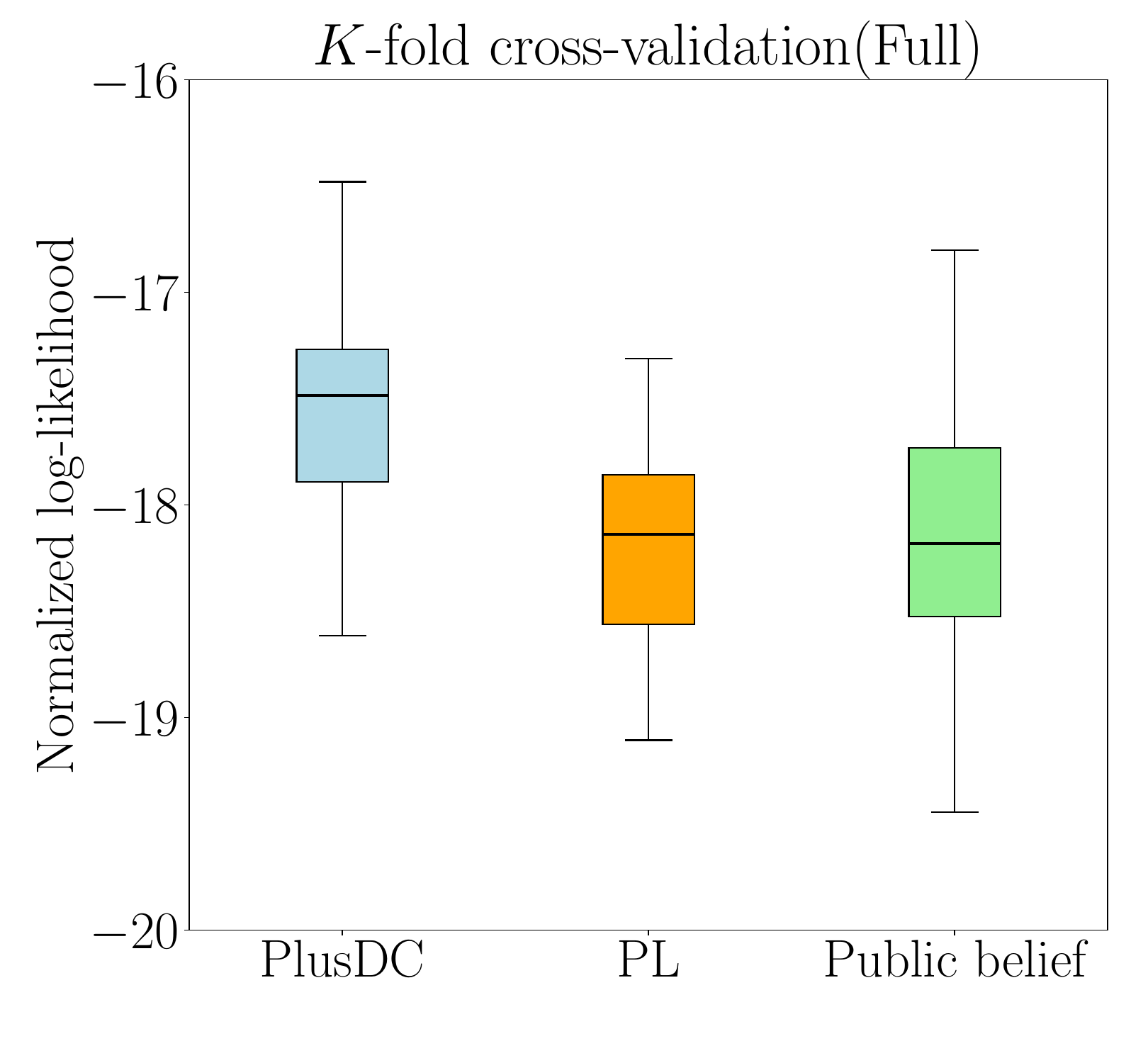}
		\caption*{(c)}
	\end{minipage}
	\caption{$\mathsf{k}$-fold cross-validation of $\PLDC$, PL, and Public belief. (a): Top-1 ranking; (b): Top-3 ranking; (c): Full ranking. }
	\label{fig:KFCV}
\end{figure}

In Figure~\ref{fig:KFCV}, $\PLDC$ consistently surpasses PL in all three scenarios, demonstrating evidence that integrating covariates enhances model prediction. Meanwhile, $\PLDC$ achieves comparable yet slightly lower performance compared to public belief in the top-1 and top-3 settings, but significantly outperforms it in predicting the full ranking. This outcome is not unexpected since the $\PLDC$ model is trained using the full likelihood.

\section{Model identifiability}\label{proof:identifiability}

We provide the proofs of Propositions~\ref{prop:1}--\ref{prop:curl} in Sections \ref{p:prop1}--\ref{p:prop2} respectively.

\subsection{Proof of Proposition~\ref{prop:1}}\label{p:prop1}
We establish the ``if'' part via a contradiction argument. Suppose that the $\PLDC$ model is not identifiable. Then, by definition, there exist $\t = [\u^\top, \v^\top]^\top\neq\t' = [(\u')^\top, (\v')^\top]^\top$ with $\bm 1^\top(\u-\u')=0$ such that they induce the same probability measures on each edge set $\EE$, which can be equivalently stated as $\Q^\top(\u-\u') + \K(\v-\v')=\bm 0$, or $\AA(\t-\t')=\bm 0$. Since $\bm a = (\bm 1^\top, \bm 0^\top)^\top\in\ker(\AA)$ and $\bm a^\top (\t-\t') = 0$, we have $\dim(\ker(\AA))\geq 2$. But this implies $\rank(\AA)\leq n+d-2$, which is a contradiction. The ``only if'' part is similar and thus omitted.  

\subsection{Proof of Proposition~\ref{prop:curl}}\label{p:prop2}

It suffices to check $\rank(\K) = d$ and $\range(\Q^\top)\cap\range(\K) = \{\bm 0\}$. For the former, by the definition of the curl operator in \eqref{curl}, there exists $\bm B\in\{\pm 1, 0\}^{d\times N}$ such that $\bm B \K = \bm T_{\triangle}$. As a result, $d=\rank(\bm T_{\triangle})\leq\rank(\K) \leq d$. For the latter, for any $\bm 0\neq\bm\beta\in\R^d$, the restriction of $\curl(\K\bm\beta)$ to $\mathcal T_{\triangle}$ is given by $\bm T_{\triangle}\bm\beta$, which is nonzero since $\bm T_{\triangle}$ has full column rank. Therefore, $\K\bm\beta\not\in\ker(\curl)$, hence $\range(\Q^\top)\cap\range(\K) = \{\bm 0\}$.

\section{Estimation using the MLE}\label{proof:emle} 

We provide the proof of Theorem~\ref{thm:existence} in Section \ref{p:thm1}. The proofs of Propositions \ref{prop:at_verify}, \ref{prop:equiv} and \ref{prop:mm} are presented in Sections \ref{p:prop4}--\ref{p:prop6} respectively.

\subsection{Proof of Theorem~\ref{thm:existence}}\label{p:thm1}
To reduce notational clutter, we introduce a few more compact notations below. 
For $i\in [N]$, $j\in [m_i]$, and $j\leq t< m_i$, denote 
\begin{align}
	Z_{(i,j,t)} = [\vec{\mathsf e}^{\top}_{\pi_i(t)\pi_i(j)} ,(X_{\T_i, \pi_i(t)}-X_{\T_i, \pi_i(j)})^\top]^\top\in\R^{n+d}, \label{myZZ}
\end{align}
where $\vec{\mathsf e}_{k\ell}$ denotes the difference between the $k$th and $\ell$th canonical basis in $\R^n$ for $k, \ell\in [n]$. Moreover, we define $\bm\Lambda_{ij} = \text{diag}(\bm g_{ij})-\bm g_{ij}\bm g_{ij}^\top$, where
\begin{align}
	\bm g_{ij} = \frac{1}{\sum_{t=j}^{m_i}\exp\{Z_{(i,j,t)}^\top\t\}}(\exp\{Z_{(i,j,j+1)}^\top\t\}, \ldots, \exp\{Z_{(i,j,m_i)}^\top\t\})^\top\in\R^{m_i-j},\label{mygg}
\end{align}
with $\text{diag}(\bm g_{ij})$ standing for the diagonal matrix with diagonal entries given by the vector $\bm g_{ij}$.
With these notations, Assumption \ref{ass:ue} can be restated as follows:
\begin{customthm}{2$^\mathsection$}\label{ass:ue+}
	Given $\HH = (\V, \EE)$, $\mathcal X = \{X_{\T, j}\}_{j\in \T, \T\in\EE}$, and the comparison outcomes $\{\pi_\T\}_{\T\in\EE}$, for any non-zero vector $\t=(\u^\top,\v^\top)^\top \in\R^{(n+d)}$ with $\bm{1}^\top \u = 0$, there exist $i\in[N]$ and $j<t\le m_i$ such that $Z^\top_{(i,j,t)}\t>0$.
\end{customthm}

\subsubsection*{\underline{Sufficiency}}

We first show that Assumption~\ref{ass:ue+} implies the unique existence of the MLE. 
Using the fact that $Z_{(i,j,j)}={\bf 0}$, we can write $l(\t)$ and its derivatives as  
\begin{align}
	l(\t) &= -\sum_{i\in [N]}\sum_{j\in [m_i-1]} \log\left(\sum_{t=j}^{m_i}\exp{Z_{(i,j,t)}^\top\t}\right),\nonumber\\
	\nabla l(\t) &= -\sum_{i\in [N]}\sum_{j\in [m_i-1]} \frac{ \sum^{m_i}_{t=j+1}\exp{Z^\top_{(i,j,t)}\t}Z_{(i,j,t)}}{1+\sum^{m_i}_{t=j+1}\exp{Z^\top_{(i,j,t)}\t}},\label{eq:gradient}\\
	\nabla^2 l(\t) &= -\sum_{i\in [N]}\sum_{j\in [m_i-1]}[Z_{(i, j, j+1)}, \ldots, Z_{(i,j,m_i)}]\bm\Lambda_{ij}[Z_{(i,j,j+1)}, \ldots, Z_{(i,j,m_i)}]^\top. \label{l:hess}
\end{align}
Since 
\begin{align}
	[\bm\Lambda_{ij}]_{kk}-\sum_{\ell\in [m_i-j], \ell\neq k}|[\bm\Lambda_{ij}]_{k\ell}| = \frac{\exp{Z_{(i,j,j+k)}^\top\t}}{\left(\sum_{t=j}^{m_i}\exp{Z_{(i,j,t)}^\top\t}\right)^2}>0,\label{l:hess1}
\end{align}
it follows from the Gershgorin circle theorem that $\bm\Lambda_{ij}$ is positive-definite. Consequently, the Hessian matrix is semi-negative definite and the log-likelihood function is concave. 

To establish the unique existence of the MLE, it suffices to show that on $\bm 1^\top\u = 0$, $-l(\t)$ is strictly convex and coercive, that is, $-l(\t)\to \infty$ if $\|\t\|_2\to\infty$, under Assumption~\ref{ass:ue+}.  For the former, for any $\widetilde{\t}=(\widetilde{\u}^\top, \widetilde{\v}^\top)^\top\neq\bm 0$ with $\bm 1^\top\widetilde{\u} = 0$, Assumption~\ref{ass:ue+} says there exist an edge $\T_{i}$ and $j<t\leq m_{i}$ such that $Z^\top_{(i,j,t)}\widetilde{\t}>0$. As a result, by the positivity of $\bm\Lambda_{ij}$, 	
\begin{align*}
	-\widetilde{\t}^\top\nabla^2 l(\t)\widetilde{\t} \geq \widetilde{\t}^\top[Z_{(i, j, j+1)}, \ldots, Z_{(i, j, m_{i})}]\bm\Lambda_{ij}[Z_{(i, j, j+1)}, \ldots, Z_{(i, j, m_{i})}]^\top\widetilde{\t}>0,
\end{align*}
proving the strict convexity of $-l(\t)$ on $\bm 1^\top\u = 0$. 

For coercivity, note that for any $\t\neq\bm 0$ with $\bm 1^\top\u = 0$, Assumption~\ref{ass:ue+} implies
\begin{align*}
	\mathW(\t):= \max_{i\in [N]}\max_{j\in[m_i]}\max_{j<t\le m_i} Z^\top_{(i,j,t)}\t>0.
\end{align*}
Since $\mathW(\t)$ is continuous in $\t$ and $\{\|\t\|_2=1: \bm 1^\top \u = 0\}$ is compact, we have  $\inf_{\|\t\|_2=1: \bm 1^\top \u = 0}\mathW(\t)>0$. Therefore, $\mathW(\t)$ is coercive on $\bm 1^\top\u = 0$ as a result of positive homogeneity ($\mathW(\mu\t) = \mu\mathW(\t)$ for $\mu>0$). The coercivity of $-l(\t)$ on $\bm 1^\top\u = 0$ follows by noting $-l(\t)\geq \mathW(\t)$. Indeed, for any tuple $(i, j, t)$ with $j<t\leq m_i$,  
\begin{align}
	-l(\t) \geq & \log(1+\sum_{ {\ell=j+1}}^{m_i}\exp\{Z_{(i,j,\ell)}^\top\t\})\geq Z_{(i,j,t)}^\top\t.
\end{align}
Taking the maximum over $(i, j, t)$ on the right-hand side yields $-l(\t)\geq \mathW(\t)$.

\subsubsection*{\underline{Necessity}}

For the necessity of Assumption~\ref{ass:ue+}, let $\tt$ be the unique MLE satisfying $\nabla l(\tt) = 0$. We show next that if there exists $\t_{0}\neq\bm 0$ with $\bm 1^\top\t_0 = 0$ such that $Z^\top_{(i,j,t)}\t_{0}\ge 0$ for all $i\in[N]$ and $j<t\le m_i$, then we will arrive at a contradiction.

We first consider the case where there exist $i_0$, $j_0$, $t_0$ such that $Z^\top_{(i_0,j_0,t_0)}\t_{0}>0$. 
Under such circumstances, 
\begin{align*}
	\left\{\nabla l(\tt)\right\}^\top \t_{0}&= -\sum_{i\in [N]}\sum_{j\in [m_i-1]} \frac{\sum^{m_i}_{t=j+1}\exp{Z_{(i,j,t)}^\top\tt}Z_{(i,j,t)}^\top\t_{0}}{1+\sum^{m_i}_{t=j+1}\exp{Z_{(i,j,t)}^\top\tt}}\\
	&\le-\frac{\exp{Z_{(i_0,j_0,t_0)}^\top\tt}Z_{(i_0,j_0,t_0)}^\top\t_{0}}{1+\sum^{m_i}_{t=j+1}\exp{Z_{(i,j,t)}^\top\tt}}<0.
\end{align*}

The only remaining case is when $Z^\top_{(i,j,t)}\t_{0} = 0$ for all $i\in [N], j<t\leq m_i$. From equation~\eqref{eq:gradient}, we can verify that
$\nabla l(\t+\t_{0})=\nabla l(\t)$. According to the Taylor expansion and the property of concave functions, $l(\tt+\t_{0})= l(\tt)=\max_{\t: \bm 1^\top\u = 0}l(\t)$, contradicting to the uniqueness of $\tt$. Hence, Assumption~\ref{ass:ue+} holds whenever the MLE uniquely exists.

\subsection{Proof of Proposition~\ref{prop:at_verify}}\label{p:prop4}

The proof consists of two steps. We first show a stronger assumption, that is, Assumption \ref{ass:ue_2} implies Assumption \ref{ass:ue+} (and also Assumption \ref{ass:ue} as shown in Section~\ref{p:thm1}). Then, we show that Assumption \ref{ass:ue_2} holds with probability at least $1-n^{-2}$ for all large $n$. 

\begin{customthm}{2$^*$}\label{ass:ue_2}
	For every partition of $\VV = \VV_1\cup \VV_2$ and non-zero vector $\v\in\R^d$, there exist $k_1\in \VV_1, k_2\in \VV_2, \T\supseteq\{k_1, k_2\}$ such that $k_1\succ k_2$ on $\T$ and $X^\top_{\T, k_1}\v< X_{\T, k_2}^\top\v$. 
\end{customthm}

\subsubsection*{\underline{Step I}: Assumption~\ref{ass:ue_2} implies Assumption~\ref{ass:ue+}}
Let $\t=(\u^\top,\v^\top)^\top\neq\bm 0$ with $\bm 1^\top\u = 0$. When $\u\neq 0$ and $\v\neq 0$, we can choose $\VV_1 = \{k\in[n]:u_k = \min_{t\in [n]}u_t\}$ and $\VV_2 = \VV\setminus\VV_1$. Since $\u\neq 0$ and $\bm{1}^\top \u = 0$, $\VV_1$ and $\VV_2$ are non-empty sets, and for any $k_1 \in\VV_1$ and $k_2 \in\VV_2$, $u_{k_1}<u_{k_2}$. Under Assumption~\ref{ass:ue_2}, there exist $k_1 \in \VV_1, k_2\in \VV_2, \T\supseteq\{k_1, k_2\}$ such that $i\succ j$ on $\T$ and $X^\top_{\T, k_1}\v< X_{\T, k_2}^\top\v$, which implies $u_{k_1}+X^\top_{\T, k_1}\v<u_{k_2}+X^\top_{\T,k_2}\v$. 
If $\u=0$ and $\v\neq 0$, we can use Assumption~\ref{ass:ue_2} directly to get the result. If $\v=0$ and $\u\neq 0$, we also have $u_{k_1}+X^\top_{\T,k_1}\v<u_{k_2}+X^\top_{\T,k_2}\v$ due to $u_{k_1}<u_{k_2}$.

\subsubsection*{\underline{Step II}: Assumption~\ref{ass:ue_2} holds with high probability}

The lower bound on the modified Cheeger constant implies that for all nonempty $U\subset \VV$, 
\begin{align}
	\frac{|\partial U|}{\min\{|U|, n-|U|\}\log n}\to\infty. \label{qiangzi}
\end{align}
Since the condition in Assumption~\ref{ass:ue_2} is scale-invariant in $\bm v$,  without loss of generality, fix $\v$ with $\|\bm v\|_\infty = 1$. For any $U\subset [n]$, define the following random variable $\Gamma(U; \v, \delta)$ counting the number of the pair of objects making Assumption~\ref{ass:ue_2} satisfied with margin at least $\delta>0$: 
\begin{align*}
	\Gamma(U; \v, \delta) = \sum_{\T\in\partial U}\mathbb I_{\left\{\text{there exist $i\in \T\cap U, j\in \T\cap U^\complement$, such that $i\succ j, (X_{\T, i}-X_{\T,j})^\top\bm v < -\delta$}\right\}},
\end{align*}
which is an independent sum of Bernoulli random variables.
Under Assumption~\ref{ass:unif-bdd} and \eqref{mass}, there exists an absolute constant $c_1>0$ (depending on $R$ and $c_0$) such that each of these Bernoulli random variables has mean at least $c_1$. By the Chernoff bound, for all large $n$,  
\begin{align*}
	\P\{\Gamma(U; \v, \delta)=0\} &= \P\left\{\Gamma(U; \v, \delta)-\E[\Gamma(U; \v, \delta)]\leq -\E[\Gamma(U; \v, \delta)]\right\}\\
	&\leq \exp{-\frac{\E[\Gamma(U; \v, \delta)]}{3}}\leq\exp{-\frac{c_1|\partial U|}{3}}\stackrel{\eqref{qiangzi}}{\leq} n^{-7\min\{|U|, n-|U|\}} .
\end{align*}
Taking a union bound over $U$ yields 
\begin{align}
	\P\left\{\text{$\Gamma(U; \v, \delta)\geq 1$ for all $U\subset \VV$}\right\}&\geq 1-\sum_{U\subset \VV: |U|\leq n/2}\P\{\Gamma(U; \v, \delta)=0\}\nonumber\\
	&\geq 1-\sum_{k=1}^{n/2}{n\choose k}n^{-6k}\nonumber\\
	&\geq 1-n^{-5}. \label{single-bdd}
\end{align}
Therefore, with probability at least $1-n^{-5}$, $\Gamma(U; \v, \delta)\geq 1$ for all partitions of $\VV = U\cup U^\complement$.

To extend the above result to arbitrary $\v$ with $\|\v\|_\infty=1$, we apply an approximation argument.
Let $\mathcal N(\delta/4R, \|\cdot \|_\infty)$ be a minimal $(\delta/4R)$-covering of $\{\v\in\R^d: \|\v\|_\infty=1\}$ in $\|\cdot \|_\infty$. 
Since $|\mathcal N(\delta/4R, \|\cdot\|_\infty)|$ is finite, $|\mathcal N(\delta/4R, \|\cdot\|_\infty)|\leq n$ for all large $n$. 
Taking a union bound for \eqref{single-bdd} over all $\v\in\mathcal N(\delta/4R, \|\cdot\|_\infty)$ yields that 
\begin{align}
	\P\left\{\text{$\Gamma(U; \v, \delta)\geq 1$ for all $U\subset \VV$ and $\v\in \mathcal N(\delta/4R, \|\cdot \|_\infty)$}\right\}&\geq 1-n^{-4}.\label{coverbound}
\end{align}

We now apply the triangle inequality to conclude that the event in \eqref{coverbound} implies Assumption~\ref{ass:ue_2}.
Indeed, for any $\v$ with $\|\v\|_\infty=1$, denote by $\v'$ its closest point in $\mathcal N(\delta/4R, \|\cdot\|_\infty)$ with respect to $\|\cdot\|_\infty$. It follows from the triangle inequality that for all $U\subset \VV$, $\Gamma(U; \v, \delta/2)\geq \Gamma(U; \v', \delta)>0$, implying Assumption~\ref{ass:ue_2}. 

Since the probability in \eqref{coverbound} concerns the randomness of both covariates and comparison outcomes, to finish the proof, we apply the conditional probability formula to obtain
\begin{align*}
	1-n^{-4}&\leq\P\left\{\text{Assumption~\ref{ass:ue_2} holds}\right\}\leq \P\left\{\text{Assumption~\ref{ass:ue} holds}\right\}\\
	&= 1-\P\left\{\text{Assumption~\ref{ass:ue} does not hold}\right\}\\
	&\leq 1-\P\left\{\text{$\HH$ and $\mathcal X$ are not AEP}\right\}\P\left\{\text{Assumption~\ref{ass:ue} does not hold}\mid \text{$\HH$ and $\mathcal X$ are not AEP}\right\}\\
	&\leq 1 - \P\left\{\text{$\HH$ and $\mathcal X$ are not AEP}\right\}n^{-2},
\end{align*}
which implies the desired result. 


\subsection{Proof of Proposition~\ref{prop:equiv}}

Let $\mathscr{C}_1$ and $\mathscr{C}_2$ denote the constrained parameter spaces in the PL and CARE models, respectively.
Under the assumptions in Proposition~\ref{prop:equiv}, $\dim(\mathscr{C}_1) = \dim(\mathscr{C}_2) = n-1$. 
Consider the linear map $\phi: \R^n\to\R^{n+d}$ defined as follows:
\begin{align*}
	\phi(\u) &= \begin{pmatrix}
		\phi_1(\u)\\
		\phi_2(\u)
	\end{pmatrix} = \begin{pmatrix}
		\u - \left(\bm I-\frac{\bm 1\bm 1^\top}{n}\right)\bm Z\left(\bm Z^\top\bm Z - \frac{\bm Z^\top\bm 1\bm 1^\top\bm Z}{n}\right)^{-1}\bm Z^\top\u\\
		\left(\bm Z^\top\bm Z - \frac{\bm Z^\top\bm 1\bm 1^\top\bm Z}{n}\right)^{-1}\bm Z^\top\u
	\end{pmatrix}.
\end{align*} 
The desired equivalence result in Proposition~\ref{prop:equiv} would follow if the following hold:
\begin{itemize}
	\item [(1)]$\phi$ is a bijection between $\mathscr{C}_1$ and $\mathscr{C}_2$;
	\item [(2)]$l(\phi(\u)) = l(\u, \bm 0)$.
\end{itemize}
For (1), note that when $\bm 1^\top\u = 0$, 
\begin{align*}
	\bm 1^\top \phi_1(\u) &= \bm 1^\top \u -\left[\bm 1^\top \left(\bm I-\frac{\bm 1\bm 1^\top}{n}\right)\right]\bm Z\left(\bm Z^\top\bm Z - \frac{\bm Z^\top\bm 1\bm 1^\top\bm Z}{n}\right)^{-1}\bm Z^\top\u = 0, \\
	\bm Z^\top\phi_1(\u) & = \bm Z^\top \u - \left(\bm Z^\top\bm Z - \frac{\bm Z^\top\bm 1\bm 1^\top\bm Z}{n}\right)\left(\bm Z^\top\bm Z - \frac{\bm Z^\top\bm 1\bm 1^\top\bm Z}{n}\right)^{-1}\bm Z^\top\u = \bm 0.
\end{align*}
Therefore, $\phi(\mathscr{C}_1)\subseteq \mathscr{C}_2$. Since $\phi^\dagger: (\u, \v)\mapsto \u + \bm Z\v - n^{-1}\bm 1\bm 1^\top\bm Z\v$ satisfies $\phi^\dagger\circ\phi = \text{id}$ and $\dim(\mathscr{C}_1) = \dim(\mathscr{C}_2)$, $\phi$ is a bijection between $\mathscr{C}_1$ and $\mathscr{C}_2$ with pseudoinverse  $\phi^\dagger$. For (2), it suffices to check for $\phi^\dagger$: $l(\u, \v) = l(\u+\bm Z\v, \bm 0) = l(\phi^\dagger(\u, \v), \bm 0)$.

\subsection{Proof of Proposition~\ref{prop:mm}}\label{p:prop6}

We first provide the explicit form of ${\u}^{(\tauf,\rhof+1)} $, which is computed by solving $\nabla_\u Q(\u \mid\u^{(\tauf,\rhof)}, \v^{(\tauf)})=\bm 0$ with centering. Specifically, denote the precentered update $\widetilde{\u}^{(\tauf, \rhof+1)}$ as 
\begin{align*}
	\widetilde{u}_k^{(\tauf,\rhof+1)} = \log\{\deg(k)\} - \log\left[\sum_{i:k\in \T_i}\sum_{j\in [r_i(k)]}\frac{\exp{X^\top_{\T_i,k}\v^{(\tauf)}}}{\sum_{t=j}^{m_i}\exp{s_{it}(\u^{(\tauf,\rhof)}, \v^{(\tauf)})}}\right] 
	\quad \quad \quad   \ k \in [n],
\end{align*}
so that $${\u}^{(\tauf,\rhof+1)} = \widetilde{\u}^{(\tauf,\rhof+1)} - \frac{1}{n}\bm 1\bm 1^\top \widetilde{\u}^{(\tauf,\rhof+1)}.$$
We first show that $\{\u^{(\tauf,\rhof)}\}_{\rhof\in\N}$ are uniformly bounded. To see this, note that
\begin{align*}
	&	l(\u^{(\tauf,\rhof+1)}, \v^{(\tauf)})=l(\widetilde{\u}^{(\tauf,\rhof+1)}, \v^{(\tauf)})\\
	&\geq Q(\widetilde{\u}^{(\tauf,\rhof+1)} \mid\u^{(\tauf,\rhof)} , \v^{(\tauf)})\geq Q({\u}^{(\tauf,\rhof)} \mid\u^{(\tauf,\rhof)} , \v^{(\tauf)}) = l(\u^{(\tauf,\rhof)}, \v^{(\tauf)}),
\end{align*}
so that $\{l(\u^{(\tauf,\rhof)}, \v^{(\tauf)})\}_{\rhof}$ are bounded from below. As was shown in Section~\ref{proof:emle}, Assumption~\ref{ass:ue} implies the coercivity of $-l(\t)$ on $\bm 1^\top\u = 0$. Consequently, $\{\u^{(\tauf,\rhof)}\}_{\rhof\in\N}$ is uniformly bounded.  

Now take a large $C>0$ so that $\max_{\rhof\in\N}\|\u^{(\tauf,\rhof)}\|_\infty\leq C$. 
Meanwhile, it follows from the direct computation that there exists $\sigma_1, \sigma_2>0$ (which may depend on $n$) such that for all $\u$ with $\|\u\|_\infty\leq C$ and $\rhof\in\N$, 
\begin{align}
	-\nabla_\u^2 l(\u,  \v^{(\tauf)})&\succ \sigma_1 \bm I, \label{sc2}\\
	-\nabla_\u^2Q(\u\mid\u^{(\tauf,\rhof)} , \v^{(\tauf)})&\prec \sigma_2\bm I\label{sc1}.
\end{align}
Condition \eqref{sc2} concerns strong concavity of $l(\u, \v^{(\tauf)})$ on $\|\u\|_\infty\leq C$ and implies the following Polyak--Lojasiewicz inequality: 
\begin{align}
	l(\u^{(\tauf+1)},  \v^{(\tauf)})-l(\u^{(\tauf, \rhof)},  \v^{(\tauf)})&\leq \frac{1}{2\sigma_1}\|\nabla_\u l(\u^{(\tauf, \rhof)},  \v^{(\tauf)})\|^2_2. \label{tll1}
\end{align}
Condition \eqref{sc1} is an $\sigma_2$-smoothness condition for $Q(\u \mid \u^{(\tauf,\rhof)} , \v^{(\tauf)})$ on $\|\u\|_\infty\leq C$ which implies
\begin{align}
	l(\u^{(\tauf, \rhof+1)},  \v^{(\tauf)})-l(\u^{(\tauf, \rhof)},  \v^{(\tauf)})&=l(\widetilde{\u}^{(\tauf, \rhof+1)},  \v^{(\tauf)})-l(\u^{(\tauf, \rhof)},  \v^{(\tauf)})\nonumber\\
	&\geq Q(\widetilde{\u}^{(\tauf, \rhof+1)} |\u^{(\tauf,\rhof)} , \v^{(\tauf)})-Q(\u^{(\tauf, \rhof)} \mid\u^{(\tauf,\rhof)} , \v^{(\tauf)})\nonumber\\
	&\geq \frac{\sigma_2}{2}\|\nabla_\u Q(\u^{(\tauf, \rhof)} \mid\u^{(\tauf,\rhof)} , \v^{(\tauf)})\|^2_2\nonumber\\
	& = \frac{\sigma_2}{2}\|\nabla_\u l(\u^{(\tauf, \rhof)},  \v^{(\tauf)})\|^2_2,\label{tll2}
\end{align}
where the second inequality follows from a standard gradient descent estimate \cite[Eq. (3.5)]{bubeck2015convex}. 
Combining \eqref{tll1} and \eqref{tll2} together yields
\begin{align*}
	l(\u^{(\tauf+1)},  \v^{(\tauf)})-l(\u^{(\tauf, \rhof+1)},  \v^{(\tauf)})&\leq \left(1-\frac{\sigma_2}{4\sigma_1}\right)\left(l(\u^{(\tauf+1)},  \v^{(\tauf)})-l(\u^{(\tauf, \rhof)},  \v^{(\tauf)})\right)\\
	&\leq \left(1-\frac{\sigma_2}{4\sigma_1}\right)^{\rhof+1}\left(l(\u^{(\tauf+1)},  \v^{(\tauf)})-l(\u^{(\tauf)},  \v^{(\tauf)})\right).
\end{align*}
Therefore, $l(\u^{(\tauf, \rhof+1)},  \v^{(\tauf)})\to l(\u^{(\tauf+1)},  \v^{(\tauf)})$ as $\rhof\to\infty$. Combined with condition \eqref{sc2}, this implies the desired result \eqref{nt}. 

\section{Uniform consistency}\label{sec:mainproof}

We provide the proof of Theorem~\ref{thm:main} in Section \ref{p:thm2} while its supporting lemmas are demonstrated in Section \ref{sec:hanproof}. In Sections \ref{p:prop5}-\ref{p:prop8}, we prove Propositions \ref{prop:1+}-\ref{prop:rg} respectively. In Section \ref{p:thm3}, we prove Theorem \ref{thm:randdesign}.

It is worth mentioning that \cite[Remark 5.2]{singh2024graphical} briefly discusses the case of infinite objects with covariates in the context of cardinal paired comparisons. Due to the least-squares formulation and the asymptotic orthogonality between the incidence matrix and the covariate matrix, the vector of merits can be consistently estimated without the covariates. In contrast, Theorem~\ref{thm:main} holds under more general assumptions, encompassing cases where the covariates are non-random and not asymptotically orthogonal to the incidence matrix. Our analysis significantly differs from theirs due to the interplay between the global parameter $\v$ and the local parameter $\bu$, which is further complicated by the lack of an explicit solution. 

Before proving the uniform consistency result, we introduce an additional notation. For a smooth function $f(\bx): \mathbb{R}^n \to \mathbb{R}$ and $k_1, \ldots, k_t \in [n]$, we let $\partial_{k_1\ldots k_t} f$ denote the $t$th order partial derivative of $f$ with respect to the $k_1, \ldots, k_t$ components. 

\subsection{Proof of Theorem~\ref{thm:main}}\label{p:thm2}

The unique existence of the MLE holds a.s. follows from Assumption~\ref{ass:aep} and the Borel--Cantelli lemma. Therefore, we focus on the uniform consistency part only, which consists of two main stages. First, we will show that $\|\widehat{\t}-\t^*\|_\infty=o_p(1)$. This stage relies on an error-comparison lemma between $\|\uu-\u^*\|_\infty$ and $\|\vv-\v^*\|_\infty$ (Lemma~\ref{lm:han}) and an empirical process type estimate. The resulting convergence rates for both parameters in this stage are suboptimal. To address this, we perform additional asymptotic analysis to boost the convergence rates closer to optimal. This step requires that both $\uu$ and $\vv$ are close to their respective optima which is guaranteed by the results in the first step. The following elementary lemma will be used in the analysis. 

\begin{lemma}\label{lm:ce}
	Let $\bm p^* = (p^*_1, \ldots, p^*_k)^\top\in\R^k$ be a probability vector with $\mu:=\min_{i\in [k]}p^*_i>0$. Define the cross-entropy function with respect to $\bm p^*$ as $F(\bm p; \bm p^*) = -\sum_{i\in [k]}p^*_i\log p_i$ where $\bm p = (p_1, \ldots, p_k)^\top\in\R^k_+$. Then $F$ is $\mu$-strongly convex on the simplex $\{\bm p\in\R_+^k: \|\bm p\|_1 =1\}$. 
\end{lemma}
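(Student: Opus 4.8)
The plan is to establish strong convexity through the standard second-order criterion: I will show that the Hessian of $F$ dominates $\mu\bm I$ in the Loewner order throughout the relative interior of the simplex, which is exactly what $\mu$-strong convexity of a twice-differentiable function amounts to.

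First I would compute the Hessian. Since $F(\bm p;\bm p^*) = -\sum_{i\in[k]}p_i^*\log p_i$ is separable in the coordinates of $\bm p$, the mixed partials vanish: $\frac{\partial^2 F}{\partial p_i\partial p_j}=0$ for $i\neq j$, while $\frac{\partial^2 F}{\partial p_i^2}=p_i^*/p_i^2$. Hence $\nabla^2 F(\bm p)=\mathrm{diag}(p_1^*/p_1^2,\ldots,p_k^*/p_k^2)$ is a diagonal matrix with strictly positive entries on the open simplex.

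The key (and essentially only) step is a pointwise lower bound on these diagonal entries. On the simplex we have $\|\bm p\|_1=1$ with $p_i\ge 0$, so $0<p_i\le 1$; therefore $p_i^2\le p_i\le 1$, which gives $p_i^*/p_i^2\ge p_i^*\ge \mu$ for every $i$. Consequently every diagonal entry of $\nabla^2 F$ is at least $\mu$, i.e. $\nabla^2 F(\bm p)\succeq \mu\bm I$ for all $\bm p$ in the relative interior. The second-order characterization of strong convexity then yields that $F$ is $\mu$-strongly convex, first on the open simplex and then on its closure: the boundary is harmless because $p_i^*\ge\mu>0$ forces $F\to+\infty$ as any $p_i\to 0^+$, so the defining convexity inequality holds trivially for boundary points.

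I do not anticipate a genuine obstacle, as the computation is routine and the crucial inequality $p_i\le 1$ on the simplex is elementary. The only points requiring a word of care are (i) noting that I am proving the ambient Loewner bound $\nabla^2 F\succeq\mu\bm I$ in all of $\R^k$, which a fortiori gives $\mu$-strong convexity along the tangent directions $\{\bm v:\bm 1^\top\bm v=0\}$ relevant to the simplex, and (ii) the boundary blow-up just described. Both are dispatched in a single sentence each, so the entire argument is short.
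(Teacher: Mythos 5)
Your proposal is correct and follows essentially the same route as the paper: both compute the diagonal Hessian $\nabla^2 F(\bm p)=\mathrm{diag}(p_1^*/p_1^2,\ldots,p_k^*/p_k^2)$ and bound it below by $\mu\bm I$ on the simplex via $p_i\le 1$. You merely spell out the inequality $p_i^*/p_i^2\ge p_i^*\ge\mu$ and the boundary/tangent-space caveats that the paper leaves implicit.
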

\begin{proof}[Proof of Lemma~\ref{lm:ce}]
	The proof follows from a direct computation. For any $\bm p\in\R_+^k$ with $\|\bm p\|_1=1$, $\nabla^2 F(\bm p; \bm p^*) = \mathrm{diag}([p_1^*/p_1^2, \ldots, p_k^*/p_k^2]^\top)\succeq\mu\bm I$. 
\end{proof}

We will also need a technical lemma that upper bounds $\|\uu-\u^*\|_\infty$ by $\|\vv-\v^*\|_\infty$.

\begin{lemma}[Upper bound]\label{lm:han}
	Under Assumptions~\ref{ass:edgesize}--\ref{ass:topology}, there exists an absolute constant $\lambda>0$ such that 
	\begin{equation}\label{han}
		\|\uu - \u^*\|_\infty\lesssim \diam(\A_{\HH_n}(\lambda))\cdot\max\left\{\sqrt{\frac{\log n}{h_{\HH_n}}}, \|\vv-\v^*\|_\infty\right\}.
	\end{equation}
\end{lemma}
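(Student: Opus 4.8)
The plan is to convert the stationarity condition for $\uu$ into a weighted flux identity on the broken pairwise graph $(\HH_n)_\br$, and then invert that identity in $\ell_\infty$ by a two-sided chaining argument along $\lambda$-weakly admissible sequences. First I would record that, because $l$ is shift-invariant in $\u$, one has $\bm 1^\top\nabla_\u l\equiv\bm 0$, so the Lagrange multiplier vanishes and the MLE satisfies $\nabla_\u l(\uu,\vv)=\bm 0$. Comparing with $\nabla_\u l(\u^*,\v^*)$ and using the Hessian representation \eqref{l:hess} in integral form gives
\[
-\,\nabla_\u l(\t^*) \;=\; \Big(\int_0^1 \nabla^2_{\u\u} l\big(\t^*+s(\tt-\t^*)\big)\,ds\Big)(\uu-\u^*)\;+\;\Big(\int_0^1 \nabla^2_{\u\v} l\big(\t^*+s(\tt-\t^*)\big)\,ds\Big)(\vv-\v^*).
\]
Writing $\phi=\uu-\u^*$ and $\bm\epsilon=\nabla_\u l(\t^*)$ (mean zero under the model), the left block is a weighted Laplacian acting on $\phi$ along $(\HH_n)_\br$, whose per-pair flux carries the coupled quantity $(\phi_{\mathrm{out}}-\phi_{\mathrm{in}})+(X_{\mathrm{out}}-X_{\mathrm{in}})^\top(\vv-\v^*)$ appearing in the $Z$-vectors \eqref{myZZ}. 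The edge weights come from the $\bm\Lambda_{ij}$ in \eqref{mygg}; they are bounded above by Assumptions~\ref{ass:edgesize} and \ref{ass:unif-bdd}, and bounded below on the active comparisons by the same boundedness together with the coercivity established in Theorem~\ref{thm:existence}.

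Next I would control the noise uniformly over vertex sets. Summing the identity over any $U\subset\VV$, the shift-invariance of each edge's likelihood annihilates every edge lying entirely inside $U$, so $\sum_{k\in U}\epsilon_k=\sum_{e\in\partial U}(\cdots)$ is a sum of $|\partial U|$ independent bounded mean-zero terms. Bernstein's inequality gives $\P(|\sum_{k\in U}\epsilon_k|>\tau)\le 2e^{-c\tau^2/|\partial U|}$, and a union bound over the $\le n^{s}$ sets with $\min(|U|,|U^\complement|)=s$, combined with $|\partial U|\ge h_{\HH_n}\,s$ from Definition~\ref{def:ch}, yields simultaneously over all $U$ that $|\sum_{k\in U}\epsilon_k|\lesssim|\partial U|\sqrt{\log n/h_{\HH_n}}$ with probability $\ge 1-n^{-2}$. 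Since the covariates are bounded, the covariate contribution summed over $\partial U$ is $\lesssim|\partial U|\,\|\vv-\v^*\|_\infty$. Hence the total boundary flux of $\phi$ across any $\partial U$ is $\lesssim |\partial U|\,\delta$, where $\delta:=\sqrt{\log n/h_{\HH_n}}+\|\vv-\v^*\|_\infty$.

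The chaining step inverts this. It suffices to bound the range of $\phi$, since $\bm 1^\top\phi=0$ then controls $\|\phi\|_\infty$. Starting from $v_1=\argmax_k\phi_k$, I would grow a nested family $A_1\subset A_2\subset\cdots$ by appending exterior vertices in decreasing $\phi$-order. At each stage, either a $\lambda$-fraction of $\partial A_j$ reaches exterior vertices whose $\phi$ lies within $O(\delta)$ of $\min_{A_j}\phi$ — in which case appending them keeps the family $\lambda$-weakly admissible in the sense of Definition~\ref{def:ad}, with the running minimum dropping by only $O(\delta)$ — or else a $(1-\lambda)$-fraction of boundary edges reaches vertices with much smaller $\phi$, forcing a boundary flux that exceeds the bound $|\partial A_j|\,\delta$, a contradiction. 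Calibrating $\lambda$ (depending only on $\mM$ and $R$, through the weight lower bound) makes this dichotomy sharp, so the family extends with $O(\delta)$ drop per step until $|A_J|>n/2$; by Definition~\ref{def:ad}, $J\le\diam(\A_{\HH_n}(\lambda))$. Thus at least $n/2$ vertices satisfy $\phi\ge\max_k\phi_k-C\,\diam(\A_{\HH_n}(\lambda))\,\delta$. The symmetric construction from $\argmin_k\phi_k$ gives $\ge n/2$ vertices with $\phi\le\min_k\phi_k+C\,\diam(\A_{\HH_n}(\lambda))\,\delta$; the two sets intersect, bounding the range and hence yielding \eqref{han}.

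The principal obstacle is that on a hypergraph the flux does not telescope as it does on a simple graph, so I must work on $(\HH_n)_\br$ and carry through the coupled pair-difference from \eqref{myZZ} at every broken edge; the covariate piece is exactly what makes this a comparison rather than an outright bound, and is the reason the lemma alone cannot close the argument — an independent estimate of $\|\vv-\v^*\|_\infty$ (via the incoherence and nondegeneracy in Assumptions~\ref{ass:cosine} and \ref{ass:minK}) must be fed back into \eqref{han}. The secondary technical points are matching the super-level sets of $\phi$ to genuinely $\lambda$-weakly admissible sequences while keeping the per-step drop $O(\delta)$ and the step count $\le\diam(\A_{\HH_n}(\lambda))$, and securing the uniform lower bound on the linearization weights before consistency is known, for which I rely on the coercivity in Theorem~\ref{thm:existence}.
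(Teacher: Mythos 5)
Your overall architecture matches the paper's proof closely: two-sided chaining from $\arg\max_k(\widehat u_k-u_k^*)$ and $\arg\min_k(\widehat u_k-u_k^*)$ over level sets of the error with step size $\asymp\max\{\sqrt{\log n/h_{\HH_n}},\|\vv-\v^*\|_\infty\}$, the reduction of $\sum_{k\in U}\partial_k l$ to boundary edges via the per-edge sum-to-zero identity, Hoeffding/Bernstein plus a union bound over vertex sets to get the noise level $|\partial U|\sqrt{\log n/h_{\HH_n}}$, and a dichotomy that forces each level set to have at least a $\lambda$-fraction of its boundary edges meeting the next shell, so that the step count is controlled by $\diam(\A_{\HH_n}(\lambda))$. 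The role you assign to the covariate perturbation (forcing the step size to dominate $R\|\vv-\v^*\|_\infty$ so that edges landing in the next shell can only contribute mildly negative flux) is exactly the paper's Case 2.

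There is, however, a genuine gap in how you lower-bound the per-edge flux. You linearize via $\int_0^1\nabla^2 l(\t^*+s(\tt-\t^*))\,ds$ and claim the weights $\bm\Lambda_{ij}$ along the segment are bounded below using Assumptions~\ref{ass:edgesize} and \ref{ass:unif-bdd} ``together with the coercivity established in Theorem~\ref{thm:existence}.'' Those assumptions bound only the \emph{true} scores and covariates, and coercivity gives existence of $\tt$ for each fixed $n$ but no uniform-in-$n$ bound on $\|\tt\|_\infty$; at this stage of the argument $\|\tt-\t^*\|_\infty$ could a priori diverge, in which case the intermediate weights $\exp\{\cdot\}/(\sum\exp\{\cdot\})^2$ degenerate exponentially and your additive linearization yields no usable lower bound. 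This is precisely the chicken-and-egg problem the lemma is meant to break. The paper avoids it by never linearizing: the flux difference is written as a difference of two conditional probabilities, and the gain per ``good'' boundary edge is obtained from a purely algebraic ratio inequality (its Lemma~\ref{lm:fraction}) applied to $\exp\{s^*\}/\exp\{\widehat s\}$, which requires only the \emph{relative} gap $\eta_n$ between error coordinates and boundedness of the true scores. Relatedly, because each summand is a difference of probabilities it saturates at $O(1)$ no matter how large the $\phi$-gap is, so your claimed contradiction ``flux exceeds $|\partial A_j|\,\delta$'' must be run with the truncation $\min\{\delta,1\}$ and closed by showing a posteriori that the truncated quantity is below $1$ — a step the paper carries out explicitly and your sketch omits. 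Repairing your argument essentially means replacing the Hessian-integral step by the paper's multiplicative comparison; the rest of your outline then goes through.
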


Now, we provide the proof of Theorem \ref{thm:main} as follows, which consists of two stages.

\subsubsection*{\underline{Stage I}: \texorpdfstring{$\|\widehat{\t}-\t^*\|_\infty = o_p(1)$}{}}
Assumption~\ref{ass:ue} ensures that the MLE $\tt$ uniquely exists. 
We first provide a lower bound of $\|\uu-\u^*\|_\infty$ via $\|\vv-\v^*\|_\infty$.
For $i\in [N]$ and $T_i = (j_{i1}, \ldots, j_{im_i})$ with $1\leq j_{i1}<\cdots<j_{im_i}\leq n$, define
\begin{align*}
	&\Delta {\bm s}^*_i = (s^*_{ir_i(j_{i2})}-s^*_{ir_i(j_{i1})}, \ldots, s^*_{ir_i(j_{im_i})}-s^*_{ir_i(j_{i1})})^\top & s^*_{ij}& = s_{\pi_i(j)}(\t^*; T_i)= u^*_{\pi_i(j)} + X_{T_i, \pi_i(j)}^\top\v^*,\\
	&\Delta {\bm s}_i = (s_{ir_i(j_{i2})}-s_{ir_i(j_{i1})}, \ldots, s_{ir_i(j_{im_i})}-s_{ir_i(j_{i1})})^\top&s_{ij} &  = s_{\pi_i(j)}(\t; T_i) = u_{\pi_i(j)} + X_{T_i, \pi_i(j)}^\top\v,
\end{align*}
where $r_i(\cdot)$ represents the rank of an element on $T_i$. 
Note that both $\Delta {\bm s}^*_i$ and $\Delta {\bm s}_i$ are independent of the comparison outcome $\pi_i$ as the difference is taken based on the ordered tuple $(j_{i1}, \ldots, j_{im_i})$. With these notions, we may write the {\it normalized} likelihood function and its population version as follows:
\begin{align}
	l(\t) &= \frac{1}{N}\sum_{i\in [N]}\sum_{j\in [m_i]}\left[s_{i j}-\log(\sum_{t=j}^{m_i}\exp{s_{it}})\right],\nonumber\\
	\bar{l}(\t) & = \frac{1}{N}\sum_{i\in [N]}\sum_{\pi\in\S(T_i)}\P_{\t^*}\{\pi |~T_i\}\log \P_{\t}\{\pi |~T_i\}\label{cetp},
\end{align} 
where $\P_\t\{\pi |~T\}$ is defined in \eqref{eq:PLDC}. 
Here we have slightly abused the notation $l$ to denote {\it normalized log-likelihood} for convenience. 

Let $\B = \{\t\in\R^{n+d}: \|\t-\t^*\|_\infty\leq 1\}$. As a first step, we show that $l$ and $\bar{l}$ are close to each other on $\B$.  
Under Assumptions~\ref{ass:edgesize} and \ref{ass:unif-bdd}, let $c_i$, $i\geq 1$ denote the absolute constants depending only on $\mM$ and $R$.
We have the following consequences:
\begin{itemize}
	\item Summands of $l(\t)$ are independent random variables uniformly bounded by some constant $c_1>0$ for all $\t\in\B$;
	\item There exists constant $c_2>0$ for all $\t\in\B$,
	\begin{align}
		\max_{i\in [N]}\max_{j\in [m_i]}|s_{ij}|\leq c_2;\label{pangziya}
	\end{align}
	\item Both $l$ and $\bar{l}$ are $c_3$-Lipschitz continuous on $\B$ with respect to $\|\cdot\|_\infty$ for some $c_3>0$. (This step requires the normalization in $l$, otherwise the Lipschitz constant will depend on $N$.)
\end{itemize}
Therefore, for fixed $\t\in\B$, by Hoeffding's inequality, for any $\ttt>0$, 
\begin{align*}
	\P\left\{|l(\t)-\bar{l}(\t)|>\ttt\right\}=\P\left\{|l(\t)-\E[l(\t)]|>\ttt\right\}\leq 2\exp{-\frac{N\ttt^2}{2c^2_1}}. 
\end{align*}
Meanwhile, the $(\ttt/c_3)$-covering number of $\B$ under $\|\cdot\|_\infty$ is bounded by $(c_4/\ttt)^{n+d}\leq (c_4/\ttt)^{2n}$ for some $c_4>0$. Taking $\ttt = c_5\sqrt{(n\log n)/N}$ and applying a union bound yield that, with probability at least $1-n^{-3}$, 
\begin{align}
	\sup_{\t\in\B}|l(\t)-\bar{l}(\t)|\leq 2c_5\sqrt{\frac{n\log n}{N}}.\label{sup:bdd}
\end{align}
Note that $N\geq ({nh_{\HH_n}}/{\mM})$. Under Assumption~\ref{ass:topology}, $n\log n = o(N).$
Therefore $\sup_{\t\in\B}|l(\t)-\bar{l}(\t)| \to 0$ as $n \to \infty$ almost surely. This suggests we may analyze the analytic properties of $\bar{l}$ rather than those of $l$. 

To analyze the population log-likelihood, observe from \eqref{cetp} that $-\bar{l}$ is an average of cross-entropy functions of discrete measures supported on at most $(M!)$ elements. In particular, there exists an absolute constant $c_6>0$ such that $\min_{i\in [N]}\min_{\pi\in\S(\T_i)}\P_{\t^*}\{\pi\mid \T\}>c_6$. Since $\nabla \bar{l}(\t^*)=\bm 0$, it follows from the mean-value theorem and Lemma~\ref{lm:ce} that, for all large $n$,  
\begin{align}
	\bar{l}(\t^*)-\bar{l}(\t) &\geq \frac{c_6}{2N}\sum_{i\in [N]}\sum_{\pi\in\S(\T_i)}(\P_{\t}\{\pi\mid \T_i\}-\P_{\t^*}\{\pi\mid \T_i\})^2&\text{(Lemma~\ref{lm:ce})}&\nonumber\\
	&\gtrsim \frac{c_6}{N}\sum_{i\in [N]}\|\Delta {\bm s}^*_i  - \Delta {\bm s}_i \|_2^2&\text{($**$)}&\nonumber\\
	& = \frac{c_6}{N} \|\bm Q_n^\top (\u-\u^*) + \bm K_n (\v-\v^*)\|^2_2\nonumber\\
	&\geq \frac{c_6\e}{N}( \|\bm Q_n^\top (\u-\u^*)\|_2^2 + \|\bm K_n (\v-\v^*)\|^2_2)&\text{(Assumption~\ref{ass:cosine})}&\nonumber\\
	&\geq c_{7}\|\v-\v^*\|_2^2.&\text{(Assumption~\ref{ass:minK})}&\label{low:bdd}
\end{align}
A detailed derivation of the step ($**$) is as follows: 
\begin{align*}
	&\frac{1}{N}\sum_{i\in [N]}\sum_{\pi\in\S(\T_i)}(\P_{\t}\{\pi\mid \T_i\}-\P_{\t^*}\{\pi\mid \T_i\})^2\\
	\geq&\ \frac{1}{MN}\sum_{i\in [N]}m_i\sum_{\pi\in\S(\T_i)}(\P_{\t}\{\pi\mid \T_i\}-\P_{\t^*}\{\pi\mid \T_i\})^2\ \ \ \ \ \  \text{(Assumption~\ref{ass:edgesize})}\\
	\geq&\ \frac{1}{MN}\sum_{i\in [N]}\sum_{\ell=2}^{m_i}\sum_{\substack{\pi\in\S(T_i)\\ r_\pi(j_{i1})<r_\pi(j_{i\ell})}}(\P_{\t}\{\pi\mid \T_i\}-\P_{\t^*}\{\pi\mid \T_i\})^2\\
	\gtrsim&\ \frac{1}{N}\sum_{i\in [N]}\sum_{\ell=2}^{m_i}\left(\sum_{\substack{\pi\in\S(\T_i)\\ r_\pi(j_{i1})<r_\pi(j_{i\ell})}}\P_{\t}\{\pi\mid \T_i\}-\sum_{\substack{\pi\in\S(\T_i)\\  r_\pi(j_{i1})<r_\pi(j_{i\ell})}}\P_{\t^*}\{\pi\mid \T_i\}\right)^2 \ \ \ \ \ \ \text{(Cauchy--Schwarz)} \\
	=&\ \frac{1}{N}\sum_{i\in [N]}\sum_{\ell=2}^{m_i}\left(\P_{\t}\{j_{i1}\succ j_{i\ell}\mid \{j_{i1}, j_{i\ell}\}_{\T_i}\}-\P_{\t^*}\{j_{i1}\succ j_{i\ell}\mid \{j_{i1}, j_{i\ell}\}_{\T_i}\}\right)^2 \ \ \ \ \text{(Luce's choice axiom)} \\
	=&\ \frac{1}{N}\sum_{i\in [N]}\sum_{\ell=2}^{m_i}\left(\frac{1}{1 + \exp{s_{ir_{i}(j_{i\ell})}-s_{ir_{i}(j_{i1})}}}-\frac{1}{1 + \exp{s^*_{ir_{i}(j_{i\ell})}-s^*_{ir_{i}(j_{i1})}}}\right)^2\\
	\gtrsim&\ \frac{c_6}{N}\sum_{i\in [N]}\|\Delta {\bm s}^*_i  - \Delta {\bm s}_i \|_2^2, \ \ \ \ \ \ \text{(\eqref{pangziya} + mean-value theorem)}
\end{align*}
where $\{j_{i1}, j_{i\ell}\}_{\T_i}$ represents a hypothetical edge containing $j_{i1}$ and $j_{i\ell}$ only and with the same covariates as $\T_i$.  
In particular, the fourth step follows from Luce's choice axiom which states that the probability of $j_{i1}$ beating $j_{i\ell}$ is independent of the presence or absence of other objects under the same covariates. 

\begin{figure}[t]
	\centering 
	\includegraphics[width=0.55\linewidth, trim={0.1cm 0cm 1cm 0cm},clip]{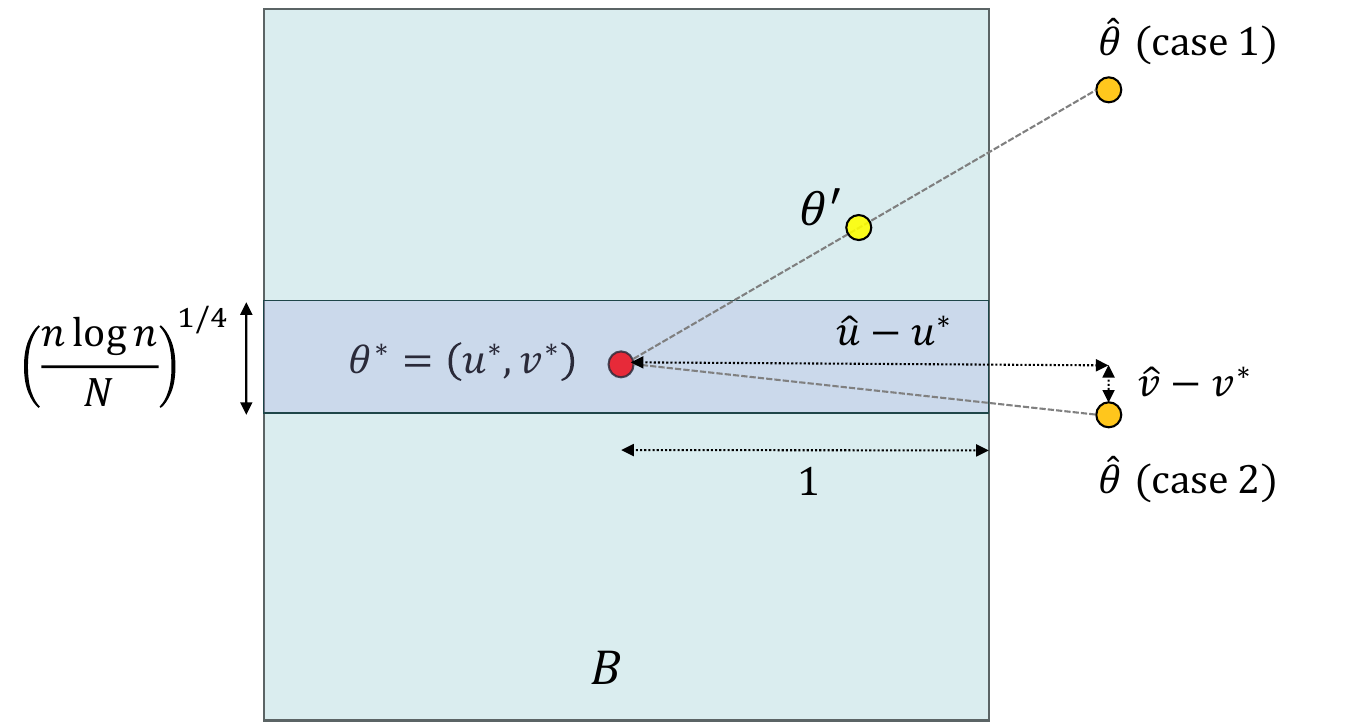}
	\caption{Illustration of the two possible cases when $\widehat{\t}\in\B^\complement$.} \label{fig:1}
\end{figure}

If $\widehat{\t}\in\B$ and $\|\widehat{\v}-\v^*\|^2_2> 4c_5c^{-1}_{7}\sqrt{(n\log n)/N}$, then
\begin{align}
	l(\widehat{\t})\stackrel{\eqref{sup:bdd}}{\leq} \bar{l}(\widehat{\t}) + 2c_5\sqrt{\frac{n\log n}{N}}\stackrel{\eqref{low:bdd}}{\leq}\bar{l}(\t^*) - 2c_5\sqrt{\frac{n\log n}{N}}<l(\t^*),\label{middle}
\end{align}
which is a contradiction. 

If $\widehat{\t}\in\B^\complement$ and  $\|\widehat{\v}-\v^*\|_2^{ {2} }> 4c_5c_{7}^{-1}\sqrt{ (n\log n)/N}$, then consider the line segment $[\widehat{\t}, \t^*]$ between $\widehat{\t}$ and $\t^*$. There are two possible cases, see Figure \ref{fig:1}.
The first is when $[\widehat{\t}, \t^*]\cap\{\t\in\B: \|\v-\v^*\|_2^2>4c_5c_{7}^{ -1} \sqrt{ (n\log n)/N}\} = \t'\neq\emptyset$. But this would lead to $l(\t^*)>l(\t')$ and $l(\widehat{\t})>l(\t')$ by a similar calculation as \eqref{middle}, contradicting the fact that $l(\cdot)$ is concave. 

Hence, we must have $[\widehat{\t}, \t^*]\cap\{\t\in\B: \|\v-\v^*\|_2^{2 }>4c_5c_{7}^{-1}\sqrt{ (n\log n)/N}\}=\emptyset$.
However, this would imply that 
\begin{align*}
	\|\widehat{\v}-\v^*\|_\infty&\lesssim \left(\frac{ n\log n}{N}\right)^{1/4}\|\widehat{\u}-\u^*\|_\infty.
\end{align*}
Meanwhile, via the upper bound given in Lemma~\ref{lm:han}, we obtain
\begin{align*}
	\|\widehat{\v}-\v^*\|_\infty	&\lesssim  \left(\frac{\log n}{h_{\HH_n}}\right)^{1/4}\diam(\A_{\HH_n}(\lambda))\cdot\max\left\{\sqrt{\frac{\log n}{h_{\HH_n}}},\|\widehat{\v}-\v^*\|_\infty\right\}\\
	&\lesssim \left(\frac{\log n}{h_{\HH_n}}\right)^{1/4}\diam(\A_{\HH_n}(\lambda))\cdot\|\widehat{\v}-\v^*\|_\infty\\
	&< \frac{1}{2}\|\widehat{\v}-\v^*\|_\infty.&\text{(Assumption~\ref{ass:topology})}
\end{align*}
This is again a contradiction. Hence, we have shown that $\|\widehat{\v}-\v^*\|_2\lesssim \{(\log n)/h_{\HH_n}\}^{1/4} = o(1)$. 
The error bound for $\widehat{\u}$ follows from \eqref{han} and Assumption~\ref{ass:topology}. Therefore, we have proved that $\|\widehat{\t}-\t^*\|_\infty = o_p(1)$ with respective convergence rates for $\uu$ and $\vv$ as 
\begin{align}
	&\|\widehat{\u}-\u^*\|_\infty\lesssim \diam(\mathcal A_{\HH_n}(\lambda))\left(\frac{\log n}{h_{\HH_n}}\right)^{1/4}, \ \ \ \|\widehat{\v}-\v^*\|_\infty\lesssim \left(\frac{\log n}{h_{\HH_n}}\right)^{1/4}.\label{sub-rates}
\end{align} 

\subsubsection*{\underline{Stage II}: Boosting convergence rates}

In the first step, we have proved that $\|\widehat{\t}-\t^*\|_\infty = o_p(1)$. The proximity between $\tt$ and $\t^*$ allows us to apply a Taylor expansion argument to improve the convergence rate exponent $1/4$ to $1/2$ in \eqref{sub-rates}.  
Without loss of generality, we assume $\|\vv-\v^*\|_\infty\geq\sqrt{(\log n)/h_{\HH_n}}$, since otherwise the desired result follows immediately by appealing to \eqref{han}. 

By the concavity of $l$, the mean-value theorem, and the Cauchy--Schwarz inequality, 
\begin{align}
	-(\tt - \t^*)\nabla^2 l(\bar{\t})(\tt - \t^*) = l(\tt) - l(\t^*) \leq \nabla l(\t^*)^\top(\tt - \t^*)\leq \|\nabla l(\t^*)\|_2\|\tt - \t^*\|_2,\label{bridge}
\end{align}
where $\bar\t$ is some point on the segment connecting $\t^*$ and $\tt$. Since $\|\widehat{\t}-\t^*\|_\infty = o_p(1)$, for all large $n$ we have $\|\t^*-\bar{\t}\|_\infty\leq 1$. Therefore, we can further bound the left-hand side of \eqref{bridge} from below as follows for all large $n$:
\begin{eqnarray}
	&&-(\tt - \t^*)\nabla^2 l(\bar{\t})(\tt - \t^*)\label{pj1}\\
	&\stackrel{\eqref{l:hess},\eqref{l:hess1}}{\geq}&\ \frac{1}{N}\sum_{i\in [N]}\sum_{j\in [m_i-1]}c_8\|[Z_{(i,j,j+1)}, \ldots, Z_{(i,j,m_i)}]^\top(\tt - \t^*)\|_2^2\ \ \ \ \ \ \ \ \ \ \  \ \  \ \ \text{($\|\t^*-\bar{\t}\|_\infty\leq 1$)}\nonumber\\
	&\stackrel{\eqref{myK}}{\geq}&\ \frac{c_8}{N}\|[\bm Q_n^\top, \bm K_n](\tt - \t^*)\|_2^2\nonumber\\
	&\geq &\ \frac{c_8\e}{N}(\|\bm Q_n^\top(\uu-\u^*)\|_2^2 + \|\bm K_n(\vv-\v^*)\|_2^2)    \ \ \ \ \ \ \ \ \ \ \ \ \ \ \ \  \ \ \ \ \  \ \ \ \ \   \ \ \ \ \  \ \text{(Assumption~\ref{ass:cosine})}\nonumber\\
	&\geq&\ c_{9}\|\vv-\v^*\|_2^2, \ \ \ \ \ \ \ \ \ \ \ \ \ \ \  \ \ \ \ \ \ \ \  \ \ \ \ \ \ \ \ \ \ \ \ \ \ \ \ \ \  \ \ \ \ \ \ \ \ \ \ \ \ \  \ \ \ \ \ \   \ \ \ \ \ \ \text{(Assumption~\ref{ass:minK})}\nonumber
\end{eqnarray}
where the appearance of the $1/N$ is due to the overriding of the notation $l$, the notation $Z_{(i,j,t)}$  are defined in \eqref{myZZ}, and the second step uses the fact that the rows of $[\bm{Q}_n^\top, \bm{K}_n]$ are a subset of the rows of the collection of matrices $\{[Z_{(i,j,j+1)}, \ldots, Z_{(i,j,m_i)}]^\top\}_{i\in [N], j\in [m_i-1]}$. 

For the upper bound in \eqref{bridge}, it holds with probability at least $1-n^{-3}$ that
\begin{align}
	\|\nabla_\u l(\t^*)\|_2^2 = \frac{1}{N^2}\sum_{k\in [n]}|\partial_k l(\t^*)|^2\stackrel{\eqref{needthis}}{\lesssim}\frac{1}{N^2}\sum_{k\in [n]}\deg(k)\log n\lesssim \frac{N\log n}{N^2} = \frac{\log n}{N},\label{pj2}
\end{align} 
where the statement in \eqref{needthis} is proved in Section~\ref{sec:hanproof} using standard concentration inequalities. 
By an almost identical argument (noting $\E[\nabla_{\v}l(\v^*)] = \nabla_{\v}l(\vv) = \bm 0$ where the expectation is taken with respect to $\pi_i$),  

\begin{spacing}{0.25}
\begin{align}
	\|\nabla_\v l(\t^*)\|_2^2 \lesssim \frac{d\log n}{N}.\label{pj3}
\end{align}
Substituting \eqref{pj1}, \eqref{pj2}, and \eqref{pj3} into \eqref{bridge} yields
\begin{align*}
	&\|\vv-\v^*\|_2^2 \\
	\lesssim &\ \sqrt{\frac{\log n}{N}}\cdot\sqrt{\|\uu-\u^*\|_2^2+\|\vv-\v^*\|_2^2}\\
	\lesssim &\ \sqrt{\frac{\log n}{N}}\cdot\sqrt{n\|\uu-\u^*\|_\infty^2+\|\vv-\v^*\|_\infty^2}\\
	\lesssim  &\ \diam(\mathcal A_{\HH_n}(\lambda))\sqrt{\frac{n\log n}{N}}\cdot\|\vv-\v^*\|_\infty &\text{(Lemma~\ref{lm:han} \& $\|\vv-\v^*\|_\infty\geq\sqrt{(\log n)/h_{\HH_n}}$)}\\
	\lesssim &\ \diam(\mathcal A_{\HH_n}(\lambda))\sqrt{\frac{\log n}{h_{\HH_n}}}\cdot\|\vv-\v^*\|_2, 
\end{align*} 
which implies $\|\vv-\v^*\|_\infty\lesssim\|\vv-\v^*\|_2\lesssim\diam(\mathcal A_{\HH_n}(\lambda))\sqrt{(\log n)/h_{\HH_n}}$. 
The proof is finished by applying \eqref{han}. 
\end{spacing}

\subsection{Proof of Lemma~\ref{lm:han}}\label{sec:hanproof}

The proof is based on a modification of the chaining technique. 
As we will see shortly, the construction of the chaining neighborhood requires a different choice of radius, which causes additional challenges in analysis. The following elementary lemma is needed in the proof. 
\begin{lemma}\label{lm:fraction}
	Let $a_1, a_2, b_1, b_2>0$. Suppose there exist $0<\e_1, \e_2\leq 1$ such that $a_2\geq \e_1a_1$ and
	\begin{align*}
		\frac{a_2}{b_2}\geq (1+\e_2)\frac{a_1}{b_1}.
	\end{align*}
	Then, 
	\begin{align*}
		\frac{a_1+a_2}{b_1+b_2}\geq \left(1+\frac{\e_1\e_2}{3}\right)\frac{a_1}{b_1}.
	\end{align*}
\end{lemma}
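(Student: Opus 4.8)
The plan is to reduce the claim to a single polynomial inequality and then prove that inequality by combining the two hypotheses linearly. Since $a_1,a_2,b_1,b_2>0$, I can clear denominators: multiplying the target inequality by $b_1(b_1+b_2)>0$ and subtracting the common term $a_1b_1$ shows that, writing $\eta:=\e_1\e_2/3$, the assertion $\frac{a_1+a_2}{b_1+b_2}\ge(1+\eta)\frac{a_1}{b_1}$ is equivalent to
\begin{equation*}
	a_2 b_1 \ \ge\ \eta\,a_1 b_1 + (1+\eta)\,a_1 b_2. \tag{$\star$}
\end{equation*}
The virtue of this form is that both hypotheses are themselves linear inequalities in the three products $a_1b_1,\ a_1b_2,\ a_2b_1$: the ratio bound $a_2/b_2\ge(1+\e_2)a_1/b_1$ reads $a_2b_1\ge(1+\e_2)a_1b_2$, and multiplying $a_2\ge\e_1a_1$ by $b_1$ gives $a_2b_1\ge\e_1a_1b_1$.

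The key step is to take a convex combination of these two bounds with the weight $\theta:=1-\e_2/3\in(2/3,1)$, i.e.\ $\theta$ times the ratio bound plus $(1-\theta)$ times the second, yielding
\begin{equation*}
	a_2 b_1 \ \ge\ \theta(1+\e_2)\,a_1 b_2 + (1-\theta)\e_1\,a_1 b_1.
\end{equation*}
It then remains only to check that the two coefficients on the right dominate those demanded by $(\star)$. For the $a_1b_1$ term one has $(1-\theta)\e_1=\tfrac{\e_2}{3}\e_1=\eta$, an exact match. For the $a_1b_2$ term, $\theta(1+\e_2)=(1-\tfrac{\e_2}{3})(1+\e_2)=1+\tfrac{\e_2(2-\e_2)}{3}$, and comparing with the required coefficient $1+\eta=1+\tfrac{\e_1\e_2}{3}$ reduces the needed inequality to $2-\e_2\ge\e_1$, which holds since $\e_1\le1$ and $\e_2\le1$ give $2-\e_2\ge1\ge\e_1$. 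Substituting both coefficient bounds into the displayed combination produces exactly $(\star)$.

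The main obstacle is conceptual rather than computational: neither hypothesis is adequate by itself. Using only the ratio bound yields $a_2b_1-a_1b_2\ge\e_2 a_1 b_2$, which is too weak precisely when $b_2\ll b_1$ (the second fraction then carries almost no weight in the mediant), and it is exactly in that regime that the side condition $a_2\ge\e_1a_1$ must be invoked. The crux is to recognize that a single convex combination with the right weight $\theta$ covers both regimes uniformly; optimizing this combination is what produces the constant $1/3$, after which the verification of the coefficient inequalities is routine.
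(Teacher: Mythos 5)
Your proof is correct. The reduction to the inequality $(\star)$, $a_2 b_1 \ge \eta\, a_1 b_1 + (1+\eta)\, a_1 b_2$ with $\eta=\e_1\e_2/3$, is an exact restatement of the claim, the convex combination with weight $\theta=1-\e_2/3$ is legitimate since both hypotheses lower-bound the same quantity $a_2b_1$, and the two coefficient checks ($(1-\theta)\e_1=\eta$ exactly, and $\theta(1+\e_2)\ge 1+\eta$ reducing to $2-\e_2\ge\e_1$) are correct. Your route is genuinely different from the paper's: the paper argues that it suffices to treat the extremal case where both hypotheses hold with equality, $a_2=\e_1 a_1$ and $b_2=\e_1 b_1/(1+\e_2)$, and then verifies $\frac{(1+\e_1)(1+\e_2)}{1+\e_1+\e_2}\ge 1+\frac{\e_1\e_2}{3}$ directly using $\e_1+\e_2\le 2$. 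The paper's reduction is shorter and incidentally exhibits the sharp constant $\frac{\e_1\e_2}{1+\e_1+\e_2}$, but it leaves the monotonicity justification for the reduction (that the target ratio is decreasing in $b_2$ and, after saturating $b_2$, increasing in $a_2$) implicit; your linearization avoids that step entirely and makes the origin of the factor $1/3$ transparent through the choice of $\theta$, at the cost of a slightly longer coefficient verification. Both proofs are valid.
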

\begin{proof}[Proof of Lemma~\ref{lm:fraction}]
	It suffices to consider the case where $a_1= a_2/\e_1$ and $a_2/b_2 = (1+\e_2)a_1/b_1$. 
	In this case, 
	\begin{align*}
		\frac{a_1+a_2}{b_1+b_2}&= \frac{a_1 + \e_1a_1}{b_1 + \frac{\e_1 b_1}{(1+\e_2)}}\geq \left(1+\frac{\e_1\e_2}{3}\right)\frac{a_1}{b_1}.
	\end{align*}
	We finish the proof of Lemma~\ref{lm:fraction}.
\end{proof}

To begin with, note that under our choice of identifiability constraint, $\bm 1^\top(\uu-\u^*)=0$. 
Therefore, the largest and smallest components of $\uu-\u^*$ have opposite signs, that is,
\begin{align*}
	\|\uu-\u^*\|_\infty \leq  \max_{k\in [n]}(\widehat{u}_k-u^*_k) -  \min_{k\in [n]}(\widehat{u}_k-u^*_k). 
\end{align*}
To further bound the right-hand side, define
\begin{equation*}
	\alpha\in \arg\max_{k\in [n]}(\widehat{u}_k-u^*_k), \ \beta \in \arg\min_{k\in [n]}(\widehat{u}_k-u^*_k).
\end{equation*}
We will construct two nested sequences of the neighborhood centered at $\alpha$ and $\beta$ respectively and stop when they intersect, thereby chaining $\alpha$ and $\beta$ together. We will see that with an appropriate choice of neighborhood radius, the resulting neighborhood sequences are $\lambda$-weakly admissible for some absolute constant $\lambda>0$ so that their diameter can be bounded using $\diam(\A_{\HH_n}(\lambda))$. 

To proceed, let $c>0$ be an absolute constant that will be specified later.  
Consider the two sequences of neighbors started at $\alpha$ and $\beta$ respectively defined recursively as follows. 
Recall $R$ defined in Assumption~\ref{ass:unif-bdd} and $h_{\HH_n}$ defined in Definition \ref{def:ch}. 
For $\fgamma \geq 1$,  
\begin{equation}\label{olkui}
	\begin{aligned}
		&B_\fgamma = \{k: (\widehat{u}_\alpha-u_\alpha^*) - (\widehat{u}_k-u_k^*) \leq (\fgamma-1)\cdot\eta_n \},\\
		&D_\fgamma = \{k: (\widehat{u}_k-u_k^*) - (\widehat{u}_\beta-u_\beta^*) \leq (\fgamma-1)\cdot\eta_n \},
	\end{aligned}
\end{equation}
where 
\begin{align}
	\eta_n:= \max\left\{c\sqrt{\frac{\log n}{h_{\HH_n}}}, 4R\|\vv-\v^*\|_\infty\right\}.\label{myeta}
\end{align} 
Let $\fGamma_{n,1}$ and $\fGamma_{n,2}$ be two random variables defined as follows:
\begin{align*}
	&\fGamma_{n,1} = \min\left\{\fgamma: |B_\fgamma|>\frac{n}{2}\right\}& \fGamma_{n,2} = \min\left\{\fgamma: |D_\fgamma|>\frac{n}{2}\right\}.
\end{align*}
Under the above construction, $\alpha$ and $\beta$ can be chained together using $\{B_\fgamma\}_{\fgamma\in [\fGamma_{n,1}]}\cup \{D_\fgamma\}_{\fgamma\in [\fGamma_{n,2}]}$.
If we can show that both $\{B_\fgamma\}_{\fgamma\in [\fGamma_{n,1}]}$ and $\{D_\fgamma\}_{\fgamma\in [\fGamma_{n,2}]}$ are $\lambda$-weakly admissible sequences for some $\lambda>0$, then we can bound $ \max_{k\in [n]}(\widehat{u}_k-u^*_k) -  \min_{k\in [n]}(\widehat{u}_k-u^*_k)$ as follows to complete the proof:
\begin{align*}
	\max_{k\in [n]}(\widehat{u}_k-u^*_k) -  \min_{k\in [n]}(\widehat{u}_k-u^*_k)\leq 2\diam(\A_{\HH_n}(\lambda))\cdot\max\left\{c\sqrt{\frac{\log n}{h_{\HH_n}}}, 4R\|\vv-\v^*\|_\infty\right\}.
\end{align*}
Therefore, it remains to verify that both $\{B_\fgamma\}_{\fgamma\in [\fGamma_{n,1}]}$ and $\{D_\fgamma\}_{\fgamma\in [\fGamma_{n,2}]}$ are weakly admissible. For convenience, we only prove for $\{B_\fgamma\}_{\fgamma\in [\fGamma_{n,1}]}$ as the other is similar. In the subsequent analysis, we use $c_i, i\geq 1$ to denote absolute constants that depend only on $m$ and $R$. 

Recall the log-likelihood function is given by 
\begin{align*}
	&l(\t) = \sum_{i\in [N]}\sum_{j\in [m_i]}\left[s_{i j}-\log(\sum_{t=j}^{m_i}\exp{s_{it}})\right]&s_{ij}=s_{\pi_i(j)}(\t; \T_i),
\end{align*}
where the indices in $s_{ij}$ refer to the comparison index and the comparison outcome's rank, respectively. 
For notational convenience, we denote the log scores under $\tt$ and $\t^*$ as 
\begin{align*}
	&\widehat{s}_{ij}=s_{\pi_i(j)}(\tt; \T_i)&s^*_{ij}=s_{\pi_i(j)}(\t^*; \T_i). 
\end{align*}

By the first-order optimality condition of the MLE, the partial derivative with respect to the $k$th component of $\u$ is 
\begin{align}
	&\partial_{k}l(\tt) = \sum_{i: k\in \T_i}\psi(k; \T_i, \pi_i, \tt) = 0& k\in [n],\label{eq:1}
\end{align}
where  
\begin{align*}
	\psi(k; \T_i, \pi_i, \t) = 1 - \sum_{j\in [r_i(k)]}\frac{\exp\{s_{ir_i(k)}\}}{\sum_{t=j}^{m_i} \exp{s_{it}}},
\end{align*} 
for every $k\in \T_i$, $\pi_i\in\mathcal S(\T_i)$, $\t = [\bm u^\top, \v^\top]^\top\in\mathbb R^{n+d}$, and $r_i(k)$ is the rank of $k$ in $\pi_i$. 
Fixing $\T_i$, $\pi_i$, and $\t$, a crucial property of $\psi$ is that the sum over objects in a particular edge is zero:
\begin{align}
	\sum_{k\in \T_i}\psi(k; \T_i, \pi_i, \t) &= m_i - \sum_{k\in \T_i}\sum_{j\in [r_i(k)]}\frac{\exp\{s_{ir_i(k)}\}}{\sum_{t=j}^{m_i} \exp{s_{it}}}=0,\label{sm}\end{align}
where the last step follows by changing the order of summation. 
On the other hand, the true parameter $\t^*$ satisfies 
\begin{align}
	&\mathbb E[\psi(k; \T_i, \pi_i, \t^*)] = 0\label{eq:2},
\end{align}
for every $k, \T_i$.
Combining \eqref{eq:1}--\eqref{eq:2} yields the following probabilistic upper bounds for all $U\subseteq [n]$:
\begin{align}
	\sum_{k\in U}\partial_{k}l(\tt)-\sum_{k\in U}\partial_{k}l(\t^*)&\stackrel{\eqref{eq:1}, \eqref{eq:2}}{=} \sum_{k\in U}(\mathbb E[\partial_{k}l(\t^*)]-\partial_{k}l(\t^*))\nonumber\\
	& = \sum_{k\in U}\sum_{i: k\in \T_i}(\mathbb E[\psi(k; \T_i, \pi_i, \t^*)]-\psi(k; \T_i, \pi_i, \t^*))\nonumber\\
	&\stackrel{\eqref{sm}}{=}\sum_{i: \T_i\in\partial U}\sum_{k\in U\cap \T_i}(\mathbb E[\psi(k; \T_i, \pi_i, \t^*)]-\psi(k; \T_i, \pi_i, \t^*)).\label{kjl}
\end{align}
Under Assumption~\ref{ass:unif-bdd}, $\{\sum_{k\in U\cap \T_i}\psi(k; \T_i, \pi_i, \t^*)\}_{i: \T_i\in\partial U}$ are independent random variables uniformly bounded by some $c_1>0$, {where the randomness only comes from the comparison outcomes $\{\pi_i\}_{i\in [N]}$.}
By Hoeffding's inequality, with probability at least $1-n^{-4|U|}$, \eqref{kjl} is bounded by $c_2\sqrt{|U||\partial U|\log n}$.
To apply the result to $B_\fgamma$ with all $\fgamma< \fGamma_{n,1}$, which is random and satisfies $|B_\fgamma|\leq n/2$, we need to apply a union bound over $U$ with $|U|\leq n/2$ to obtain
\begin{align}
	&\mathbb P\left\{\text{$\sum_{k\in U}\partial_{k}l(\tt)-\sum_{k\in U}\partial_{k}l(\t^*)\leq c_2\sqrt{|U||\partial U|\log n}, \text{for any } U\subset [n]$ with $|U|\leq n/2$}\right\}\nonumber\\
	&\geq 1-\sum_{t=1}^{n/2}{n\choose t}n^{-4t}\geq 1-n^{-2}. \label{needthis}
\end{align}
Specifically, with probability at least $1-n^{-2}$,  
\begin{align}
	&\sum_{k\in B_\fgamma}\partial_{k}l(\tt)-\sum_{k\in B_\fgamma}\partial_{k}l(\t^*)\leq c_2\sqrt{|B_\fgamma||\partial B_\fgamma|\log n}& \fgamma< \fGamma_{n,1}.\label{upper}
\end{align}
We now derive a lower bound for the left-hand side of \eqref{upper}, from which we deduce the desired weakly admissible property. We start by rewriting \eqref{kjl} as 
\begin{align}
	&\sum_{k\in B_\fgamma}\partial_{k}l(\tt)-\sum_{k\in B_\fgamma}\partial_{k}l(\t^*)\nonumber\\
	\stackrel{\eqref{sm}}{=}&\ \sum_{i: \T_i\in\partial B_\fgamma}\sum_{k\in B_\fgamma\cap \T_i}\sum_{j\in [r_i(k)]}\left(\frac{\exp\{\widehat{s}_{ir_i(k)}\}}{\sum_{t=j}^{m_i} \exp{\widehat{s}_{it}}}-\frac{\exp\{s^*_{ir_i(k)}\}}{\sum_{t=j}^{m_i} \exp{s^*_{it}}}\right)\nonumber\\
	=&\ \sum_{i: \T_i\in\partial B_\fgamma}\sum_{j = 1}^{\max_{k\in B_\fgamma \cap \T_i}r_i(k)}\left(\frac{\sum_{t\geq j: \pi_i(t)\in B_\fgamma}\exp{\widehat{s}_{it}}}{\sum_{t=j}^{m_i} \exp{\widehat{s}_{it}}}-\frac{\sum_{t\geq j: \pi_i(t)\in B_\fgamma}\exp{s^*_{it}}}{\sum_{t=j}^{m_i} \exp{s^*_{it}}}\right). \label{zhaohan}
\end{align}

So far, the proof is similar to \cite[Theorem 5.1]{han2023unified}. However, the summand in \eqref{zhaohan} is no longer guaranteed to be non-negative based on the unilateral chaining on $\uu$, which is different from their situation thus making obtaining a useful lower bound more involved. To address this issue, we consider two cases depending on whether $\T_i\cap (B_{\fgamma+1}\setminus B_\fgamma)=\emptyset$ or not. 

\noindent\underline{\bf Case 1}: $\T_i\cap (B_{\fgamma+1}\setminus B_\fgamma)=\emptyset$. 
By definition, for all objects $k\in \T_i\cap B_\fgamma^\complement\subseteq B_{\fgamma+1}^\complement$,  
\begin{align*}
	\widehat{u}_{\alpha}-u^*_\alpha-(\widehat{u}_{k}-u^*_k)\geq \fgamma\eta_n. 
\end{align*}
Meanwhile, for the remaining vertices $j\in \T_i\cap B_\fgamma$, 
\begin{align*}
	\widehat{u}_{\alpha}-u^*_\alpha-(\widehat{u}_{j}-u^*_j)\leq (\fgamma-1)\eta_n.
\end{align*}
Combining both yields $\widehat{u}_{j}-u^*_j-(\widehat{u}_{k}-u^*_k)\geq\eta_n$. Therefore,  
\begin{align*}
	\widehat{s}_{ir_i(j)}-s^*_{ir_i(j)} - (\widehat{s}_{ir_i(k)}-s^*_{ir_i(k)})&=\widehat{u}_{j}-u^*_j-(\widehat{u}_{k}-u^*_k) + (X_{\T_i, j}-X_{\T_i, k})^\top (\vv -\v^*)\\
	&\geq \eta_n - 2R\|\vv -\v^*\|_\infty
	\stackrel{\eqref{myeta}}{\geq} \frac{1}{2}\eta_n,
\end{align*}
which implies 
\begin{align}
	\frac{\exp\{s^*_{ir_i(k)}\}}{\exp\{\widehat{s}_{ir_i(k)}\}}\geq
	\exp{\frac{\eta_n}{2}}\times\frac{\exp\{s^*_{ir_i(j)}\}}{\exp\{\widehat{s}_{ir_i(j)}\}}\geq \left(1+\frac{\eta_n}{2}\right)\times\frac{\exp\{s^*_{ir_i(j)}\}}{\exp\{\widehat{s}_{ir_i(j)}\}}.\label{gap}
\end{align}
Consequently, taking the ratio of the corresponding summand in \eqref{zhaohan} (assuming it is non-empty which occurs if all objects ranked after $j$ are from $B_\fgamma$) yields:
\begin{align}
	&\frac{\sum_{t\geq j: \pi_i(t)\in B_\fgamma}\exp{\widehat{s}_{it}}}{\sum_{t\geq j: \pi_i(t)\in B_\fgamma}\exp{s^*_{it}}}\times\frac{\sum_{t=j}^{m_i} \exp{s^*_{it}}}{\sum_{t=j}^{m_i} \exp{\widehat{s}_{it}}}\nonumber\\
	=&\ \frac{\sum_{t\geq j: \pi_i(t)\in B_\fgamma}\exp{\widehat{s}_{it}}}{\sum_{t\geq j: \pi_i(t)\in B_\fgamma}\exp{s^*_{it}}}\times\frac{\sum_{t\geq j: \pi_i(t)\in B_\fgamma} \exp{s^*_{it}}+\sum_{t\geq j: \pi_i(t)\in B^\complement_{\fgamma+1}} \exp{s^*_{it}}}{\sum_{t\geq j: \pi_i(t)\in B_\fgamma} \exp{\widehat{s}_{it}}+\sum_{t\geq j: \pi_i(t)\in B^\complement_{\fgamma+1}} \exp{\widehat{s}_{it}}}\label{aer1}.
\end{align}
To analyze the last term, we appeal to Lemma~\ref{lm:fraction} to obtain
\begin{align}
	&\frac{\sum_{t\geq j: \pi_i(t)\in B_\fgamma} \exp{s^*_{it}}+\sum_{t\geq j: \pi_i(t)\in B^\complement_{\fgamma+1}} \exp{s^*_{it}}}{\sum_{t\geq j: \pi_i(t)\in B_\fgamma} \exp{\widehat{s}_{it}}+\sum_{t\geq j: \pi_i(t)\in B^\complement_{\fgamma+1}} \exp{\widehat{s}_{it}}}\nonumber\\
	&\stackrel{\eqref{gap}}{\geq} (1+c_3\min\{\eta_n, 1\})\times\frac{\sum_{t\geq j: \pi_i(t)\in B_\fgamma} \exp{s^*_{it}}}{\sum_{t\geq j: \pi_i(t)\in B_\fgamma} \exp{\widehat{s}_{it}}},\label{aer2}
\end{align}
where the truncation is used to comply with the assumptions in Lemma~\ref{lm:fraction}. 
Hence, 
\begin{align}
	&\frac{\sum_{t\geq j: \pi_i(t)\in B_\fgamma}\exp{\widehat{s}_{it}}}{\sum_{t=j}^{m_i} \exp{\widehat{s}_{it}}}-\frac{\sum_{t\geq j: \pi_i(t)\in B_\fgamma}\exp{s^*_{it}}}{\sum_{t=j}^{m_i} \exp{s^*_{it}}}\nonumber\\
	&\stackrel{\eqref{aer1}, \eqref{aer2}}{\geq}c_3\min\{\eta_n, 1\}\times\frac{\sum_{t\geq j: \pi_i(t)\in B_\fgamma}\exp{s^*_{it}}}{\sum_{t=j}^{m_i} \exp{s^*_{it}}}
	\geq c_4\min\{\eta_n, 1\}.\label{key1} 
\end{align}
\underline{\bf Case 2}: $\T_i\cap (B_{\fgamma+1}\setminus B_\fgamma)\neq \emptyset$. 
In this case, we do not know whether the summand is positive or negative; however, we can also bound it from below using $\min\{\eta_n, 1\}$. To see this, note for any $k\in \T_i\cap (B_{\fgamma+1}\setminus B_\fgamma)$ and $j\in \T_i\cap B_{\fgamma}$,  
\begin{align*}
	\widehat{s}_{ir_i(j)}-s^*_{ir_i(j)} - (\widehat{s}_{ir_i(k)}-s^*_{ir_i(k)})\geq -2R\|\vv-\v^*\|_\infty\geq -\frac{\eta_n}{2},
\end{align*}
which implies 
\begin{align}
	\frac{\exp\{s^*_{ir_i(k)}\}}{\exp{\widehat{s}_{ir_i(k)}}}\geq \exp{-\frac{\eta_n}{2}}\times\frac{\exp\{s^*_{ir_i(j)}\}}{\exp{\widehat{s}_{ir_i(j)}}}\stackrel{\text{if $\eta_n\leq 1$}}{\geq} \left(1-\frac{\eta_n}{4}\right)\times\frac{\exp\{s^*_{ir_i(j)}\}}{\exp{\widehat{s}_{ir_i(j)}}}. \label{gap1}
\end{align}
Thus, if $\eta_n\leq 1$, taking the ratio of the corresponding summand in \eqref{zhaohan} as before yields
{\fontsize{9}{12}
	\begin{align*}
		&\frac{\sum_{t\geq j: \pi_i(t)\in B_\fgamma}\exp{\widehat{s}_{it}}}{\sum_{t\geq j: \pi_i(t)\in B_\fgamma}\exp{s^*_{it}}}\times\frac{\sum_{t=j}^{m_i} \exp{s^*_{it}}}{\sum_{t=j}^{m_i} \exp{\widehat{s}_{it}}}\nonumber\\
		=&\frac{\sum_{t\geq j: \pi_i(t)\in B_\fgamma}\exp{\widehat{s}_{it}}}{\sum_{t\geq j: \pi_i(t)\in B_\fgamma}\exp{s^*_{it}}}\times\frac{\sum_{t\geq j: \pi_i(t)\in B_\fgamma} \exp{s^*_{it}}+\sum_{t\geq j: \pi_i(t)\in B_{\fgamma+1}\setminus B_\fgamma} \exp{s^*_{it}}+\sum_{t\geq j: \pi_i(t)\in B^\complement_{\fgamma+1}} \exp{s^*_{it}}}{\sum_{t\geq j: \pi_i(t)\in B_\fgamma} \exp{\widehat{s}_{it}}+\sum_{t\geq j: \pi_i(t)\in B_{\fgamma+1}\setminus B_\fgamma} \exp{\widehat{s}_{it}}+\sum_{t\geq j: \pi_i(t)\in B^\complement_{\fgamma+1}} \exp{\widehat{s}_{it}}}\\
		\geq&\  \frac{\sum_{t\geq j: \pi_i(t)\in B_\fgamma}\exp{\widehat{s}_{it}}}{\sum_{t\geq j: \pi_i(t)\in B_\fgamma}\exp{s^*_{it}}}\times\min\left\{\frac{\sum_{t\geq j: \pi_i(t)\in B_\fgamma} \exp{s^*_{it}}}{\sum_{t\geq j: \pi_i(t)\in B_\fgamma} \exp{\widehat{s}_{it}}}, \frac{\sum_{t\geq j: \pi_i(t)\in B_{\fgamma+1}\setminus B_\fgamma} \exp{s^*_{it}}}{\sum_{t\geq j: \pi_i(t)\in B_{\fgamma+1}\setminus B_\fgamma} \exp{\widehat{s}_{it}}}, \frac{\sum_{t\geq j: \pi_i(t)\in B^\complement_{\fgamma+1}} \exp{s^*_{it}}}{\sum_{t\geq j: \pi_i(t)\in B^\complement_{\fgamma+1}} \exp{\widehat{s}_{it}}}\right\}\\
		=&\ \min\left\{1, \frac{\sum_{t\geq j: \pi_i(t)\in B_\fgamma}\exp{\widehat{s}_{it}}}{\sum_{t\geq j: \pi_i(t)\in B_\fgamma}\exp{s^*_{it}}}\times\frac{\sum_{t\geq j: \pi_i(t)\in B_{\fgamma+1}\setminus B_\fgamma} \exp{s^*_{it}}}{\sum_{t\geq j: \pi_i(t)\in B_{\fgamma+1}\setminus B_\fgamma} \exp{\widehat{s}_{it}}}\right\}\stackrel{\eqref{gap1}}{\geq}1-\frac{\eta_n}{4}.
\end{align*}}
As a result, 
\begin{align}
	&\frac{\sum_{t\geq j: \pi_i(t)\in B_\fgamma}\exp{\widehat{s}_{it}}}{\sum_{t=j}^{m_i} \exp{\widehat{s}_{it}}}-\frac{\sum_{t\geq j: \pi_i(t)\in B_\fgamma}\exp{s^*_{it}}}{\sum_{t=j}^{m_i} \exp{s^*_{it}}}\nonumber\\
	&\geq -\frac{\eta_n}{4}\times\frac{\sum_{t\geq j: \pi_i(t)\in B_\fgamma}\exp{s^*_{it}}}{\sum_{t=j}^{m_i} \exp{s^*_{it}}}\geq-\eta_n = -\min\{\eta_n, 1\}. \label{key2}
\end{align}
If $\eta_n>1$, the last step still holds.

Now define $|\{e\in\partial B_\fgamma: e\cap (B_{\fgamma+1}\setminus B_\fgamma)\neq\emptyset\}|/|\partial B_\fgamma| = \kappa$. 
Substituting \eqref{key1} and \eqref{key2} into \eqref{zhaohan} and combining with \eqref{upper}, we have 
\begin{align*}
	[c_4(1-\kappa)-\mM\kappa]\min\{1, \eta_n\}|\partial B_\fgamma|\leq c_2\sqrt{|B_\fgamma||\partial B_\fgamma|\log n}. 
\end{align*}
If $\kappa\leq \lambda:=c_4/(2c_4+2M)$, then 
\begin{align*}
	\frac{c_4}{2}\min\{1, \eta_n\}\leq c_2\sqrt{\frac{\log n}{h_{\HH_n}}}\xrightarrow{n\to\infty} 0,
\end{align*}
which implies $\eta_n<1$, so that 
\begin{align*}
	\max\left\{c\sqrt{\frac{\log n}{h_{\HH_n}}}, 4R\|\vv-\v^*\|_\infty\right\}\leq \frac{2c_2}{c_4}\sqrt{\frac{\log n}{h_{\HH_n}}}.
\end{align*}
For all $c>2c_2/c_4$, this is a contradiction. (Note that such a choice can be made at the beginning as both $c_2$ and $c_4$ are absolute constants.) Hence, $\kappa > \lambda$.
This implies that $\{B_\fgamma\}_{\fgamma\in [\fGamma_{n,1}]}$ is a $\lambda$-weakly admissible sequence.

\subsection{Proof of Proposition~\ref{prop:1+}}\label{p:prop5}
For each fixed $n, i, j$ and $\v$ with $\|\v\|_\infty=1$, 

\begin{align}
	\E[|\Delta X_i[j,:]\v|_2^2] = \v^\top\mathrm{Cov}[\Delta X_i[j,:]]\v\geq\min_{\|\bm w\|_2=1}\bm w^\top\mathrm{Cov}[\Delta X_i[j,:]]\bm w\geq\delta'^2,\label{hgyes}
\end{align}
where the last step follows from the assumption $\mathrm{Cov}[\Delta X_i[j,:]]\geq\delta'^2\bm I$. Meanwhile, under Assumption~\ref{ass:unif-bdd}, it follows from H\"{o}lder's inequality that $|\Delta X_i[j,:]\v|\leq \|\Delta X_i[j,:]\|_1\|\v\|_\infty< 2R$, so that 
\begin{align}
	\E[|\Delta X_i[j,:]\v|_2^2]\leq \frac{\delta'^2}{4} + 4R^2\P\left\{|\Delta X_i[j,:]\v|>\frac{\delta'}{2}\right\}. \label{hgyes1}
\end{align}
Combining \eqref{hgyes} and \eqref{hgyes1} yields
\begin{align*}
	\P\left\{|\Delta X_i[j,:]\v|>\frac{\delta'}{2}\right\}\geq\frac{3\delta'^2}{16R^2}. 
\end{align*}
Consequently, either 
\begin{equation}
	\P\left\{\Delta X_i[j,:]\v>\frac{\delta'}{2}\right\}\geq\frac{3\delta'^2}{32R^2} \ \ \text{or} \ \ \P\left\{\Delta X_i[j,:]\v<-\frac{\delta'}{2}\right\}\geq\frac{3\delta'^2}{32R^2}.\label{pangzai1}
\end{equation}
Without loss of generality, we may assume $\delta'<1$ and $R\geq 1$.
The first part of \eqref{pangzai1} is exactly what we wish to obtain. On the other hand, if the latter part of \eqref{pangzai1} occurs, then 
\begin{align*}
	\E[\Delta X_i[j,:]\v \cdot \mathbb I_{\{\Delta X_i[j,:]\v>0 \} }] &= -\E[\Delta X_i[j,:]\v\cdot \mathbb I_{\{\Delta X_i[j,:]\v\leq0\}}] \quad\quad (\text{$\E[\Delta X_i[j,:]] = \bm 0$})\\
	&\geq \frac{\delta'}{2}\cdot\P\left\{\Delta X_i[j,:]\v<-\frac{\delta'}{2}\right\}\\
	&\geq\frac{3\delta'^3}{64R^2}, 
\end{align*}
Considering the random variable $\Delta X_i[j,:]\v \cdot \mathbb I_{\{\Delta X_i[j,:]\v>0 \} }$ and applying a similar argument as {\eqref{hgyes1}}, one shows that 
\begin{align}
	\P\left\{\Delta X_i[j,:]\v>\frac{3\delta'^3}{128R^2}\right\}\geq \frac{3\delta'^3}{256R^3}. \label{pangzai2}
\end{align}
Combining the first part of \eqref{pangzai1} and  \eqref{pangzai2} yields $\P\{\Delta X_i[j,:]\v>\delta\}>c_0$, where
\begin{align*}
	c_0 = \frac{3\delta'^3}{256R^3} \ \ \text{and} \ \ \delta = \frac{3\delta'^3}{128R^2}. 
\end{align*}
Assumption~\ref{ass:avoidpa} follows by first taking minimum over $j$ and $i$ followed by sending $n\to\infty$. 

\subsection{Proof of Proposition~\ref{prop:dl}}

We first note that $\bm K_n^\top\bm K_n = \sum_{i\in [N]}\Delta X_i^\top\Delta X_i$ is a sum of independent self-adjoint random matrices, where $\Delta X_i$ are defined in \eqref{DeltaX} and uniformly bounded under Assumption~\ref{ass:unif-bdd}. Under Assumption~\ref{ass:avoidpa}, there exist $\delta>0$ such that for all large $n$ and $i\in [N]$, 
\begin{align*}
	\sigma_{\min}(\E[\Delta X_i^\top\Delta X_i]) = \min_{\|\v\|_2=1}\E[\|\Delta X_i\bm v\|_2^2]\stackrel{\eqref{mass}}{\geq}\frac{c_0\delta^2}{2d},
\end{align*}
where $c_0$ is defined in \eqref{mass} and the last step follows from the Cauchy--Schwarz inequality and Markov's inequality. Consequently, $\sigma_{\min}(\E[\bm K_n^\top\bm K_n])\gtrsim N\geq n$. It then follows from the matrix Chernoff bound \cite[Theorem 1.1]{tropp2012user} that, with probability at least $1-n^{-3}$,  
\begin{align*}
	\sigma_{\min}(\K) = \sqrt{\sigma_{\min}(\bm K_n^\top\bm K_n)}\geq \sqrt{\frac{1}{2}\sigma_{\min}(\E[\bm K_n^\top\bm K_n])}\gtrsim \sqrt{N},
\end{align*}
proving the first statement. 

To prove the second statement, we first consider the setting where $M=2$. In this case, all edges of $\HH_n$ are pairwise, so $\bm K_n$ has independent rows. Now we consider the random process $G(\bm x, \bm z) = \bm x^\top\K\bm z$ indexed by $(\bm x, \bm z)\in\Xi:=\mathbb S(\range(\bm Q_n^\top))\times \mathbb S(\R^d)$, where $\mathbb S(\cdot)$ denotes the unit sphere in the corresponding (sub)space. Endowing $\Xi$ with the metric ${\mathsf D}((\bm x, \bm z), (\bm x', \bm z')) = \|\bm x-\bm x'\|_2 + \|\bm z-\bm z'\|_2$, $G(\bm x, \bm z)$ has subgaussian increments with respect to ${\mathsf D}$; see \cite[Lemma 3.4.2, Definition 8.1.1]{vershynin2018high}. Let $\mathcal N(\Xi, {\mathsf D}, t)$ be the $t$-covering number of $\Xi$ under ${\mathsf D}$, and we have $\log\mathcal N(\Xi, {\mathsf D}, t)\lesssim (n+d)\log (1/t)$ and $\log\mathcal N(\Xi, {\mathsf D}, t) = 0$ if $t>\diam(\Xi)$. By Dudley's inequality \cite[Theorem 8.1.6]{vershynin2018high}, with probability at least $1-n^{-3}$, 
\begin{align}
	\sup_{(\bm x, \bm z)\in\Xi}|G(\bm x, \bm z)|\lesssim\int_0^\infty\sqrt{\log\mathcal N(\Xi, {\mathsf D}, t)} \d t + \sqrt{\log n}\lesssim\sqrt{n+d}. \label{dudley}
\end{align}
Conditioning on the intersection of the events $\sigma_{\min}(\K)\gtrsim\sqrt{N}$ and \eqref{dudley}, which holds with probability at least $1-2n^{-3}$,  
\begin{align}
	\sup_{\substack{\bm x\in \range(\bm Q_n^\top), \bm y\in\range(\K)\\ \|\bm x\|_2=\|\bm y\|_2 = 1}}\bm x^\top \bm y = \sup_{(\bm x, \bm z)\in\Xi}\frac{|G(\bm x, \bm z)|}{\|\K\bm z\|_2}\lesssim  \frac{1}{\sqrt{N}}\sup_{(\bm x, \bm z)\in\Xi}|G(\bm x, \bm z)|\lesssim\sqrt{\frac{n+d}{N}}.\label{jsheu}
\end{align}

For the general case where $M>2$, we claim that $\bm K_n$ can be written as $\bm K_n=\sum_{i\in [M-1]}\bm K_{n, i}$, where $\bm K_{n, i}$ satisfy the following properties. For $i\in [M-1]$, 
\begin{enumerate}
	\item [(i)] each row of $\bm K_{n, i}$ is either a row of $\K$ or a zero vector;
	\item [(ii)] the rows of $\bm K_{n, i}$ are independent;
	\item [(iii)] $\bm K_{n ,i}$ have mutually orthogonal column space across $i\in [M-1]$; 
	\item [(iv)] the number nonzero rows of $\bm K_{n,i}$ is at least $N/M$. 
\end{enumerate}
For instance, one way to obtain such a decomposition is to first initialize $\bm K_{n, i}$ as zero matrices and then copy each row of $\bm K$ to some $\bm K_{n, i}$ at a time, where $\bm{K}_{n,i}$  is randomly selected from $\{\bm K_{n, 1}, \ldots, \bm K_{n, M-1}\}$ that are independent of the copied row and have the most zero rows.

Based on the properties (i), (ii), and (iv) described above, one can apply a similar argument as the case $M=2$ together with a union bound to conclude that for all $i\in [M-1]$,  with probability at least $1-2(M-1)n^{-3}$, 
\begin{align*}
	\sup_{\substack{\bm x\in \range(\bm Q_n^\top), \bm y\in\range(\bm K_{n, i})\\ \|\bm x\|_2=\|\bm y\|_2 = 1}}\bm x^\top \bm y\lesssim\sqrt{\frac{M(n+d)}{N}}.
\end{align*}
Consequently, 
\begin{align*}
	\sup_{\substack{\bm x\in \range(\bm Q_n^\top), \bm y\in\range(\K)\\ \|\bm x\|_2=\|\bm y\|_2 = 1}}\bm x^\top \bm y &= \sup_{\substack{\bm x\in \range(\bm Q_n^\top), \z\in\R^d\\ \|\bm x\|_2=\|\K\z\|_2 = 1}}\bm x^\top \K\z\\
	&\leq \sum_{j\in [M-1]}\sup_{\substack{\bm x\in \range(\bm Q_n^\top), \z\in\R^d\\ \|\bm x\|_2=\|\K\z\|_2 = 1}}\bm x^\top \bm K_{n, j}\z\\
	&\leq \sum_{j\in [M-1]}\sup_{\substack{\bm x\in \range(\bm Q_n^\top), \z\in\R^d\\ \|\bm x\|_2=\|\bm K_{n, j}\z\|_2 = 1}}\bm x^\top \bm K_{n, j}\z\\
	& = \sum_{j\in [M-1]}\sup_{\substack{\bm x\in \range(\bm Q_n^\top), \bm y\in\range(\bm K_{n, j})\\ \|\bm x\|_2=\|\bm y\|_2 = 1}}\bm x^\top \bm y\\
	&\lesssim \sqrt{\frac{M^3(n+d)}{N}},
\end{align*}
where the second inequality uses the property (iii). 

\subsection{Proof of Proposition~\ref{prop:rg}}\label{p:prop8}

The proof is rather technical and is based on \cite[Lemma 5.1]{han2023unified}; we only outline the difference and explain how their argument can be applied in our setting.

In \cite[Lemma 5.1]{han2023unified}, the authors studied the diameter of admissible sequences, which consist of strictly increasing sequence of vertices $\{A_j\}_{j=1}^J$ with
\begin{align}
	&\frac{|\{e\in \partial A_{j}: e\subseteq A_{j+1}\}|}{|\partial A_j|}\geq\frac{1}{2}& 1\leq j <J. \label{tempad}
\end{align}
They showed that the diameter of all admissible sequences in $\HH_n$ is of order $(\xi_{n, +}\log n)/\xi_{n, -}$ and $\log n$ almost surely if $\HH_n$ is sampled from NURHM and HSBM, respectively. The parameter $1/2$ is unimportant and can be generalized to any positive constant $\lambda>0$. The key idea of their approach to establishing these results lies in obtaining explicit lower and upper bound for $|\partial A_j|$ and $|\{e\in \partial A_{j}: e\subseteq A_{j+1}\}|$ based on $\xi_{n, \pm}$ and $\zeta_{n,-}$ for any admissible sequence, and use \eqref{tempad} to deduce the desired growth rate of $|A_j|$. For the upper bound $|\{e\in \partial A_{j}: e\subseteq A_{j+1}\}|$, they used $|\{e\in \partial A_{j}: e\subseteq A_{j+1}\}|\leq |\partial (A_{j+1}\setminus A_j)|$ and estimate the latter using standard concentration inequalities. 

In our setting, we need to control the diameter of all weakly $\lambda$-admissible sequences, which are a superset of admissible sequences $\{A_j\}_{j=1}^J$ satisfying
\begin{align}
	&\frac{|\{e\in \partial A_{j}: e\cap (A_{j+1}\setminus A_j)\neq\emptyset\}|}{|\partial A_j|}\geq\frac{1}{2}& 1\leq j <J, \label{tempwad}
\end{align}
where we take $\lambda = 1/2$ for convenience. 
At first sight, this involves bounding the diameter of a set consisting of more elements than in the admissible setting.
If we attempt to follow a similar route as before, then we need to find an upper bound for $|\{e\in \partial A_{j}: e\cap (A_{j+1}\setminus A_j)\neq\emptyset\}|$. We may use the same bound $|\partial (A_{j+1}\setminus A_j)|$ thanks to the observation $\{e\in \partial A_{j}: e\cap (A_{j+1}\setminus A_j)\neq\emptyset\}\subseteq \partial (A_{j+1}\setminus A_j)$. Consequently, their results on the diameter of admissible sequences can also be used in our setting. On the other hand, they showed that $h_{\HH_n}$ is lower bounded by the order of $\xi_{n, -}$ and $\zeta_{n, -}$ and this result is independent of the weakly admissible sequence. Hence, the desired results follow.

\subsection{Proof of Theorem \ref{thm:randdesign}}\label{p:thm3}

According to Theorem~\ref{thm:main}, it suffices to show that Assumptions~\ref{ass:aep}, \ref{ass:cosine}, \ref{ass:minK}, and \ref{ass:topology} hold a.s. This is straightforward based on the results obtained so far. In particular, under the assumptions in Theorem~\ref{thm:randdesign}:
\begin{itemize}
	\item Assumption~\ref{ass:aep} holds a.s due to Proposition~\ref{prop:at_verify}, Proposition~\ref{prop:1+}, and Remark~\ref{rem:3}.
	\item Assumptions~\ref{ass:cosine}-\ref{ass:minK} hold a.s. due to Proposition~\ref{prop:1+}, Proposition~\ref{prop:dl}, and Remark~\ref{rem:3}.
	\item Assumption~\ref{ass:topology} holds a.s. due to Proposition~\ref{prop:rg}.
\end{itemize}

\section{Additional illustrations}\label{FFF}
\subsection{Illustration of identifiability}
In this section, we provide a toy example to  illustrate the concepts  related to model identifiability.  Consider a scenario with $N=4$ multiple comparisons among $n=4$ objects, as shown in Table~\ref{tab:ex1}. The corresponding graphs $\HH(\VV,\EE)$  and $\HH_\br(\VV,\EE_\br)$ are presented in Figure~\ref{fig:exa1_1}.

\begin{table}[htbp!]
	\centering
	\begin{spacing}{1}
		\begin{tabular}{c|c|c|c}
			Comparison & Object & Covariates & Rank  \\ \hline\hline
			Match 1 ($\T_1$)& 1& $X_{\T_1,1}=(-1,1)^\top$& 1\\
			Match 1 ($\T_1$)& 2& $X_{\T_1,2}=(3,-1)^\top$& 4\\
			Match 1 ($\T_1$)& 3& $X_{\T_1,3}=(-4,2)^\top$& 2\\
			Match 1 ($\T_1$)& 4& $X_{\T_1,4}=(0,-3)^\top$& 3\\
			Match 2 ($\T_2$)& 3& $X_{\T_2,3}=(-2,1)^\top$& 2\\
			Match 2 ($\T_2$)& 4& $X_{\T_2,4}=(-3,4)^\top$& 1\\
			Match 3 ($\T_3$)& 2& $X_{\T_3,2}=(0,-1)^\top$& 1\\
			Match 3 ($\T_3$)& 4& $X_{\T_3,4}=(-3,4)^\top$& 2
		\end{tabular}
		\caption{A toy example of multiple comparisons.}\label{table:ex1}
		\label{tab:ex1}
	\end{spacing}
\end{table}

\begin{figure}[htbp!]
	\centering
	\includegraphics[width=0.9\linewidth]{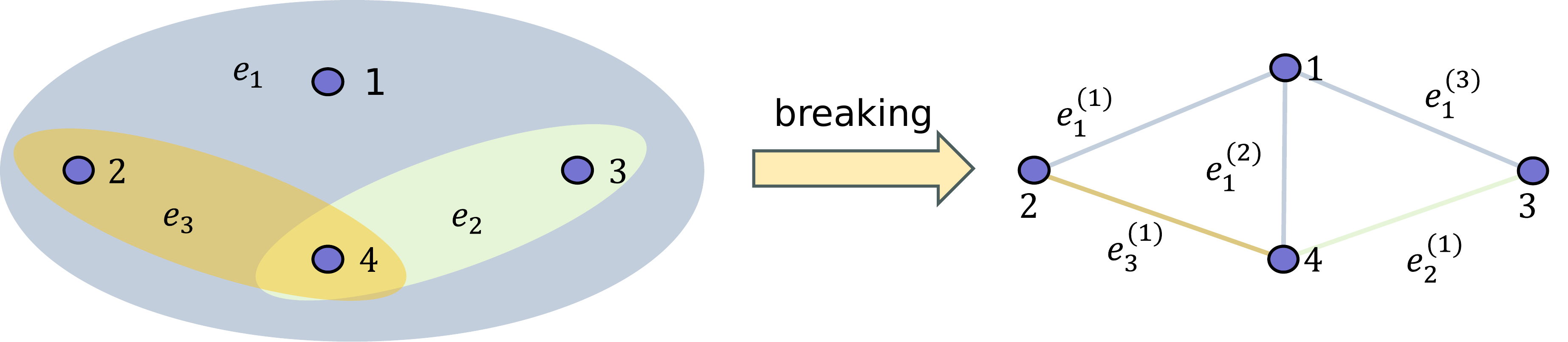}
	\caption{The comparison graph is derived from Table~\ref{table:ex1}. In this example, each hyperedge $e_i$ on the left, with $|e_i|=m_i$, is decomposed into $m_i-1$ pairwise edges $e_i^{(j)}$ on the right.}
	\label{fig:exa1_1}
\end{figure}

According to Figure~\ref{fig:exa1_1} and \eqref{DeltaX}, we have
\begin{align*}
	\Delta X_1 = \begin{bmatrix}
		4  & -2 \\
		-3 &  1 \\
		1  & -4 \\
	\end{bmatrix},
	\quad
	\Delta X_2 = \begin{bmatrix}
		-1 & 3 \\
	\end{bmatrix},
	\quad
	\Delta X_3 = \begin{bmatrix}
		-3  & 5 \\
	\end{bmatrix}.
\end{align*}
The corresponding $\Q$ and $\K$ can be computed as 
\begin{align*}
	\Q = 
	\begin{bmatrix}
		-1 & -1 & -1  & 0  &  0 \\
		1  & 0  &  0  & 0  & -1 \\
		0  & 1  &  0  & -1 &  0 \\
		0  & 0  &  1  & 1  &  1 \\
	\end{bmatrix},
	\quad
	\K=\begin{bmatrix}
		4  & -2 \\
		-3 &  1 \\
		1  & -4 \\
		-1 & 3 \\
		-3  & 5 \\
	\end{bmatrix}.
\end{align*}

We define $\triangle_1= (1,2,4)$ and $\triangle_2=(1,3,4)$ as two triangles in the graph, and let $f_1, f_2$ be the first and second columns of $\K$, respectively. We calculate the curl of $f_1, f_2$ over the triangles $\triangle_1,\triangle_2$: 
\begin{align*}
	\big[\curl(f_1)(\triangle_1),\curl(f_2)(\triangle_1)\big] &= \Delta X_1[1,:] + \Delta X_3[1,:] - \Delta X_1[2,:] \\
	&= [4,-2]+[-3,5] - [-3,1] = [4,2],\\
	\big[\curl(f_1)(\triangle_2),\curl(f_2)(\triangle_2)\big] &= \Delta X_1[3,:] + \Delta X_2[1,:] - \Delta X_1[2,:]  \\
	&=  [1,-4]+[-1,3] - [-3,1] = [3,-2].
\end{align*}
In this case, the PlusDC is identifiable since there exists  $\mathcal T_{\triangle}=\{\triangle_1,\triangle_2\}$ such that 
\begin{equation*}
	{\rm det}(\bm T_{\triangle})
	={\rm det}\left(
	\begin{bmatrix}
		\curl(f_1)(\triangle_1),\curl(f_2)(\triangle_1)\\
		\curl(f_1)(\triangle_2),\curl(f_2)(\triangle_2)\\
	\end{bmatrix}\right)
	={\rm det}\left(
	\begin{bmatrix}
		4  & 2 \\
		3  & -2\\
	\end{bmatrix}
	\right)=-14\neq0.
\end{equation*}

\subsection{Illustration of Assumption~\ref{ass:topology}}
\subsubsection*{\underline{Illustration of $h_\HH$}:}
The modified Cheeger constant $h_\HH$ generalizes the  Cheeger constant in spectral graph theory \cite[Section 2.2]{MR1421568}  to hypergraphs. Fixing a nonempty set $U\subset\V =[n]$, 
\begin{align*}
	&h_\HH(U) = \frac{|\partial U|}{\min\{|U|, |U^\complement|\}}
\end{align*}
reflects the connectivity between $U$ and $U^\complement$. The modified Cheeger constant $h_\HH = \min_{U\subset[n]}h_\HH(U)$ represents the minimum connectivity across any partition of $\V$. Thus, $h_\HH$ is upper bounded by the minimum vertex degree in the hypergraph, which measures the minimum effective sample size for an arbitrary vertex. We provide two examples on how to compute \( h_\HH \) in Figure~\ref{ima:cc}.

\begin{figure}[htbp]
	\begin{minipage}{0.4\linewidth}
		\centering		
		\includegraphics[width= 0.9\linewidth,   clip]{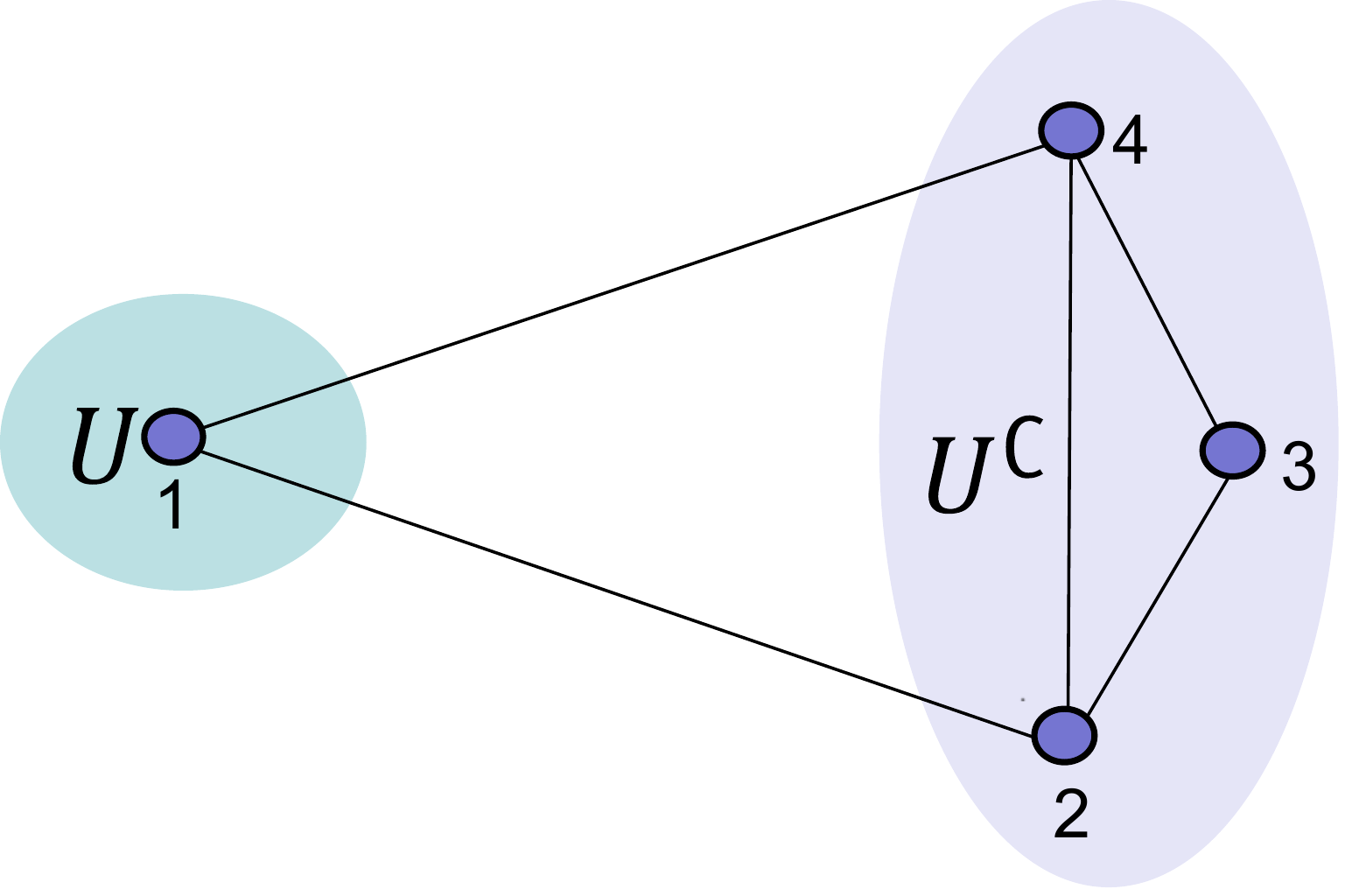}
		\caption*{ $h_\HH(U)=2$}
	\end{minipage}
	\hfill
	\begin{minipage}{0.42\linewidth}
		\centering		
		\includegraphics[width= 0.9\linewidth,   clip]{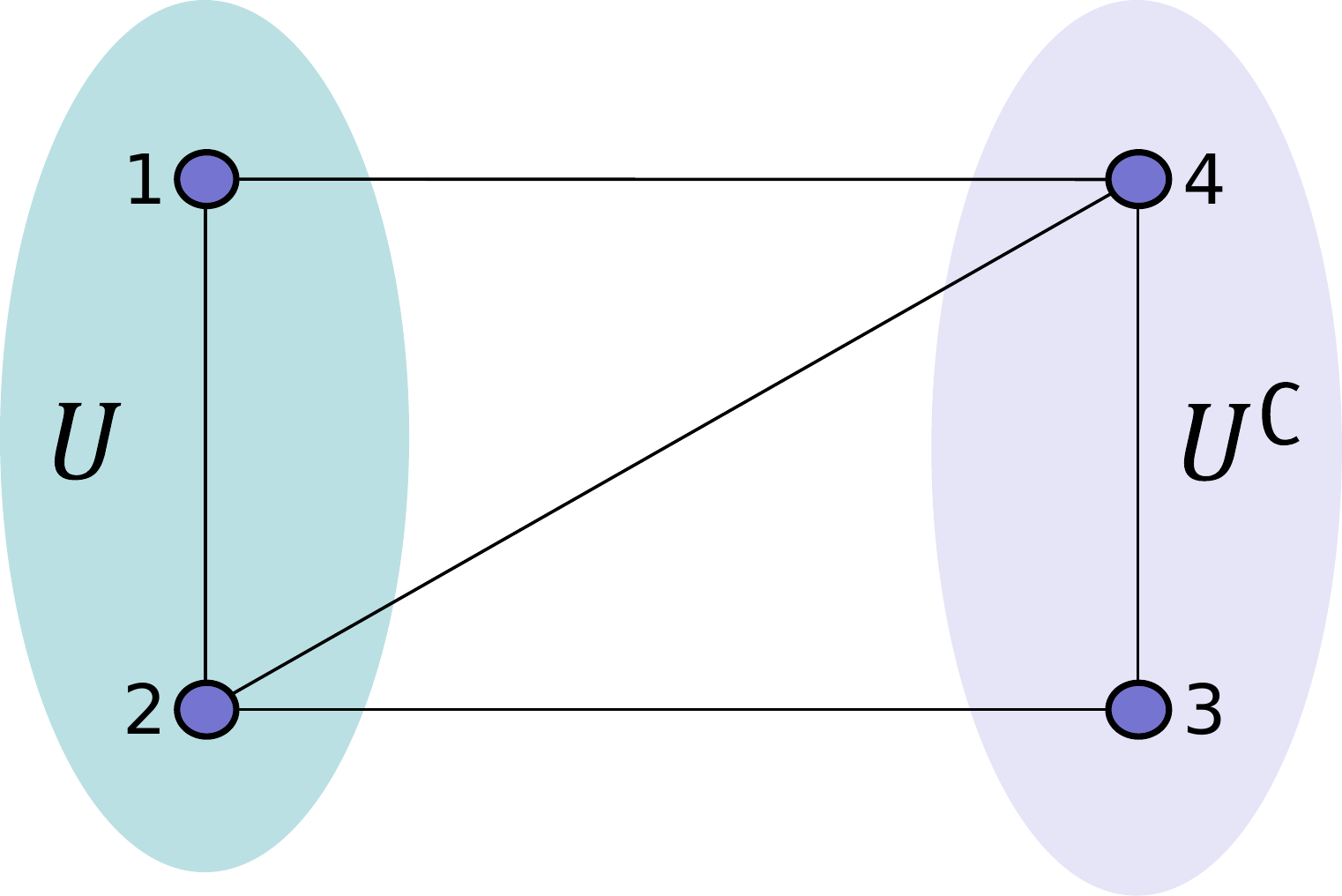}
		\caption*{$h_\HH(U)=1.5$}
	\end{minipage}
	\caption{In these examples, we consider a pairwise graph. In the left panel, the set $U$ is defined as $U= \{1\} $, resulting $|\partial U| = 2$. In the right panel, the set $U$ is defined as $U= \{1, 2\} $, yielding $|\partial U| = 3$.}\label{ima:cc}
\end{figure}
\subsubsection*{\underline{Illustration of $\diam(\A_\HH(\lambda))$}:}
The quantity $\diam(\A_\HH(\lambda))$ is the maximum cardinality of a $\lambda$-weakly admissible sequence $\{A_j\}_{j\in [J]}\in\A_\HH(\lambda)$, that is, 
\begin{align*}
	&\frac{|\{e\in \partial A_{j}: e\cap (A_{j+1}\setminus A_j)\neq\emptyset\}|}{|\partial A_j|}\geq\lambda& 1\leq j <J.
\end{align*}
It is easy to check from the definition that $\diam(\A_\HH(\lambda))$ is a nonincreasing function in $\lambda$. 
We provide one examples on how to compute $\diam(\A_\HH(\lambda))$ of the graph presented in Figure~\ref{ima:was}. 
\begin{figure}[htbp]
	\centering		
	\includegraphics[width= 0.46\linewidth,   clip]{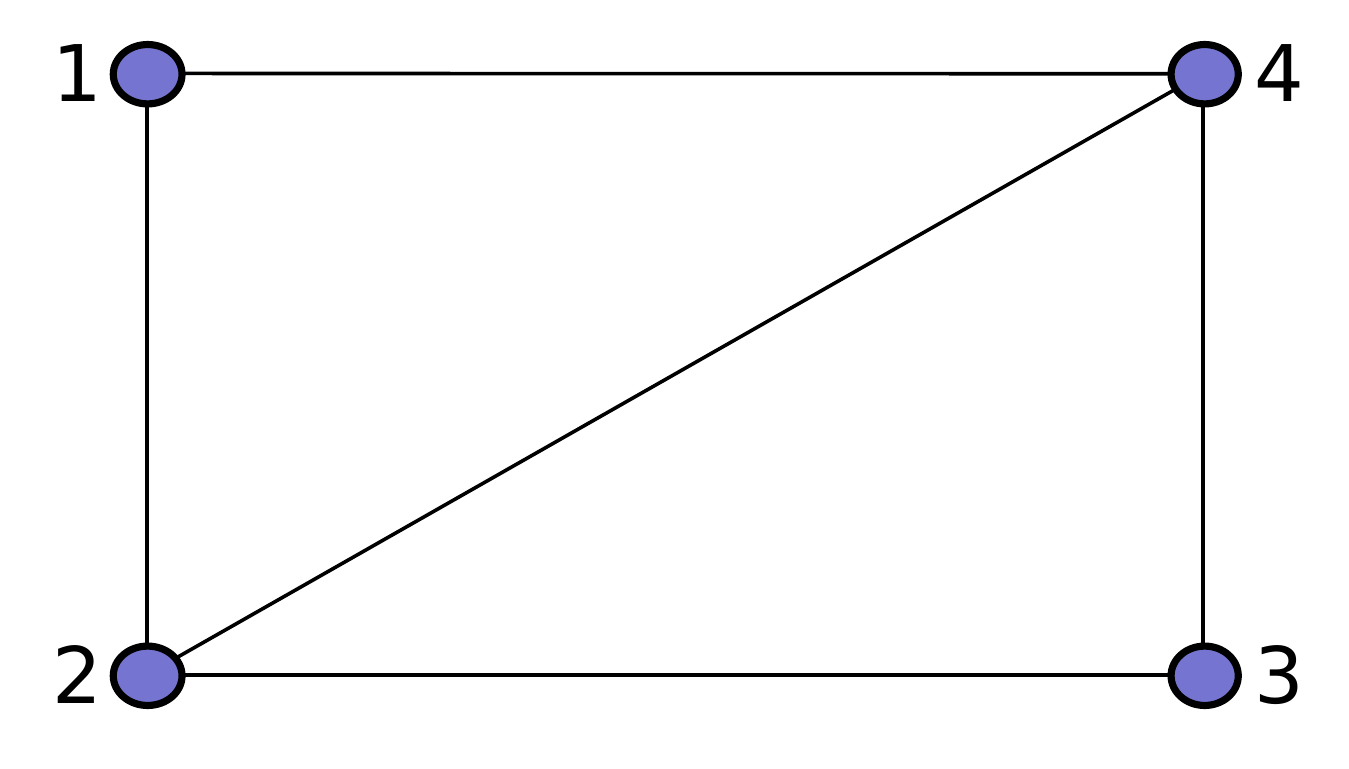}
	\caption{An example to demonstrate the $\lambda$-weakly admissible sequences.}
\label{ima:was}
\end{figure}

In this case, all the  $\lambda$-weakly admissible sequences (with $\lambda=0.6$) that contain the vertex set $\{1,2,3,4\}$ are listed as follows:
\begin{itemize}
	\item $A_1 = \{1\}$, $A_2 =  \{1,2,3,4\}$
	\item $A_1 = \{1\}$, $A_2 = \{1, 2, 4\}$, $A_3 =\{1, 2, 3, 4\}$
	\item $A_1 = \{2\}$, $A_2 = \{1,2,3,4\}$
	\item $A_1 = \{2\}$, $A_2= \{1, 2, 3\}$, $A_3 = \{1, 2, 3, 4\}$
	\item $A_1 = \{2\}$, $A_2= \{1, 2, 4\}$, $A_3 = \{1, 2, 3, 4\}$
	\item $A_1 = \{2\}$, $A_2= \{2, 3, 4\}$, $A_3 = \{1, 2, 3, 4\}$
	\item $A_1 = \{3\}$, $A_2 = \{1, 2, 3, 4\}$
	\item $A_1 = \{3\}$, $A_2 = \{2, 3, 4\}$, $A_3 = \{1, 2, 3, 4\}$
	\item $A_1 = \{4\}$, $A_2 = \{1, 2, 3, 4\}$
	\item $A_1 = \{4\}$, $A_2 = \{1, 2, 4\}$, $A_3 = \{1, 2, 3, 4\}$
	\item $A_1 = \{4\}$, $A_2 = \{1, 3, 4\}$, $A_3 = \{1, 2, 3, 4\}$
	\item $A_1 = \{4\}$, $A_2 = \{2, 3, 4\}$, $A_3 = \{1, 2, 3, 4\}$
	\item $A_1 = \{1, 2\}$, $A_2 = \{1, 2, 3, 4\}$
	\item $A_1 = \{1, 2\}$, $A_2=\{1, 2, 4\}$, $A_3 = \{1, 2, 3, 4\}$
	\item $A_1 = \{1, 3\}$, $A_2 = \{1, 2, 3, 4\}$
	\item $A_1 = \{1, 4\}$, $A_2 = \{1, 2, 3, 4\}$
	\item $A_1 = \{1, 4\}$, $A_2=\{1, 2, 4\}$, $A_3 = \{1, 2, 3, 4\}$
	\item $A_1 = \{2, 3\}$, $A_2 = \{1, 2, 3, 4\}$
	\item $A_1 = \{2, 3\}$, $A_2=\{2, 3, 4\}$, $A_3 = \{1, 2, 3, 4\}$
	\item $A_1 = \{2, 4\}$, $A_2 = \{1, 2, 3, 4\}$
	\item $A_1 = \{3, 4\}$, $A_2 = \{1, 2, 3, 4\}$
	\item $A_1 = \{3, 4\}$, $A_2 = \{2, 3, 4\}$, $A_3 = \{1, 2, 3, 4\}$
	\item $A_1=\{1, 2, 3\}$, $A_2=\{1, 2, 3, 4\}$
	\item $A_1=\{1, 2, 4\}$, $A_2=\{1, 2, 3, 4\}$
	\item $A_1=\{1, 3, 4\}$, $A_2=\{1, 2, 3, 4\}$
	\item $A_1=\{2, 3, 4\}$, $A_2=\{1, 2, 3, 4\}$
\end{itemize}

As a result, $\diam(\A_\HH(0.6)) = 3$ in Figure~\ref{ima:was}. More generally, one can check that 
\begin{align*}
\diam(\A_\HH(\lambda)) = \begin{cases}
	3 & 0.5 < \lambda\leq 1,\\
	4 & 0<\lambda\leq 0.5.  
\end{cases} 
\end{align*}
\color{black}
\end{appendix}

\bibliographystyle{apalike}
\bibliography{ref}

\begin{thebibliography}{}

\bibitem[Agarwal et~al., 2018]{agarwal2018accelerated}
Agarwal, A., Patil, P., and Agarwal, S. (2018).
\newblock Accelerated spectral ranking.
\newblock In {\em ICML}, pages 70--79.

\bibitem[Agresti, 2019]{MR3887567}
Agresti, A. (2019).
\newblock {\em An introduction to categorical data analysis}.
\newblock Wiley Ser. Probab. Statist. Probab. Statist. John Wiley \& Sons,
  Inc., Hoboken, NJ.
\newblock Third edition.

\bibitem[Beck, 2015]{beck2015convergence}
Beck, A. (2015).
\newblock On the convergence of alternating minimization for convex programming
  with applications to iteratively reweighted least squares and decomposition
  schemes.
\newblock {\em SIAM J. Optim.}, 25(1):185--209.

\bibitem[Bradley and Terry, 1952]{MR0070925}
Bradley, R.~A. and Terry, M.~E. (1952).
\newblock Rank analysis of incomplete block designs. {I}. {T}he method of
  paired comparisons.
\newblock {\em Biometrika}, 39:324--345.

\bibitem[Bubeck, 2015]{bubeck2015convex}
Bubeck, S. (2015).
\newblock Convex optimization: Algorithms and complexity.
\newblock {\em Found. Trends Mach. Learn}, 8(3-4):231--357.

\bibitem[Cattelan, 2012]{MR3012434}
Cattelan, M. (2012).
\newblock Models for paired comparison data: a review with emphasis on
  dependent data.
\newblock {\em Statist. Sci.}, 27(3):412--433.

\bibitem[Chen et~al., 2022a]{chen2022optimal}
Chen, P., Gao, C., and Zhang, A.~Y. (2022a).
\newblock Optimal full ranking from pairwise comparisons.
\newblock {\em Ann. Statist.}, 50(3):1775--1805.

\bibitem[Chen et~al., 2022b]{chen2022partial}
Chen, P., Gao, C., and Zhang, A.~Y. (2022b).
\newblock Partial recovery for top-k ranking: optimality of mle and
  suboptimality of the spectral method.
\newblock {\em Ann. Statist.}, 50(3):1618--1652.

\bibitem[Chen et~al., 2021]{chen2021spectral}
Chen, Y., Chi, Y., Fan, J., and Ma, C. (2021).
\newblock Spectral methods for data science: A statistical perspective.
\newblock {\em Found. Trends Mach. Learn}, 14(5):566--806.

\bibitem[Chen et~al., 2019]{MR3953449}
Chen, Y., Fan, J., Ma, C., and Wang, K. (2019).
\newblock Spectral method and regularized {MLE} are both optimal for top-{$K$}
  ranking.
\newblock {\em Ann. Statist.}, 47(4):2204--2235.

\bibitem[Chen and Suh, 2015]{chen2015spectral}
Chen, Y. and Suh, C. (2015).
\newblock Spectral {MLE}: Top-k rank aggregation from pairwise comparisons.
\newblock In {\em ICML}, pages 371--380.

\bibitem[Chung, 1997]{MR1421568}
Chung, F. R.~K. (1997).
\newblock {\em Spectral graph theory}, volume~92 of {\em Reg. Conf. Ser. Math.}

\bibitem[Erd\H{o}s and R\'enyi, 1960]{MR0125031}
Erd\H{o}s, P. and R\'enyi, A. (1960).
\newblock On the evolution of random graphs.
\newblock {\em Magyar Tud. Akad. Mat. Kutat\'{o} Int. K\"{o}zl.}, 5:17--61.

\bibitem[Fan et~al., 2024a]{fan2024uncertainty}
Fan, J., Hou, J., and Yu, M. (2024a).
\newblock Covariate assisted entity ranking with sparse intrinsic score.
\newblock {\em arXiv preprint arXiv:2407.08814}.

\bibitem[Fan et~al., 2024b]{fan2022uncertainty}
Fan, J., Hou, J., and Yu, M. (2024b).
\newblock Uncertainty quantification of mle for entity ranking with covariates.
\newblock {\em J. Mach. Learn. Res.}, 25(358):1--83.

\bibitem[Fan et~al., 2023]{fan2023spectral}
Fan, J., Lou, Z., Wang, W., and Yu, M. (2023).
\newblock Spectral ranking inferences based on general multiway comparisons.
\newblock {\em arXiv preprint arXiv:2308.02918}.

\bibitem[Fan et~al., 2024c]{fan2022ranking}
Fan, J., Lou, Z., Wang, W., and Yu, M. (2024c).
\newblock Ranking inferences based on the top choice of multiway comparisons.
\newblock {\em J. Amer. Statist. Assoc.}, (just-accepted):1--28.

\bibitem[Feng et~al., 1996]{feng1996spectra}
Feng, K. et~al. (1996).
\newblock Spectra of hypergraphs and applications.
\newblock {\em J. Number Theory}, 60(1):1--22.

\bibitem[Ford, 1957]{MR0097876}
Ford, Jr., L.~R. (1957).
\newblock Solution of a ranking problem from binary comparisons.
\newblock {\em Amer. Math. Monthly}, 64(8):28--33.

\bibitem[Friedman and Wigderson, 1995]{friedman1995second}
Friedman, J. and Wigderson, A. (1995).
\newblock On the second eigenvalue of hypergraphs.
\newblock {\em Combinatorica}, 15(1):43--65.

\bibitem[Gao et~al., 2023]{gao2023uncertainty}
Gao, C., Shen, Y., and Zhang, A.~Y. (2023).
\newblock Uncertainty quantification in the {B}radley-{T}erry-{L}uce model.
\newblock {\em Inf. Inference}, 12(2):1073--1140.

\bibitem[Glickman and Jones, 1999]{glickman1999rating}
Glickman, M.~E. and Jones, A.~C. (1999).
\newblock Rating the chess rating system.
\newblock {\em Chance}, 12:21--28.

\bibitem[Graham, 2017]{graham2017econometric}
Graham, B.~S. (2017).
\newblock An econometric model of network formation with degree heterogeneity.
\newblock {\em Econometrica}, 85(4):1033--1063.

\bibitem[Guo et~al., 2018]{guo2018experimental}
Guo, Y., Tian, P., Kalpathy-Cramer, J., Ostmo, S., Campbell, J.~P., Chiang,
  M.~F., Erdogmus, D., Dy, J.~G., and Ioannidis, S. (2018).
\newblock Experimental design under the {B}radley-{T}erry model.
\newblock In {\em IJCAI}, pages 2198--2204.

\bibitem[Han et~al., 2024]{han2024statistical}
Han, R., Tang, W., and Xu, Y. (2024).
\newblock Statistical inference for pairwise comparison models.
\newblock {\em arXiv preprint arXiv:2401.08463}.

\bibitem[Han and Xu, 2023]{han2023unified}
Han, R. and Xu, Y. (2023).
\newblock A unified analysis of likelihood-based estimators in the
  {P}lackett--{L}uce model.
\newblock {\em arXiv preprint arXiv:2306.02821}.

\bibitem[Han et~al., 2022]{han2022general}
Han, R., Xu, Y., and Chen, K. (2022).
\newblock A general pairwise comparison model for extremely sparse networks.
\newblock {\em J. Amer. Statist. Assoc.}, 118(544):2422--2432.

\bibitem[Han et~al., 2020]{han2020asymptotic}
Han, R., Ye, R., Tan, C., and Chen, K. (2020).
\newblock Asymptotic theory of sparse {B}radley-{T}erry model.
\newblock {\em Ann. Appl. Probab.}, 30(5):2491--2515.

\bibitem[Hunter, 2004]{MR2051012}
Hunter, D.~R. (2004).
\newblock M{M} algorithms for generalized {B}radley-{T}erry models.
\newblock {\em Ann. Statist.}, 32(1):384--406.

\bibitem[Jang et~al., 2018]{jang2018top}
Jang, M., Kim, S., and Suh, C. (2018).
\newblock Top-$ k $ rank aggregation from $ m $-wise comparisons.
\newblock {\em IEEE J. Sel. Top. Signal Process.}, 12(5):989--1004.

\bibitem[Jiang et~al., 2011]{jiang2011statistical}
Jiang, X., Lim, L.-H., Yao, Y., and Ye, Y. (2011).
\newblock Statistical ranking and combinatorial hodge theory.
\newblock {\em Math. Program.}, 127(1):203--244.

\bibitem[Li et~al., 2022]{li2022bayesian}
Li, X., Yi, D., and Liu, J.~S. (2022).
\newblock Bayesian analysis of rank data with covariates and heterogeneous
  rankers.
\newblock {\em Statist. Sci.}, 37(1):1--23.

\bibitem[Lim, 2020]{lim2020hodge}
Lim, L.-H. (2020).
\newblock Hodge laplacians on graphs.
\newblock {\em SIAM Review}, 62(3):685--715.

\bibitem[Lu and Peng, 2013]{lu2013high}
Lu, L. and Peng, X. (2013).
\newblock High-order random walks and generalized laplacians on hypergraphs.
\newblock {\em Internet Math.}, 9(1):3--32.

\bibitem[Luce, 1959]{MR0108411}
Luce, R.~D. (1959).
\newblock {\em Individual choice behavior: {A} theoretical analysis}.
\newblock John Wiley \& Sons, Inc., New York; Chapman \& Hall Ltd. London.

\bibitem[Luo and Tseng, 1993]{luo1993error}
Luo, Z.-Q. and Tseng, P. (1993).
\newblock Error bounds and convergence analysis of feasible descent methods: a
  general approach.
\newblock {\em Ann. Oper. Res.}, 46(1):157--178.

\bibitem[Massey, 1997]{massey1997statistical}
Massey, K. (1997).
\newblock Statistical models applied to the rating of sports teams.
\newblock {\em Bluefield College}, 1077.

\bibitem[Maystre and Grossglauser, 2015]{maystre2015fast}
Maystre, L. and Grossglauser, M. (2015).
\newblock Fast and accurate inference of plackett--luce models.
\newblock In {\em NeurIPS}, volume~28, pages 1--9.

\bibitem[Mosteller, 1951]{mosteller2006remarks}
Mosteller, F. (1951).
\newblock Remarks on the method of paired comparisons.
\newblock {\em Psychometrika}, 16(2):207--218.

\bibitem[Negahban et~al., 2017]{MR3613103}
Negahban, S., Oh, S., and Shah, D. (2017).
\newblock Rank centrality: ranking from pairwise comparisons.
\newblock {\em Oper. Res.}, 65(1):266--287.

\bibitem[Nesterov and Nemirovskii, 1994]{nesterov1994interior}
Nesterov, Y. and Nemirovskii, A. (1994).
\newblock {\em Interior-Point Polynomial Algorithms in Convex Programming}.
\newblock SIAM.

\bibitem[Newman, 2023]{newman2023efficient}
Newman, M. (2023).
\newblock Efficient computation of rankings from pairwise comparisons.
\newblock {\em J. Mach. Learn. Res.}, 24(238):1--25.

\bibitem[Nocedal and Wright, 1999]{nocedal1999numerical}
Nocedal, J. and Wright, S.~J. (1999).
\newblock {\em Numerical optimization}.
\newblock Springer New York, NY.

\bibitem[Ouyang et~al., 2022]{ouyang2022training}
Ouyang, L., Wu, J., Jiang, X., Almeida, D., Wainwright, C., Mishkin, P., Zhang,
  C., Agarwal, S., Slama, K., Ray, A., et~al. (2022).
\newblock Training language models to follow instructions with human feedback.
\newblock In {\em NeurIPS}, volume~35, pages 27730--27744.

\bibitem[Plackett, 1975]{plackett1975analysis}
Plackett, R.~L. (1975).
\newblock The analysis of permutations.
\newblock {\em J. Roy. Statist. Soc. Ser. C}, 24(2):193--202.

\bibitem[Qu et~al., 2023]{qu2023sinkhorn}
Qu, Z., Galichon, A., and Ugander, J. (2023).
\newblock On sinkhorn's algorithm and choice modeling.
\newblock {\em arXiv preprint arXiv:2310.00260}.

\bibitem[Rafailov et~al., 2023]{rafailov2024direct}
Rafailov, R., Sharma, A., Mitchell, E., Manning, C.~D., Ermon, S., and Finn, C.
  (2023).
\newblock Direct preference optimization: Your language model is secretly a
  reward model.
\newblock In {\em NeurIPS}, volume~36, pages 1--14.

\bibitem[Sch{\"a}fer and H{\"u}llermeier, 2018]{schafer2018dyad}
Sch{\"a}fer, D. and H{\"u}llermeier, E. (2018).
\newblock Dyad ranking using {P}lackett--{L}uce models based on joint feature
  representations.
\newblock {\em Mach. Learn.}, 107:903--941.

\bibitem[Shah et~al., 2016]{MR3504618}
Shah, N.~B., Balakrishnan, S., Bradley, J., Parekh, A., Ramchandran, K., and
  Wainwright, M.~J. (2016).
\newblock Estimation from pairwise comparisons: sharp minimax bounds with
  topology dependence.
\newblock {\em J. Mach. Learn. Res.}, 17:1--47.

\bibitem[Simons and Yao, 1999]{MR1724040}
Simons, G. and Yao, Y.-C. (1999).
\newblock Asymptotics when the number of parameters tends to infinity in the
  {B}radley-{T}erry model for paired comparisons.
\newblock {\em Ann. Statist.}, 27(3):1041--1060.

\bibitem[Singh and Davidov, 2024]{singh2024analysis}
Singh, R. and Davidov, O. (2024).
\newblock The analysis of paired comparison data in the presence of cyclicality
  and intransitivity.
\newblock {\em arXiv preprint arXiv:2406.11584}.

\bibitem[Singh et~al., 2024]{singh2024graphical}
Singh, R., Iliopoulos, G., and Davidov, O. (2024).
\newblock Graphical models for cardinal paired comparisons data.
\newblock {\em arXiv preprint arXiv:2401.07018}.

\bibitem[Thurstone, 1927]{thurstone1927method}
Thurstone, L.~L. (1927).
\newblock The method of paired comparisons for social values.
\newblock {\em J. Abnorm. Psychol.}, 21(4):384.

\bibitem[Tropp, 2012]{tropp2012user}
Tropp, J.~A. (2012).
\newblock User-friendly tail bounds for sums of random matrices.
\newblock {\em Math. Found. Comput.}, 12:389--434.

\bibitem[Vershynin, 2018]{vershynin2018high}
Vershynin, R. (2018).
\newblock {\em High-dimensional probability: An introduction with applications
  in data science}, volume~47.
\newblock Cambridge University Press.

\bibitem[Yan et~al., 2018]{yan2018statistical}
Yan, T., Jiang, B., Fienberg, S.~E., and Leng, C. (2018).
\newblock Statistical inference in a directed network model with covariates.
\newblock {\em J. Amer. Statist. Assoc.}, 114(526):857--868.

\bibitem[Zermelo, 1929]{MR1545015}
Zermelo, E. (1929).
\newblock Die {B}erechnung der {T}urnier-{E}rgebnisse als ein {M}aximumproblem
  der {W}ahrscheinlichkeitsrechnung.
\newblock {\em Math. Z.}, 29(1):436--460.

\bibitem[Zhao et~al., 2022]{zhao2020learning}
Zhao, Z., Liu, A., and Xia, L. (2022).
\newblock Learning mixtures of random utility models with features from
  incomplete preferences.
\newblock In {\em IJCAI}, volume~31, pages 3780--3786.

\end{thebibliography}

\end{document}